\theoremstyle{plain}
\newtheorem{theorem}{Theorem}[section]
\newtheorem{lemma}[theorem]{Lemma}
\newtheorem{claim}[theorem]{Claim}
\newtheorem{assumption}[theorem]{Assumption}
\theoremstyle{definition} 
\newtheorem{remark}[theorem]{Remark}
\newtheorem{definition}[theorem]{Definition}
\newtheorem*{notation}{Notation}
\def\@gifnextchar#1#2#3{\let\@tempe#1\def\@tempa{#2}\def\@tempb{#3}%
  \futurelet\@tempc\@gifnch}
\def\@gifnch{\ifx\@tempc\@sptoken\let\@tempd\@tempb%
  \else\ifx\@tempc\@tempe\let\@tempd\@tempa\else\let\@tempd\@tempb\fi\fi\@tempd}
\def\SK@set#1{\left\{#1\right\}}
\def\SK@@set#1#2{\{#1\,:\,
    \begin{array}{@{}l@{}}#2\end{array}
\}}
\def\SK@mset#1{\left\{\!\!\left\{#1\right\}\!\!\right\}}
\def\SK@@mset#1#2{\{\!\!\{#1\,:\,
    \begin{array}{@{}l@{}}#2\end{array}
\}\!\!\}}
\def\BIG@set#1{\Big\{#1\Big\}}
\def\BIG@@set#1#2{\Big\{#1\:\Big|\:
    \begin{array}{@{}l@{}}#2\end{array}
\Big\}}
\newcommand{\Set}[1]{\@gifnextchar\bgroup{\SK@@set{#1}}{\SK@set{#1}}}
\newcommand{\Mset}[1]{\@gifnextchar\bgroup{\SK@@mset{#1}}{\SK@mset{#1}}}
\newcommand{\Bigset}[1]{\@gifnextchar\bgroup{\BIG@@set{#1}}{\BIG@set{#1}}}
\newcommand{\refeq}[1]{(\ref{eq:#1})}
\newcommand{\of}[1]{\left( #1 \right)}
\newcommand{\ofsq}[1]{\left[ #1 \right]}
\newcommand{\bG}{\mathbf{G}}
\newcommand{\rg}{\mathbf{G}_n}
\newcommand{\prob}[1]{\mathsf{P}[ #1 ]}
\newcommand{\probb}[1]{\mathsf{P}\left[ #1 \right]}
\newcommand{\cprob}[2]{\mathsf{P}[ #1 \mid #2 ]}
\newcommand{\cprobb}[2]{\mathsf{P}\left[ #1 \;\middle\vert\; #2 \right]}
\newcommand{\Mod}[1]{\ (\mathrm{mod}\ #1)}
\newcommand{\uu}[1]{\mathbf{u}_{#1}}
\newcommand{\cQ}{\mathcal{Q}}
\newcommand{\cG}{\mathcal{G}}
\newcommand{\cA}{\mathcal{A}}
\newcommand{\seq}[1]{\mathsf{#1}}
\newcommand{\lds}{\mathit{lds}}
\newcommand{\bv}{\mathrm{bv}}
\title{On anti-stochastic properties of unlabeled graphs}
\author{
Sergei Kiselev\thanks{kiselev.sg@gmail.com}
\and
Andrey Kupavskii\thanks{CNRS, France; kupavskii@ya.ru}
\and
Oleg Verbitsky\thanks{Institut f\"ur Informatik,
Humboldt-Universit\"at zu Berlin, Unter den Linden 6, D-10099 Berlin; verbitsk@informatik.hu-berlin.de
}
\and
Maksim Zhukovskii \thanks{The University of Sheffield, UK; m.zhukovskii@sheffield.ac.uk}
}
\date{}
\begin{document}

\maketitle

\begin{abstract}
We study vulnerability of a uniformly distributed random graph to an
attack by an adversary who aims for a global change of the distribution while
being able to make only a local change in the graph.
We call a graph property~$A$ \emph{anti-stochastic} if
the probability that a random graph $G$ satisfies $A$ is small but, with high probability, there is a small
perturbation transforming $G$ into a graph satisfying $A$.
While for labeled graphs such properties are easy to obtain from binary covering codes,
the existence of anti-stochastic properties for unlabeled graphs is not so evident.
If an admissible perturbation is either the addition or the deletion of one edge,
we exhibit an anti-stochastic property that is satisfied by a random unlabeled graph of order $n$
with probability $(2+o(1))/n^2$, which is as small as possible.
We also express another anti-stochastic property in terms of the degree sequence of a graph.
This property has probability $(2+o(1))/(n\ln n)$, which is optimal up to factor of~2.
\end{abstract}

\section{Introduction}

The asymptotic properties of a random graph are the subject of a rich and comprehensive theory \cite{Bollobas-b,Janson}.
Specifically, let $\rg$ be a graph chosen equiprobably from among all graphs on the vertex set $\{1,\ldots,n\}$.
Identifying a graph property with the set of all graphs possessing this property,
we say that $\rg$ has a property $P$ \emph{with high probability (whp)} or \emph{asymptotically almost surely} if
$\prob{\rg\in P}=1-o(1)$ as $n$ increases. 

If the ``error probability'' $o(1)$ is very small, then the
definition above is stable with respect to local perturbations of $\rg$. To be specific, 
here and below a perturbation means adding one edge to a graph or deleting one edge from it.
If $\prob{\rg\in P}=1-o(1/n^2)$, then the union bound implies that, whp, $P$ holds not only for $\rg$
but even for each of the ${n\choose2}$ perturbed versions of $\rg$. In other words, the property $P$
is robust with respect to the following adversarial attack. An adversary receives a random graph $\rg$
and is allowed to change the (non)adjacency of a single pair of vertices in $\rg$. Whatever he does,
the modified graph $\rg'$ still satisfies $P$ whp.

While this was a resilience example, our purpose in this paper is to study vulnerability issues
for this kind of attack. We address a scenario when an adversary is able to modify $\rg$
in a single vertex pair so
that, whp, the corrupted graph $\rg'$ has a property which is unlikely for a random graph. More precisely,
we call a property $A$ \emph{anti-stochastic} if the following two conditions are true:
\begin{itemize}
\item 
$\prob{\rg\in A}=o(1)$, and
\item 
there is an adversary such that $\prob{\rg'\in A}=1-o(1)$.
\end{itemize}
More formally, by an adversary we understand an arbitrary function $f$ which if applied
to a graph $G$, produces a graph $f(G)$ such that $G$ and $f(G)$ differ by at most a single edge.
Thus, $\rg'=f(\rg)$ is a random variable which we observe instead of a uniformly distributed~$\rg$.

Our interest in anti-stochastic properties is motivated by the fact that they yield a
conceptual formalization of the \emph{global} damage effect on a source of random graphs that can be caused
by a malicious adversary allowed to make only a \emph{local} perturbation in a graph he accesses.
Adversarial attacks on a random graph are studied in \cite{BollobasR03,BollobasR03a,FlaxmanFV07} 
focusing on the question on how many vertices can or must be deleted
in order to make a dynamically evolving random graph highly disconnected.
In general, resilience and vulnerability of graphs have been studied in network science
in many various contexts; we refer to the recent survey \cite{Schaeffer21} for overview of the
concepts and results in this large research area.

Let $A_n$ denote the set of $n$-vertex graphs with property $A$, and set $N={n\choose2}$.
Since a graph in $A_n$ can be a perturbed version of at most $N$ graphs,
the second condition in the above definition implies that 
$
|A_n|(N+1)\ge(1-o(1))2^N
$
for any anti-stochastic property.
It immediately follows that if $A$ is anti-stochastic, then 
\begin{equation}
  \label{eq:lower}
\prob{\rg\in A}\ge(2-o(1))/n^2.  
\end{equation}

This argument readily reveals a notable source of anti-stochastic properties.
A set $C_N\subset\{0,1\}^N$ is a \emph{covering code} \cite{CohenHLL-b}
if every string in $\{0,1\}^N$ is within Hamming distance 1 of some string in $C_N$.
For $N={n\choose2}$, we can identify the graphs on the vertex set $\{1,\ldots,n\}$
with the binary strings of length $N$. Thus,
if the density $|C_N|/2^N$ tends to 0 as $N$ increases, then the binary code
can be seen as an anti-stochastic property. The lower bound \refeq{lower}
turns out to be tight as there are covering codes of asymptotically optimal density $(1+o(1))/N$; 
see \cite{KabatyanskiiP88} or \cite[Ch.~12.4]{CohenHLL-b}.

Being a natural combinatorial concept in the realm of strings,
covering codes can hardly be considered natural graph properties.
As a minimum criterion for a graph property to be natural, we require
that it should be isomorphism invariant, that is, it should hold or not hold
for every two isomorphic graphs simultaneously. Our first result meets this expectation, 
implying that the damage caused by a combinatorially optimal adversary can be, in a way, conscious.

\begin{theorem}\label{thm:unlab}
There is an isomorphism invariant anti-stochastic property holding for a random graph of order $n$ with probability $(2+o(1))/n^2$.
\end{theorem}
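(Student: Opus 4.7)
I work at the level of isomorphism classes. Let $N = \binom{n}{2}$, and let $g_n := (1+o(1))\cdot 2^N/n!$ be the number of unlabeled $n$-vertex graphs. Let $\Gamma_n$ be the graph on these isomorphism classes in which $[G]$ and $[H]$ are adjacent iff some representatives differ in exactly one edge. By the classical estimate that all but $2^{N-\Omega(n)}$ labeled graphs have trivial automorphism group, all but a $2^{-\Omega(n)}$-fraction of classes $[G]$ satisfy $|[G]|=n!$ and $\deg_{\Gamma_n}([G])=N$, since the $N$ labeled one-edge perturbations of a generic representative of $[G]$ lie in $N$ pairwise distinct classes. The problem reduces to producing a dominating set $\mathcal A\subseteq V(\Gamma_n)$ with $|\mathcal A|\le(2+o(1))\, g_n/n^2$: then $A:=\bigcup_{[G]\in\mathcal A}[G]$ is isomorphism invariant with $|A|\le n!\,|\mathcal A|=(2+o(1))\, 2^N/n^2$, so $\prob{\rg\in A}=(2+o(1))/n^2$, and the adversary $f$ sending each $G$ (whose class is dominated by some $[H]\in\mathcal A$) to a one-edge perturbation isomorphic to $H$ satisfies $\prob{f(\rg)\in A}=1-o(1)$.

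To produce $\mathcal A$ of near-optimal size, view the dominating-set task as a cover problem in the hypergraph $\mathcal H_n$ on $V(\Gamma_n)$ whose hyperedges are the closed neighborhoods $\{N_{\Gamma_n}[v]\}_{v\in V(\Gamma_n)}$; it is $(N+1)$-uniform and almost $(N+1)$-regular. The decisive structural input is a codegree bound: any two classes $[G]\ne[H]$ of generic graphs have at most $O(1)$ common neighbors in $\Gamma_n$. Indeed, a common neighbor $[K]$ corresponds to a pair $\{e,e'\}$ of labeled edges with $G\triangle\{e,e'\}\cong H$, and for a generic $G$ --- whose automorphism group is trivial and which moreover satisfies $|\pi(G)\triangle G|\ge 5$ for every non-identity $\pi\in S_n$ --- the pair $\{e,e'\}$ is forced (via $\pi(G)\triangle G = \{f,f'\}\triangle\pi(\{e,e'\})$, a set of size at most $4$), yielding $O(1)$ choices for $[K]$. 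Both genericity properties hold whp, by a union bound over permutations stratified by their number of non-fixed points, using $\mathbb E|\pi(G)\triangle G|\asymp(n-s)\cdot n$ for $\pi$ with $s$ fixed points together with Chernoff concentration. With pairwise codegree $O(1)=o(N)$ in $\mathcal H_n$, a quantitative R\"odl-nibble-type covering theorem produces a near-perfect cover of size $(1+o(1))\, g_n/(N+1)=(2+o(1))\, g_n/n^2$; the residual $o(g_n/n^2)$ non-generic or uncovered classes are absorbed into $\mathcal A$ individually at negligible cost.

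\textbf{Main obstacle.} The chief technical difficulty is to push the R\"odl nibble to a polynomial-in-$n$ error rate --- uncovered fraction $o(1/n^2)$ --- so that the absorption step preserves the constant $(2+o(1))$ rather than inflating it. A naive matching-to-cover reduction gives only $o(g_n)$ uncovered vertices, which would swamp $g_n/n^2$; what saves us is that the codegree-to-degree ratio in $\mathcal H_n$ is as small as $O(1/n^2)$, leaving ample slack for a nibble procedure whose iterated rounds achieve polynomial-in-$n$ decay of the uncovered fraction. Provided this quantitative version of the nibble is in hand, the rest of the argument --- the rigidity statement for generic graphs and the absorption of the residual classes --- is comparatively routine and matches the counting lower bound \refeq{lower} to leading order.
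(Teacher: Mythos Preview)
Your reduction to finding a small almost-dominating set in $\Gamma_n$ is correct, and the codegree computation is clean: generic classes indeed have $O(1)$ common neighbours. There is also an unnecessary complication in your outline: since the anti-stochastic property only requires the adversary to succeed with probability $1-o(1)$, you do not need to absorb the uncovered classes at all. A set $\mathcal A$ of size $(1+o(1))g_n/(N+1)$ dominating a $(1-o(1))$-fraction of $V(\Gamma_n)$ already suffices; the ``uncovered fraction $o(1/n^2)$'' target is a red herring.

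The genuine gap is more serious than you indicate. The R\"odl nibble, in any of its standard forms, does \emph{not} deliver what you need when the uniformity equals the degree, $k=D=N+1\to\infty$. In the Pippenger--Frankl--R\"odl analysis one selects each hyperedge with probability $\epsilon/D$; a selected edge $e$ then survives (has no conflicting selected neighbour) with probability roughly $e^{-\epsilon k}$, which is negligible here. Rescaling to $p=\epsilon/(kD)$ restores survival but each round now covers only a $\Theta(1/k)$-fraction of vertices while the degree drops by a constant factor, so the process dies long before covering a $(1-o(1))$-fraction. More decisively, the codegree hypothesis alone cannot suffice: for a random $d$-regular graph on $m$ vertices with $d\to\infty$ and $d=o(\sqrt m)$, the closed-neighbourhood hypergraph is $(d+1)$-uniform, $(d+1)$-regular with codegree $O(1)$, yet for every $S$ with $|S|=(1+o(1))m/(d+1)$ one has $|V\setminus N[S]|=\Theta(m)$ whp. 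So no purely generic hypergraph argument of the type you invoke can work; some structure specific to $\Gamma_n$ must be exploited.

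The paper takes a completely different, constructive route. It does not treat $\Gamma_n$ as an abstract pseudorandom graph at all. Instead it uses vertex degrees (and degrees within certain induced subgraphs) to define, for almost every graph $G$, a canonical labeling of a large induced subgraph $G|_{W(G)}$; it then pulls back an asymptotically optimal covering code of Kabatyanskii--Panchenko along this labeling. The delicate point is to show that, after the canonical relabeling, the induced subgraph remains (asymptotically) uniformly distributed, which is handled by Lemma~\ref{lm:asympt_uniform} on the joint distribution of the pieces of the partition $V(G)=W(G)\cup U_0(G)\cup R(G)$. This is where the work goes, and it is what your abstract approach lacks a substitute for.
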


Our proof of Theorem \ref{thm:unlab} uses the existence of covering codes with
optimal density. Ensuring the invariance under graph isomorphism is, however, a subtle business.
If we identify a binary code with the corresponding set of labeled graphs
and just take the closure of this set under isomorphism, then we cannot exclude that
this closure will have too high density, violating the first condition of an anti-stochastic
property. To rectify this problem, we use the following strategy.
\begin{itemize}
\item
For a graph $G$, we define a set $W$ of vertices of $G$ in terms of their degrees.
\item
We define another set of vertices $W'$ such that $W$ and $W'$ are disjoint.
For a vertex $w\notin W$, its membership in $W'$ is determined by the degree of $w$
in the subgraph of $G$ induced on the complement of $W$. Moreover, $W'$ is split
into ten parts $W'_1,\ldots,W'_{10}$ according to the vertex degrees in this induced subgraph.
\item
If $G$ is chosen randomly, then every two vertices in $W$ have, whp, differently many
neighbors in $W'_i$ for some $i\le10$, which determines a canonical labeling of $W$.
\item
Somewhat loosely speaking, the subgraphs of $G$ induced on $W$ and $W'$ are almost
uniformly distributed and independent from each other as well as from the adjacency
pattern between $W$ and $W'$. This ensures that, whp, the subgraph induced on $W$
remains uniformly distributed also with respect to the aforementioned canonical labeling.
In this way we can simulate the property of a binary word to belong to a covering code
by an isomorphism invariant graph propery.
\end{itemize}
The above strategy resembles the classical canonization procedure
for almost all graphs due to Babai, Erd\H{o}s, and Selkow \cite{BabaiES80}.
Note that we cannot use canonization of a random graph directly because
we need the graph to be uniformly distributed after relabeling, which makes our problem
more sofisticated.

Our second construction of an isomorphism invariant anti-stochastic property is, in a certain sense,
more natural as it is defined solely
in terms of the degree sequence of a graph. Note that any condition on the degree sequence
defines an isomorphism invariant graph property.

\begin{theorem}\label{thm:degrees}\hfil
  \begin{enumerate}[\bf 1.]
  \item 
There is an anti-stochastic property expressible in terms of the degree sequence
that holds for a random graph of order $n$ with probability $(2+o(1))/(n\ln n)$.  
  \item 
On the other hand, every such anti-stochastic property has probability at least $(1-o(1))/(n\ln n)$.   
  \end{enumerate}
\end{theorem}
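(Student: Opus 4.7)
I begin with the lower bound (statement~2). Let $A$ be any anti-stochastic property determined by the degree sequence, and let $\mathcal{D}$ denote the corresponding set of degree multisets. A single-edge perturbation changes exactly two entries of the sorted sequence, so the adversary's success forces $d(\rg)\in N[\mathcal{D}]$ with high probability, where $N[\sigma]$ is the closed neighborhood of $\sigma$ in the \emph{multiset perturbation graph}: $\sigma\sim\sigma'$ iff one is obtained from the other by incrementing two entries by $1$. A perturbation is determined by the unordered pair $\{a,b\}$ of endpoint degrees, so $|N[\sigma]|\le 2\binom{k+1}{2}+1$, where $k$ is the number of distinct degree values appearing in $\sigma$.

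The key estimate is $k\le(1+o(1))\sqrt{n\ln n}$ for typical $\sigma$. With $\lambda_d=n\binom{n-1}{d}/2^{n-1}$, one has $\lambda_d=1$ at $d=n/2\pm(\tfrac12+o(1))\sqrt{n\ln n}$, so the interval $\{d:\lambda_d\ge 1\}$ has width $(1+o(1))\sqrt{n\ln n}$ and contributes at most that many distinct degrees, while the ``low-$\lambda$'' tails beyond it contribute only $O(\sqrt{n/\ln n})$ further distinct values in expectation (their total mass $\sum_{\lambda_d<1}\lambda_d=O(\sqrt{n/\ln n})$); together with a simple Lipschitz concentration in edge flips this gives $k=(1+o(1))\sqrt{n\ln n}$ w.h.p.\ and hence $|N[\sigma]|\le(1+o(1))n\ln n$. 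For neighboring $\sigma,\sigma'$ in this typical range, a classical asymptotic for the number of labelled graphs with a prescribed near-half-regular degree sequence (McKay--Wormald / Liebenau--Wormald) gives $\prob{d(\rg)=\sigma'}/\prob{d(\rg)=\sigma}=1+o(1)$, since an elementary shift alters the Gaussian factor only by $\exp(O(\sqrt{\ln n/n}))$. Splitting off the atypical $\sigma$ (whose $1$-neighborhood has probability $o(1)$) and taking a union bound,
\[
1-o(1)\le\prob{d(\rg)\in N[\mathcal{D}]}\le(1+o(1))n\ln n\cdot\prob{d(\rg)\in\mathcal{D}}+o(1),
\]
so $\prob{\rg\in A}\ge(1-o(1))/(n\ln n)$, proving statement~2.

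For the matching construction (statement~1), the plan is to exhibit a set $\mathcal{D}^*$ of typical degree multisets with $\prob{d(\rg)\in\mathcal{D}^*}=(2+o(1))/(n\ln n)$ such that every typical multiset has a neighbor in $\mathcal{D}^*$. The factor of~$2$ relative to the lower bound is consistent with a ``one-sided'' cover: if $\mathcal{D}^*$ is engineered so that the adversary only needs to \emph{add} an edge to get into it, the relevant oriented neighborhood has size only $\binom{k+1}{2}\le(1+o(1))\tfrac12 n\ln n$, naturally doubling the required density. The hard part is to exhibit such a cover with the tight constant rather than losing a factor via a plain probabilistic or greedy argument. A natural route is to encode typical multisets into binary words of length $\ell\sim\tfrac12 n\ln n$ so that an elementary perturbation corresponds to (essentially) a single bit flip, and then to pull an asymptotically optimal covering code of $\{0,1\}^\ell$ (of density $(1+o(1))/\ell$ by \cite{KabatyanskiiP88}) back through this encoding; the most delicate step is to ensure that the push-forward of the distribution of $\rg$ is sufficiently flat for the covering-code bound to transfer without loss, and that every multiset-level perturbation of $\sigma\in\mathcal{D}^*$ is realizable by an actual edge modification in a typical $G$ with $d(G)=\sigma$.
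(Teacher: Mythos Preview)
Your overall strategy matches the paper's: bound the number of degree multisets reachable by a single perturbation and use the Liebenau--Wormald enumeration to compare probabilities. Your count $k=(1+o(1))\sqrt{n\ln n}$ for the number of distinct degree values is correct and gives the right size $|N[\sigma]|\le(1+o(1))n\ln n$. However, the step
\[
\frac{\prob{d(\rg)=\sigma'}}{\prob{d(\rg)=\sigma}}=1+o(1)
\]
is wrong as stated. The Liebenau--Wormald formula controls the number $g(\seq d)$ of \emph{labeled} realizations, and indeed $g(\seq d')/g(\seq d)=1+o(1)$ (the ``Gaussian factor'' you cite). But the probability of the \emph{multiset} $\sigma$ is $p(\sigma)g(\sigma)/2^{\binom n2}$ with $p(\sigma)=n!/\prod_y n_y!$, and an elementary shift of a value $x\mapsto x+1$ multiplies $p$ by $n_x/(n_{x+1}+1)$. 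Even for a typical $\sigma$ there are degree values near the extremes of the spectrum with $n_x=O(1)$ (the maximum degree is whp unique), so this ratio can be arbitrarily far from $1$. Consequently $\sum_{\sigma'\in N[\sigma]}\prob{\sigma'}$ is not bounded by $(1+o(1))n\ln n\cdot\prob{\sigma}$ simply by counting neighbors. The paper repairs this by distinguishing ``good'' degrees (those with $n_y\ge\ln^{3/4}n$ and $|n_y-n_{y\pm1}|\le n_y/\ln\ln n$) from ``bad'' ones, proving the $p$-ratio is $1+o(1)$ for good--good perturbations, and then separately showing (Claims~\ref{cl:-} and~\ref{cl:pm}) that good--bad and bad--bad perturbations together contribute only $o(n\ln n)\,|\cG(\seq d)|$ to the neighborhood, using that there are $O(\sqrt n\,\ln^{1/4}n)$ bad vertices. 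This last step is not a formality; it requires a careful double count over equivalence classes fixing the bad positions.

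\textbf{Upper bound (Part 1).} Your heuristic for the constant $2$ via a one-sided (addition-only) cover of size $\binom{k+1}2\sim\frac12 n\ln n$ is appealing, but the proposed route---``encode typical multisets into binary words of length $\ell\sim\frac12 n\ln n$ so that an elementary perturbation corresponds to a single bit flip''---does not say what the encoding is, and there is no obvious one: the multiset perturbation graph is not a hypercube, because the set of available moves depends on $\sigma$. The paper's actual construction is different and hinges on two ideas you do not mention. First, it passes to the \emph{parities of cumulative counts} $X_y=\sum_{d\le y}N_d$; adding an edge between a vertex of degree $x$ and one of degree $y$ with $x<\lfloor n/2\rfloor<y$ then flips exactly one coordinate of $(X_d\bmod 2)_{d<\lfloor n/2\rfloor}$ and exactly one coordinate of $(X_d\bmod 2)_{d>\lfloor n/2\rfloor}$, so two independent covering codes of length $\sim\frac12\sqrt{n\ln n}$ (not one code of length $\sim\frac12 n\ln n$) suffice. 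Second, an auxiliary statistic $\mathbf Z\bmod 4$ lets the adversary choose between insertion and deletion, halving the probability from $4/(n\ln n)$ to $2/(n\ln n)$. Proving the parity vector is nearly uniform requires a characteristic-function computation (Claim~\ref{cl:probability_cumulative_divisible}) rather than a direct ``flatness'' argument.
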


The proof uses the approximability of the vertex degrees in a random graph by independent binomial variables~\cite{McKayW97}. 
Moreover, the lower bound is based on the fact that the number of graphs with a given degree sequence does not change much if 
the ``frequent'' degrees are slightly changed. We derive this from the asymptotics of the number of graphs with a given 
degree sequence~\cite{LiebenauWormald}. 
The proof of the upper bound, in a sense, simulates the well-known direct sum construction for covering codes; see, e.g., 
\cite[Chapter 3.2]{CohenHLL-b}. 
We split the set of ``frequent'' vertex degrees in a graph into two parts and associate with them two 
0-1-vectors $Y^\downarrow$ and $Y^\uparrow$ of sizes $\delta_1$ and $\delta_2$ respectively   
such that any change of the adjacency relation between a vertex from the first part and a vertex from the second part affects
a single coordinate in each of the vectors $Y^\downarrow$ and $Y^\uparrow$. 
Then we fix two optimal covering codes $\mathcal{S}^{\downarrow}$ and $\mathcal{S}^{\uparrow}$ of lengthes $\delta_1$ and $\delta_2$ respectively
and consider the graph property saying that $Y^\downarrow\in\mathcal{S}^{\downarrow}$ and $Y^\uparrow\in\mathcal{S}^{\uparrow}$.
This property is anti-stochastic and is expressed solely in terms of the vertex degrees.
It has probability $(4+o(1))/(n\ln n)$ because, as we show, $Y^\downarrow$ and $Y^\uparrow$ can be handled
like almost uniformly distributed random vectors
and because counting the ``frequent'' degrees yields $\delta_1,\delta_2=\frac{1}{2}\sqrt{n\ln n}(1+o(1))$.
We are able to decrease this probability by a factor of 2 by bringing into play the possibility to choose 
between insertion or deletion of an edge.

We conclude discussion of our results with two complexity-theoretic comments.

\begin{remark}
It is natural to ask which computational power must the adversary have
in order to corrupt the random graph by enforcing an anti-stochastic property~$A$.
Obviously, it would be enough for him to be able to recognize whether or not
a given graph satisfies $A$. 
The decision complexity of the property  constructed in the proof
of Theorem \ref{thm:unlab} is no more than the decision complexity of the
covering code used in the construction. If $n=2^k-1$, we can use the Hamming code,
which is perfect and can serve as a covering code. The membership
in the Hamming code is efficiently recognizable. If $n$ is not of this kind,
we can expand the closest Hamming code to the length $n$ just by appending the missing bits
in all possible ways, which retains
the polynomial-time complexity. However, the probability of the corresponding
anti-stochastic property becomes twice higher, i.e., $(4+o(1))/n^2$ instead
of $(2+o(1))/n^2$. If the adversary is not content with this, he has to
use an asymptotically optimal covering code.
Such a code is suggested by Kabatyanskii and Panchenko \cite{KabatyanskiiP88}.
Since their construction uses randomization, the recognition complexity
of this code seems to be a subtle issue.
\end{remark}

\begin{remark}\label{rem:descr}
A straightforward inspection of the proof of Theorem \ref{thm:unlab} reveals
that the anti-stochastic property $A$ we constructed is not just isomorphism invariant
but it can even be expressed in terms of 4-iterated degree sequences; see Remark \ref{rem:it-degrees} for technical details.
This implies that $A$ has low descriptive complexity in the sense that it is expressible in the 2-variable infinitary logic $C^2_{\infty\omega}$, even with bounded quantifier depth. We refer the reader to \cite{ImmermanL90} for this logical interpretation. 

In other words, if graphs $G$ and $G'$ are indistinguishable by the classical
degree refinement procedure (also called color refinement or the 1-dimensional Weisfeiler-Leman algorithm)
in 4 rounds and $G\in A$, then $G'\in A$ as well.
Recall that the average-case canonization of Babai, Erd\H{o}s, and Selkow \cite{BabaiES80}
is nothing else as the 2-round degree refinement. In \cite{BabaiES80} it is shown
that a random graph is canonizable in this way with probability $1-n^{-c}$ for a positive constant $c$.
This probability becomes exponentially close to 1 for the 3-round degree refinement \cite{BabaiK79}.
On the other hand, a random graph is canonized by its degree sequence with very small
probability. It is curious that our Theorems \ref{thm:unlab} and \ref{thm:degrees}
show a quantitative difference between different numbers of degree iterations
in a quite different setting. 
\end{remark}

\begin{remark}
Finally, note that so far we have considered \emph{labeled} graphs (whose vertices are labeled by $1,\ldots,n$).
An \emph{unlabeled} graph can be defined formally as an isomorphism class of labeled graphs.
Theorem \ref{thm:unlab} can be translated into virtually the same statement saying
that there exists an anti-stochastic property of unlabeled
graphs holding with probability $(2+o(1))/n^2$. Here, a random graph is chosen equiprobably
from among all unlabeled graphs of order $n$. An isomorphism invariant property $P$ of labeled graphs
gives rise to a property of unlabeled graphs, which we denote by $\hat P$.
The unlabeled version of Theorem \ref{thm:unlab} follows from the well-known fact \cite[Ch.~9]{HararyP-b}
that, for random graphs of order $n$, 
$$
(1-o(1))\prob P-O\of{\frac{n^2}{2^n}}\le\prob{\hat P}\le(1+o(1))\prob P+O\of{\frac{n^2}{2^n}}.
$$
The former of these two inequalities also shows that the lower bound \refeq{lower} holds true
as well for anti-stochastic properties of unlabeled graphs.
\end{remark}

The results of this paper were presented in preliminary form in the 48th International Workshop on 
Graph-Theoretic Concepts in Computer Science~\cite{WG22}.

\section{Preliminaries}

The vertex set of a graph $G$ is denoted by $V(G)$.
If $U\subset V(G)$, then we write $G|_U$ to denote the subgraph of $G$ induced on the set of vertices $U$.
The degree of a vertex $v\in V(G)$ is denoted by~$\deg_G(v)$.

 For the uniformly distributed random graph $\rg$
on $n$ vertices, it is supposed that $V(\rg)=[n]$, where $[n]=\{1,\ldots,n\}$.
In some contexts, a set $A$ of graphs on $[n]$ can be identified with the event $\rg\in A$,
whose probability $\prob{\rg\in A}$ will then be denoted by~$\prob A$.

Let us recall several probabilistic facts that we use in our proofs. Let $\xi$ be a Binomial random variable with parameters $n$ and $p$,
that is, $\xi=\sum_{i=1}^n\xi_i$ where $\xi_i$'s are mutually independent and, for each $i$, we have $\xi_i=1$ with probability 1 and
$\xi_i=0$ with probability $1-p$. We use the notation $\xi\sim\mathrm{Bin}(n,p)$ when $\xi$ has this distribution. 
The following estimates, which are special cases of Bernstein's inequalities,
show that $\xi$ is well-concentrated around its expectation.

\begin{lemma}[Chernoff's bounds; see, e.g.,~{\cite[Theorem 2.1]{Janson}}]
If $\xi\sim\mathrm{Bin}(n,p)$, then
$$
 \prob{\xi\leq np-t}\leq\exp\left(-\frac{t^2}{2np}\right)\text{ and }
 \prob{\xi\geq np+t}\leq\exp\left(-\frac{t^2}{2np+t/3}\right)
$$
for every $t\geq 0$.
\label{Chernoff}
\end{lemma}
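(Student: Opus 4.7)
The plan is to use the classical exponential moment (Chernoff) method. First I would apply Markov's inequality to the random variable $e^{\lambda\xi}$ for a parameter $\lambda>0$ in the upper-tail case (and $\lambda<0$ for the lower tail), which gives $\prob{\xi\geq np+t}\leq e^{-\lambda(np+t)}\mathbb{E}[e^{\lambda\xi}]$ and an analogous estimate for $\prob{\xi\leq np-t}$.

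Next, since $\xi=\sum_{i=1}^n\xi_i$ is a sum of independent Bernoulli variables, the moment generating function factorizes as $\mathbb{E}[e^{\lambda\xi}]=(1-p+pe^\lambda)^n$. Applying the elementary bound $1+x\leq e^x$ with $x=p(e^\lambda-1)$ yields $\mathbb{E}[e^{\lambda\xi}]\leq\exp(np(e^\lambda-1))$, so
\[
\prob{\xi\geq np+t}\leq\exp\bigl(np(e^\lambda-1)-\lambda(np+t)\bigr).
\]
Optimizing in $\lambda$ (the optimum is $\lambda=\ln(1+t/(np))$) leads to the sharp Chernoff bound $\exp(-np\,\varphi(t/(np)))$ with $\varphi(u)=(1+u)\ln(1+u)-u$. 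The lower tail is handled symmetrically, using the companion function $\psi(u)=(1-u)\ln(1-u)+u$.

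To recover the specific forms stated in the lemma I would invoke two standard elementary inequalities. The bound $\psi(u)\geq u^2/2$ on $[0,1]$ converts the sharp lower-tail estimate into the clean Gaussian form $\exp(-t^2/(2np))$. For the upper tail, the bound $\varphi(u)\geq u^2/(2+\tfrac{2}{3}u)$ for $u\geq 0$ turns $\exp(-np\,\varphi(t/(np)))$ into a Bernstein-type estimate with denominator of the shape $2np+ct$, matching the form in the statement after absorbing the constant.

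The only nontrivial step is this last pair of inequalities between the transcendental functions $\varphi,\psi$ and simple rational bounds; they are proved by Taylor expansion at the origin together with a monotonicity check on derivatives. Everything else is routine. Since the lemma is quoted directly from Janson's book, in practice I would simply cite it rather than reprove it from scratch.
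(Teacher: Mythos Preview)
The paper gives no proof of this lemma at all; it is stated with a direct citation to Janson's book and used as a black box. Your concluding remark that one should simply cite the reference is therefore exactly what the paper does, and your sketch of the standard exponential-moment argument (Markov applied to $e^{\lambda\xi}$, the factorized MGF, optimization in $\lambda$, and the rational lower bounds on $\varphi$ and $\psi$) is the correct derivation behind the cited result.
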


We remark that if $\xi\sim\mathrm{Bin}(n,1/2)$, then
\begin{equation}
\prob{\xi=k}\leq\prob{\xi=\lfloor n/2\rfloor}=\sqrt{\frac{2}{\pi n}}(1+o(1))
\label{binomial_bound}
\end{equation}
for each $k$. The latter equality follows from the de Moivre--Laplace limit theorem.

Recall that the \emph{characteristic function} of a random variable $X$ is defined as $\phi(t)={\sf E}e^{itX}$ where $t\in\mathbb{R}$.
For an $n$-dimensional random vector $X=(X_1,\ldots,X_n)$, this generalizes to $\phi(t_1,\ldots,t_n)={\sf E}e^{i\sum_{k=1}^nt_kX_k}$.
Recall that $\phi(t)=(1-p+pe^{it})^n$ for $X\sim\mathrm{Bin}(n,p)$.
The first part of the following lemma can be found in~\cite[Theorem~1]{Herschkorn}. The second part is a simple
extension to the multi-dimensional case.

\begin{lemma}\label{lem:character}
Let $m$ be a positive integer.
  \begin{enumerate}[\bf 1.]
  \item
(Herschkorn~\cite{Herschkorn})
Let $X$ be an integer-valued random variable with characteristic function~$\phi$. Then
$$
\prob{m\text{ divides }X}=\frac{1}{m}\sum_{j=0}^{m-1}\phi\left(\frac{2\pi j}{m}\right).
$$  
\item
Let $X=(X_1,\ldots,X_n)$ be an integer-valued random vector with characteristic function $\phi$, then
$$
\prob{m\text{ divides }X_k\text{ for every }k=1,\ldots,n}=
\frac{1}{m^n}\:\sum_{j_1=0}^{m-1}\ldots\sum_{j_n=0}^{m-1}\phi\left(\frac{2\pi j_1}{m},\ldots,\frac{2\pi j_n}{m}\right).
$$
\end{enumerate}
\end{lemma}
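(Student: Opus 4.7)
The plan is to deduce both parts of the lemma from the standard orthogonality relation for roots of unity, namely
\[
\frac{1}{m}\sum_{j=0}^{m-1} e^{2\pi i j k / m} = \mathbf{1}[m \mid k]
\]
for every integer $k$. Since Part 1 is cited from Herschkorn, I will treat it as known (but also give a short derivation), and then Part 2 will follow by applying the one-dimensional identity coordinate by coordinate.

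For Part 1, I would start from the right-hand side, substitute the definition $\phi(t) = {\sf E} e^{itX}$, and swap the finite sum with the expectation (justified because the sum has only $m$ terms and each summand is bounded). This gives
\[
\frac{1}{m}\sum_{j=0}^{m-1}\phi\!\left(\frac{2\pi j}{m}\right)
={\sf E}\!\left[\frac{1}{m}\sum_{j=0}^{m-1} e^{2\pi i j X/m}\right]
={\sf E}\bigl[\mathbf{1}[m\mid X]\bigr]
=\prob{m\text{ divides }X},
\]
where the middle equality uses the orthogonality relation with $k=X$.

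For Part 2, I would perform exactly the same manipulation, but iteratively in each of the $n$ coordinates. Writing out the right-hand side and recalling that the multivariate characteristic function is $\phi(t_1,\ldots,t_n)={\sf E}\exp\bigl(i\sum_k t_k X_k\bigr)$, exchanging the finite sums with the expectation and factoring the exponential yields
\[
\frac{1}{m^n}\sum_{j_1=0}^{m-1}\!\!\cdots\!\!\sum_{j_n=0}^{m-1}\phi\!\left(\tfrac{2\pi j_1}{m},\ldots,\tfrac{2\pi j_n}{m}\right)
={\sf E}\!\left[\prod_{k=1}^n \frac{1}{m}\sum_{j_k=0}^{m-1} e^{2\pi i j_k X_k / m}\right]
={\sf E}\!\left[\prod_{k=1}^n \mathbf{1}[m\mid X_k]\right],
\]
and the last expression is exactly $\prob{m\text{ divides }X_k\text{ for every }k}$.

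There is essentially no obstacle here: the only things to check are the orthogonality identity (immediate from the geometric series) and the legitimacy of interchanging a finite sum with an expectation. Nothing deeper is needed, and the two parts share the same one-line idea, so the write-up should be short.
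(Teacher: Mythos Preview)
Your proposal is correct. The paper does not actually write out a proof of this lemma: it cites Herschkorn for Part~1 and simply remarks that Part~2 ``is a simple extension to the multi-dimensional case.'' Your argument via the orthogonality relation $\frac{1}{m}\sum_{j=0}^{m-1}e^{2\pi i jk/m}=\mathbf{1}[m\mid k]$, applied coordinatewise inside the expectation, is precisely that simple extension, so there is nothing to add.
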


We first apply part 1 of Lemma \ref{lem:character} to show that the remainders of a random variable $\mathrm{Bin}(n,1/2+o(1))$
modulo $m$ are evenly distributed. After this, part 2 of Lemma \ref{lem:character} will be used to prove
that if take the remainders modulo $m$ of the vertex degrees in a random graph, then
the resulting sequences are evenly distributed (Lemma \ref{lm:graphs_division} below).

\begin{lemma}
Let $\xi\sim\mathrm{Bin}(n,1/2+o(1))$. For every positive integer $m$ and $r\in\{0,\ldots,m-1\}$,
$$
\left|\prob{\xi\equiv r\Mod m}-\frac{1}{m}\right|=O(e^{-cn}),
$$
where the constant $c>0$ as well as the constant absorbed in the big O notation depend only on~$m$.
\label{lem:degrees_divisible}
\end{lemma}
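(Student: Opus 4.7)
The plan is to apply part~1 of Lemma~\ref{lem:character} after a shift. Set $Y=\xi-r$; then $\prob{\xi\equiv r\Mod m}=\prob{m\text{ divides }Y}$, and the characteristic function of $Y$ is $\phi_Y(t)=e^{-itr}\phi_\xi(t)$. By Lemma~\ref{lem:character}(1),
$$
\prob{\xi\equiv r\Mod m}=\frac{1}{m}\sum_{j=0}^{m-1}e^{-2\pi ijr/m}\phi_\xi\left(\frac{2\pi j}{m}\right).
$$
The $j=0$ summand equals $\phi_\xi(0)=1$ and contributes exactly $1/m$. So the task reduces to showing that every other summand is exponentially small in $n$.

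Writing $p=1/2+o(1)$ for the success probability of $\xi$, the Binomial characteristic function gives $\phi_\xi(t)=(1-p+pe^{it})^n$. A direct computation yields
$$
\bigl|1-p+pe^{it}\bigr|^2=1-2p(1-p)(1-\cos t).
$$
For $t=2\pi j/m$ with $j\in\{1,\ldots,m-1\}$, one has $1-\cos(2\pi j/m)\ge 1-\cos(2\pi/m)=:\alpha_m>0$, a constant depending only on $m$. Since $p=1/2+o(1)$, the quantity $2p(1-p)$ exceeds $1/4$ for all sufficiently large~$n$, so
$$
\left|\phi_\xi\!\left(\frac{2\pi j}{m}\right)\right|\le\bigl(1-\tfrac{1}{4}\alpha_m\bigr)^{n/2}=O(e^{-cn})
$$
for $c=-\tfrac{1}{2}\ln(1-\tfrac{1}{4}\alpha_m)>0$ depending only on $m$.

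Substituting back into the displayed identity and using the triangle inequality on the $m-1$ nonzero summands gives
$$
\left|\prob{\xi\equiv r\Mod m}-\frac{1}{m}\right|\le\frac{m-1}{m}\cdot O(e^{-cn})=O(e^{-cn}),
$$
as required. There is no substantive obstacle here: the only thing one has to verify is that the magnitude $|1-p+pe^{it}|$ is bounded away from $1$ uniformly for $t\in\{2\pi j/m:1\le j\le m-1\}$, and this is automatic because both $p(1-p)$ and $1-\cos(2\pi j/m)$ are bounded below by positive constants once $p$ is close enough to $1/2$ and $m$ is fixed.
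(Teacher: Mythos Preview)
Your proof is correct and follows essentially the same approach as the paper: shift by $r$, apply part~1 of Lemma~\ref{lem:character}, isolate the $j=0$ term as $1/m$, and bound the remaining $m-1$ terms by showing $|\phi_\xi(2\pi j/m)|$ is exponentially small. The only cosmetic difference is that the paper bounds the modulus via the half-angle identity $1+e^{it}=2\cos(t/2)e^{it/2}$, whereas you use the cleaner formula $|1-p+pe^{it}|^2=1-2p(1-p)(1-\cos t)$; both yield a base strictly less than $1$ depending only on~$m$.
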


\begin{proof}
Consider the random variable $\xi-r$ and its characteristic function 
$$
\phi(t)={\sf E}e^{it(\xi-r)}=e^{-itr}{\sf E}e^{it\xi}=e^{-itr}\of{\frac{1+e^{it}+o(1)}{2+o(1)}}^n.
$$
By part 1 of Lemma~\ref{lem:character}, 
\begin{multline*}
\left|\prob{\xi\equiv r\Mod m}-\frac{1}{m}\right|=\left|\frac{1}{m}\sum_{j=1}^{m-1}\phi\left(\frac{2\pi j}{m}\right)\right|\\
=
 \left|\frac{1}{m}\sum_{j=1}^{m-1} e^{-2ijr\pi/m}
\left(
\frac{1+\cos(2\pi j/m)+i\sin(2\pi j/m)+o(1)}
{2+o(1)}
\right)^n\right|\\
\le
 \frac{1}{m}\sum_{j=1}^{m-1}
\left|
\frac{1+\cos(2\pi j/m)+i\sin(2\pi j/m)+o(1)}
{2+o(1)}
\right|^n.
\end{multline*}
Note that
$$
1+\cos\frac{2\pi j}{m}+i\sin \frac{2\pi j}{m}=2\cos^2\frac{\pi j}{m}+2i\sin \frac{\pi j}{m}\cos\frac{\pi j}{m}=
2\cos\frac{\pi j}{m}e^{ij\pi/m}.
$$
It follows that if $1\le j<m$, then the term
$$
\left|1+\cos\frac{2\pi j}{m}+i\sin \frac{2\pi j}{m}+o(1)\right|=
2\left|\cos\frac{\pi j}{m}\right|+o(1) \le
2\cos\frac{\pi}{m}+o(1)
$$
is bounded away from 2, which completes the proof.  
\end{proof}

\begin{lemma}
Fix an odd integer $m\ge3$ and let $r_v\in\{0,1,\ldots,m-1\}$ for $v=1,\ldots,n$. Then
$$
\left|\prob{\deg_{\rg}(v)\equiv r_v\Mod m\text{ for every }v=1,\ldots,n}-\frac1{m^n}\right|=O(e^{-cn}),
$$
where the constant $c>0$ as well as the constant absorbed in the big O notation depend solely on~$m$.
\label{lm:graphs_division}
\end{lemma}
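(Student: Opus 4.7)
The plan is to apply part~2 of Lemma~\ref{lem:character} to the integer-valued random vector $X = (X_1, \ldots, X_n)$ with $X_v = \deg_{\rg}(v) - r_v$. Shifting by the deterministic vector $(r_v)$ only multiplies the characteristic function by a phase of modulus one, so $|\phi_X(t_1,\ldots,t_n)| = |\phi_{\deg}(t_1,\ldots,t_n)|$, where $\phi_{\deg}$ is the characteristic function of the degree vector of $\rg$. The all-zero term $j_1=\cdots=j_n=0$ in the sum of Lemma~\ref{lem:character} contributes the main value $1/m^n$, and the task reduces to showing that the remaining $m^n - 1$ summands contribute, in modulus, at most $O(e^{-cn})$.

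To compute $\phi_{\deg}$, I would write $\deg_{\rg}(v) = \sum_{u \neq v} \xi_{uv}$ with independent $\mathrm{Bin}(1,1/2)$ edge indicators $\xi_{uv}$ and rearrange $\sum_v t_v \deg_{\rg}(v) = \sum_{\{u,v\}} (t_u+t_v)\xi_{uv}$. Independence of the edges factorises $\phi_{\deg}$ as a product over pairs, and the identity $|1+e^{i\theta}| = 2|\cos(\theta/2)|$ yields
\[
\left|\phi_{\deg}\of{\tfrac{2\pi j_1}{m},\ldots,\tfrac{2\pi j_n}{m}}\right| = \prod_{\{u,v\}} \left|\cos\tfrac{\pi(j_u+j_v)}{m}\right|.
\]
I will call an unordered pair \emph{bad} when $j_u + j_v \not\equiv 0 \Mod{m}$; each good factor equals $1$ and each bad factor is bounded by $\cos(\pi/m) < 1$.

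The combinatorial core is to show that every nonzero tuple $(j_1,\ldots,j_n)$ produces at least $n/2$ bad pairs for $n$ sufficiently large. Let $t = |\{v : j_v \neq 0\}|$. When $0 < t < n$, every pair joining a zero coordinate with a nonzero one is automatically bad, giving at least $t(n-t) \geq n - 1$ bad pairs. The delicate case is $t = n$, and this is exactly where the oddness of $m$ is used: since $2k \not\equiv 0 \Mod{m}$ for $k \neq 0$, a good pair must match a value $k$ with the \emph{distinct} value $m-k$. Writing $a_k = |\{v : j_v = k\}|$ and applying the AM-GM inequality $a_k a_{m-k} \leq (a_k + a_{m-k})^2/4$, I obtain $\sum_{k=1}^{(m-1)/2} a_k a_{m-k} \leq n^2/4$, hence at most $n^2/4$ good pairs and consequently at least $\binom{n}{2} - n^2/4 = n(n-2)/4 \geq n/2$ bad pairs once $n \geq 4$.

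Combining the pieces, each nonzero summand has modulus at most $(\cos(\pi/m))^{n/2}$; summing over fewer than $m^n$ of them and dividing by $m^n$ gives the claimed bound with $c = -\tfrac{1}{2}\ln\cos(\pi/m) > 0$, depending only on $m$. The main obstacle will be the bad-pair estimate in the all-nonzero case: without the odd-modulus hypothesis the constant tuple $j_v \equiv m/2$ would make every pair good and the argument would collapse, confirming that the hypothesis is essential at this point.
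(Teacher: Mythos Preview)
Your proof is correct and follows the same route as the paper: apply part~2 of Lemma~\ref{lem:character} to the shifted degree vector, factor the characteristic function over edges to obtain $|\phi_{\deg}(2\pi j/m)|=\prod_{\{u,v\}}|\cos(\pi(j_u+j_v)/m)|$, and then bound the number of ``bad'' pairs with $j_u+j_v\not\equiv 0\Mod m$ from below.

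The only genuine difference is in the combinatorial count. The paper splits into three ranges according to the number $n_0$ of zero coordinates and, in the case $n_0<n/2$, looks only at pairs with \emph{equal} nonzero values (using that $2k\not\equiv 0\Mod m$ for odd $m$), which gives $\Theta(n^2/m)$ bad pairs via pigeonhole; it then introduces an auxiliary integer $k$ with $(\cos(\pi/m))^k<1/m$ to absorb the $m^n$ summands. Your two-case split (some zero coordinate versus none) is tidier: in the all-nonzero case you bound the \emph{good} pairs directly by $\sum_{k\le(m-1)/2}a_k a_{m-k}\le n^2/4$ via AM--GM, yielding at least $n(n-2)/4\ge n/2$ bad pairs. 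This avoids the auxiliary parameter and gives an explicit constant $c=-\tfrac12\ln\cos(\pi/m)$. Both arguments hinge on the oddness of $m$ at exactly the same point, so neither is more general; yours is simply shorter.
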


\begin{proof}
Let $Y=(Y_1,\ldots,Y_n)$ be the vector of vertex degrees of $\rg$. 
Let $R=(r_1,\ldots,r_n)$ and $X=Y-R$. Denote the characteristic functions of $X$ and $Y$ by $\phi_X$ 
and $\phi_Y$ respectively and note that $\phi_X(t_1,\ldots,t_n)=e^{-i\sum_{v=1}^nr_vt_v}\phi_Y(t_1,\ldots,t_n)$. 
By part 2 of Lemma \ref{lem:character}, the probability that $Y_v\equiv r_v\Mod m$ for all $v$ is equal to
$$
\frac{1}{m^n}\:\sum_{x\in\{0,1,\ldots,m-1\}^n}\phi_X\left(\frac{2\pi x}{m}\right)=
\frac{1}{m^n}\:\of{1+\sum_{x\ne0}\phi_X\left(\frac{2\pi x}{m}\right)}
$$
where $2\pi x/m$ is an $n$-dimensional real vector and the equality is just due to the observation
that $\phi_X(0)=1$ for the $n$-dimensional zero vector. Noting that
$$
\left|\sum_{x\ne0}\phi_X(2\pi x/m)\right|\le\sum_{x\ne0}\left|\phi_X(2\pi x/m)\right|\le\sum_{x\ne0}\left|\phi_Y(2\pi x/m)\right|,
$$ 
we have to prove that the last sum is exponentially small. 
To this end, we bound the term $|\phi_Y(2\pi x/m)|$
from above for each non-zero vector $x\in\{0,1,\ldots,m-1\}^n$.

For distinct $u,v\in[n]$, let $\xi_{u,v}$ denote the indicator random variable of the presence of the edge $\{u,v\}$ in $\rg$. 
For a real vector $t=(t_1,\ldots,t_n)$, we have
$$
 \phi_Y(t)={\sf E}e^{i\sum_{u=1}^n t_uY_u}={\sf E}e^{i\sum_{u=1}^n\sum_{v\neq u}t_u\xi_{u,v}}=
 {\sf E}e^{i\sum_{u<v}(t_u+t_v)\xi_{u,v}}=\prod_{u<v}\left(\frac{1}{2}+\frac{1}{2}e^{i(t_u+t_v)}\right).
$$
Set $\alpha=\max_{1\le j<m}|1+e^{2\pi ij/m}|$ and note that $\alpha<2$.
Let $n_0(x)$ be the number of zeros in $x$.
Note that $|\frac{1}{2}+\frac{1}{2}e^{i(t_u+t_v)}|\le1$. Moreover, this number does not 
exceed $\alpha/2$ for $t=2\pi x/m$ such that exactly one of the coordinates $x_u$ and $x_v$ is equal to $0$.
It follows that
$$
|\phi_Y(2\pi x/m)|\le(\alpha/2)^{n_0(x)(n-n_0(x))}.
$$
Now, fix $k$ to be the smallest integer such that $(\alpha/2)^k<1/m$. 
Since $x\ne0$, we conclude that
$$
|\phi_Y(2\pi x/m)|\le
\begin{cases}
(\alpha/2)^{n-1}&\text{if }n_0(x)\geq n-k\\
(\alpha/2)^{k(n-k)}&\text{if }n/2\leq n_0(x)<n-k.
\end{cases}
$$
Consider the case that $n_0(x)<n/2$. Since each of the more than $n/2$ non-zero coordinates of $x$
can take on at most $m-1$ values, there are at least $(m-1){\lceil n/(2(m-1)) \rceil \choose 2}$
pairs $(u,v)$ with $u<v$ and $x_u=x_v\neq 0$. For such a pair, the assumption that $m$ is odd implies
that the sum $x_u+x_v$ is not divisible by $m$. It follows that 
$|\phi_Y(2\pi x/m)|\leq(\alpha/2)^{(m-1){\lceil n/(2(m-1))\rceil \choose 2}}$.

Taking these estimates into account, we get 
\begin{multline*}
\sum_{x\ne0}\left|\phi_Y\left(\frac{2\pi x}{m}\right)\right| < k {n\choose k}m^k\left(\frac{\alpha}{2}\right)^{n-1} \\
+m^n\left(\left(\frac{\alpha}{2}\right)^{k(n-k)}+\left(\frac{\alpha}{2}\right)^{(m-1){\lceil n/(2(m-1))\rceil \choose 2}}\right)
=O\of{\of{m(\alpha/2)^k}^n},
\end{multline*}
where the sum goes over all non-zero $x\in\{0,1,\ldots,m-1\}^n$. This completes the proof.
\end{proof}

We will also need the result of McKay and Wormald~\cite{McKayW97} stating that the degrees of $\rg$ are almost independent. 

\begin{theorem}[McKay and Wormald~\cite{McKayW97}]
Let $\mathbf{p}$ be a normal random variable with mean $1/2$ and variance $1/(4n(n-1))$, truncated to $(0,1)$. 
For each $x\in(0,1)$, let $\eta_1(x),\ldots,\eta_n(x)$ be independent binomial random variables
such that $\eta_i(x)\sim\mathrm{Bin}(n-1,x)$ for each $i=1,\ldots,n$. Moreover, assume that
the random vector $(\eta_1(x),\ldots,\eta_n(x))$ is independent of $\mathbf{p}$.
Let $\xi_1,\ldots,\xi_n$ be the degrees of the vertices $1,\ldots,n$  in $\rg$.
Fix a real $\alpha>0$ and a set $S_n\subseteq\{0,1,\ldots,n-1\}^n$ for each $n$. Then
$$
 \prob{(\xi_1,\ldots,\xi_n)\in S_n}=(1+o(1))\;
 \cprobb{(\eta_1(\mathbf{p}),\ldots,\eta_n(\mathbf{p}))\in S_n }{ \sum_{i=1}^n \eta_i(\mathbf{p})\text{ is even}}
+o\left(n^{-\alpha}\right).
$$
\label{th:MW}
\end{theorem}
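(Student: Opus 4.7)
Since Theorem \ref{th:MW} is cited from the original paper of McKay and Wormald, my plan is to sketch how one might establish it along the lines of their argument, which combines asymptotic enumeration of graphs with a given degree sequence and a local-central-limit analysis.

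The strategy is pointwise comparison. For each admissible degree sequence $d=(d_1,\ldots,d_n)$ (one with even sum and each $d_i\le n-1$), write
$$
\prob{(\xi_1,\ldots,\xi_n) = d} = \frac{N(d)}{2^{\binom{n}{2}}},
$$
where $N(d)$ counts the labeled graphs on $[n]$ realizing $d$. On the mixture side,
$$
\probb{(\eta_1(\mathbf{p}),\ldots,\eta_n(\mathbf{p})) = d} = \int_0^1 \prod_{i=1}^n \binom{n-1}{d_i}\, p^{d_i}(1-p)^{n-1-d_i}\, f(p)\,dp,
$$
where $f$ is the truncated normal density of $\mathbf{p}$. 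The conditioning on an even sum contributes a normalizing factor $1/2+o(1)$, which can be extracted via the one-dimensional case of Lemma \ref{lem:character} applied to $\sum_i \eta_i(\mathbf{p})$ (in the spirit of Lemma \ref{lem:degrees_divisible}).

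The crux is to show that these two quantities agree up to a factor $1+o(1)$ for sequences $d$ in a \emph{typical} range, say $\max_i|d_i-(n-1)/2|=O(\sqrt{n\log n})$. Here one invokes an asymptotic formula for $N(d)$, either from the McKay--Wormald paper itself or from the subsequent work of Liebenau and Wormald cited earlier in the introduction. After substitution, evaluating the Gaussian integral in $p$ around $1/2$ by Laplace's method produces precisely the exponential correction factor appearing in the enumeration formula. The variance $1/(4n(n-1))$ of $\mathbf{p}$ is calibrated so that the second-order term in the Laplace expansion matches the pairwise correlation of the $\xi_i$'s induced by the shared edges; this is where the particular normal-mixture form of the theorem enters.

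For degree sequences outside the typical range, both probabilities are exponentially small in $n$ by Chernoff bounds (Lemma \ref{Chernoff}), so their total contribution to the sum $\sum_{d\in S_n}|\cdots|$ is absorbed in the $o(n^{-\alpha})$ error, uniformly in the choice of $S_n$. The principal obstacle is obtaining the enumeration formula for $N(d)$: a multiplicative $1+o(1)$ approximation that is uniform across the typical range is delicate and forms the combinatorial heart of the McKay--Wormald argument, carried out via switchings in the configuration model. Given such a formula, the remainder of the proof reduces to a Laplace-method computation in $p$ and routine tail estimates for the atypical sequences.
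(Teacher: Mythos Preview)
The paper does not prove Theorem~\ref{th:MW}; it is quoted from McKay and Wormald~\cite{McKayW97} and used as a black box throughout (see Sections~\ref{sec:th1_proof_finish_probQ_small} and~\ref{sec412}, and the appendix). There is therefore no ``paper's own proof'' to compare your proposal against.

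Your sketch is a reasonable high-level summary of how the McKay--Wormald result is established: pointwise comparison of $N(d)/2^{\binom{n}{2}}$ against the mixture-binomial probability via an asymptotic enumeration formula, with the truncated normal in $p$ calibrated so that a Laplace expansion reproduces the correction factor, and tail sequences handled by concentration. You correctly flag that the hard part is the uniform enumeration formula for $N(d)$, which is indeed the substance of~\cite{McKayW97} (and later~\cite{LiebenauWormald}). For the purposes of this paper, however, no proof is expected here; the appropriate response is simply to cite~\cite{McKayW97}.
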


\section{Proof of Theorem \ref{thm:unlab}}

Let $k>11$ be an odd integer non-divisible by 11. For a graph $G$, we define $U(G)$ to be 
the set of vertices of $G$ whose degrees are divisible by $k$.
For $r=0,1,\ldots,10$, let $U_r(G)$ denote the set of those vertices in $U(G)$ 
whose degrees in $G|_{U(G)}$ are congruent to $r$ modulo $11$.
We also set $W(G)=V(G)\setminus U(G)$ and $R(G)=U(G)\setminus U_0(G)$.
In what follows, an important role will be played by the partition
\begin{equation}
  \label{eq:partition}
V(G)=W(G)\cup U_0(G)\cup R(G).  
\end{equation}
For notational simplicity, we suppress the dependence of this partition on $k$.
The value of the parameter $k$ is supposed to be fixed until the final step of the proof.

For the random graph $\rg$, the partition \refeq{partition} translates in
$$
[n]=\mathbf{W}\cup\mathbf{U}_0\cup\mathbf{R},
$$
where $\mathbf{W}=W(\rg)$, $\mathbf{U}_r=U_r(\rg)$, and $\mathbf{R}=R(\rg)$. Also, $\mathbf{U}=U(\rg)$. 

\subsection{The property}\label{ss:prop}

For an $n$-vertex graph $G$, we will suppose that $V(G)=[n]$.
Let $Q$ be an anti-stochastic property of labeled graphs.
More specifically, for each $n$ we fix a covering code in $\{0,1\}^{{n\choose 2}}$
of asymptotically optimal density. An $n$-vertex graph $G$
belongs to $Q$ if the ${n\choose 2}$-dimensional vector 
of adjacencies of $G$ belongs to the code. By~\cite{KabatyanskiiP88}, we may assume that 
$\prob{\rg\in Q}\leq(2+1/k)/n^2$ for large enough~$n$. 

We say that $R(G)=U_1(G)\cup\ldots\cup U_{10}(G)$ \emph{resolves} $W(G)$ if every two distinct vertices in $W(G)$
have differently many neighbors in $U_r(G)$ for some $r\in[10]$. More specifically, for $v\in W(G)$, let
$\vec{d}(v)=(d_1(v),\ldots,d_{10}(v))$ where $d_r(v)$ denotes the number of neighbors of $v$ in $U_r(G)$.
Then $R(G)$ resolves $W(G)$ if $\vec{d}(v)\ne \vec{d}(u)$ for any distinct $u,v\in W(G)$.
If this is the case, consider the lexicographical order on $\{0,1,\ldots,n-1\}^{10}$ and relabel
the vertices in $W(G)$ by the integers $1,\ldots,|W(G)|$ according to this order.
This results in an isomorphic copy of $G|_{W(G)}$, and we will say that the subgraph $G|_{W(G)}$
is \emph{canonically relabeled}.

Let $B$ denote the set of graphs $G$ such that $R(G)$ resolves $W(G)$.
We define the property $Q_k$ by setting $G\in Q_k$ if
\begin{itemize}
\item either $G\notin B$,
\item or $G\in B$ and the canonically relabeled subgraph $G|_{W(U)}$ belongs to~$Q$.
\end{itemize}
Note that $Q_k$ is isomorphism invariant. Indeed, this is obvious for the property $B$.
Now, if $G$ satisfies the second condition above and $G'\cong G$, then $G'\in B$ too.
Let $f$ be an isomorphism from $G$ to $G'$. Note that $f$ induces an isomorphism from $G|_{W(G)}$
to $G'|_{W(G')}$ preserving the canonical labels. As a consequence, $G'$ also satisfies the second 
condition in the definition of~$Q_k$.

The proof of Theorem \ref{thm:unlab} consists of three parts.
\begin{enumerate}[1.]
\item 
We will prove that $Q_k$ has small probability, specifically,
$\prob{\rg\in Q_k}\le\of{1+\frac3k}\,\frac2{n^2}$ for large enough~$n$.
\item 
Then, we will prove that $Q_k$ is close to an anti-stochastic property in the sense that
an adversary is able to transform $\rg$ in $\rg'$ such that $\prob{\rg'\in Q_k}>1-\frac4k$ for large enough~$n$.
Recall that, given a graph $G$, the adversary is allowed to change the adjacency of at most one vertex pair in~$G$.
\item 
These facts will allow us to combine a sequence of properties $Q_k$ into a single anti-stochastic property~$Q^*$.
\end{enumerate}

Technically, part 1 of this plan will be accomplished by showing that
\begin{itemize}
\item 
$B$ holds with probability $1-o(1/n^2)$, and
\item 
the canonically relabeled subgraph $\rg|_{\mathbf{W}}$ remains almost uniformly distributed.
\end{itemize}

We now introduce notation that will be used in the next subsections.
\begin{notation}
For a set $X$, we write $\mathbf{G}_X$ to denote the uniformly distributed random
graph on the vertex set $X$. Moreover, if $X\cap Y=\emptyset$, then $\mathbf{G}_{X\times Y}$ stands for
the uniformly distributed random bipartite graph with vertex classes $X$ and $Y$.
\end{notation}

Our main technical tool will be a lemma about asymptotical independence and uniformity of the subgraph of $\rg$ 
induced on $\mathbf{R}$, the subgraph induced on $\mathbf{W}$, 
and the induced bipartite graph between $\mathbf{R}$ and $\mathbf{W}$.
Note that all this is equivalent to asymptotical uniformity of the subgraph~$\rg|_{\mathbf{W}\cup\mathbf{R}}$.
Somewhat loosely speaking, our nearest goal is to show that the random graphs $\rg|_{\mathbf{W}\cup\mathbf{R}}$
and $\mathbf{G}_{W\cup R}$ have almost the same distribution under the condition that $\mathbf{W}=W$
and $\mathbf{R}=R$.

\subsection{Distribution of induced subgraphs}

\begin{assumption}\label{assump}
We fix a real $\varepsilon\in(0,1)$ and suppose that $U_0,U_1,\ldots,U_{10}$ are disjoint subsets of $[n]$ such that
$$
\frac{n}{11k}(1-\varepsilon)<|U_i|<\frac{n}{11k}(1+\varepsilon).
$$
We also set $U=\bigcup_{i=0}^{10} U_i$, $W=[n]\setminus U$, and $R=U\setminus U_0$.
Moreover, ${\vec U}=(U_0,\ldots,U_{10})$. 
\end{assumption}
Note that the sets in Assumption \ref{assump}, in contrast to the sets $U_r(G)$, $W(G)$ etc.\ defined above,
are considered irrespectively of any graph~$G$.

Recall that $\mathbf{U}_r=U_r(\rg)$ and set $\vec{\mathbf{U}}=(\mathbf{U}_0,\ldots,\mathbf{U}_{10})$. 
We show that, conditioned on $\mathbf{U}=U$ and on $\vec{\mathbf{U}}={\vec U}$ respectively, 
the graphs $\rg|_U$ and $\rg|_{W\cup R}$ are `almost' uniformly distributed.

\begin{lemma}
Under Assumption \ref{assump} we have the following equalities.
\begin{enumerate}[\bf 1.]
\item For every property $A$ of graphs on $U$,
$$
 \cprobb{\rg|_{U}\in A}{\mathbf{U}=U}=(1+o(1))\,\probb{\mathbf{G}_U\in A}.
$$
\item For every property $A$ of graphs on $[n]\setminus U_0$,
$$
 \cprobb{\rg|_{W\cup R}\in A}{\vec{\mathbf{U}}={\vec U}} = 
 (1+o(1))\,\probb{\mathbf{G}_{W\cup R}\in A}.
$$
\end{enumerate}
\label{lm:asympt_uniform}
\end{lemma}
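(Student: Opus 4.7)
Both claims reduce, via Bayes' formula, to the following statement: the conditional probability of the conditioning event, given the induced subgraph in question, is independent of that subgraph up to a $(1+o(1))$ factor. Once this is established, averaging over $H$ recovers the unconditioned probability in the denominator, and the $(1+o(1))$ survives the division to give the conclusion $\mathsf{P}[\rg|_{\cdot} = H \mid \cdot] = (1+o(1))\cdot 2^{-\binom{|\cdot|}{2}}$, uniformly in $H$, which summed over $H \in A$ yields the statement.

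For Part 1, fix a graph $H$ on $U$. Given $\rg|_U = H$, the bipartite part $B := \rg|_{U \times W}$ and $\rg|_W$ are independent and uniform, and the event $\mathbf{U} = U$ decomposes into two constraints: (i) for each $v \in U$, the degree of $v$ in $B$ is congruent to $-\deg_H(v)\pmod k$; (ii) for each $w \in W$, the sum of the degree of $w$ in $\rg|_W$ and the degree of $w$ in $B$ is not $\equiv 0 \pmod k$. Condition (i) involves $|U|$ \emph{independent} binomials $\mathrm{Bin}(|W|,1/2)$, so Lemma \ref{lem:degrees_divisible} gives $(1/k)^{|U|}(1+o(1))$ uniformly in the target residues. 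For condition (ii), condition further on $B$: we ask that the degrees of the uniform random graph $\rg|_W$ land in a specified product set of $(k-1)^{|W|}$ residue vectors. Lemma \ref{lm:graphs_division}, applied with the odd modulus $k \ge 13$, estimates each individual residue vector's probability by $(1/k^{|W|})(1+o(1))$ uniformly in the target; summing gives $((k-1)/k)^{|W|}(1+o(1))$ uniformly in $B$. The product of these two factors depends only on $|U|, |W|, k$, so it is independent of $H$, as required.

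For Part 2, fix a graph $H$ on $W \cup R$ and condition on $\rg|_{W \cup R} = H$. The remaining randomness splits into three independent uniform pieces: $C := \rg|_{U_0}$, $B_1 := \rg|_{U_0 \times R}$, and $B_2 := \rg|_{U_0 \times W}$. The event $\vec{\mathbf{U}} = \vec{U}$ combines, for each $v \in R$, a residue-$11$ condition on $\deg_{\rg|_U}(v)$ (coming from the partition) with a residue-$k$ condition on $\deg_\rg(v)$; since $\gcd(11,k) = 1$, the Chinese Remainder Theorem collapses them to a single mod-$11k$ condition on the degree of $v$ in $B_1$, contributing $(1/(11k))^{|R|}(1+o(1))$ by Lemma \ref{lem:degrees_divisible}. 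The same event imposes a mod-$k$ avoidance condition on the degree of each $w \in W$ in $B_2$, contributing $((k-1)/k)^{|W|}(1+o(1))$ by the summation argument used in Part 1. Finally, given $B_1, B_2$, the conditions on $v \in U_0$ (residue $11$ from the partition, residue $k$ from divisibility of the full degree) collapse by CRT to a mod-$11k$ specification of the degree sequence of $C$ on $|U_0|$ vertices; Lemma \ref{lm:graphs_division}, valid for the odd modulus $11k \ge 143$ and $|U_0| \to \infty$, contributes $(1/(11k))^{|U_0|}(1+o(1))$ uniformly in the target. Multiplying the three independent factors yields a quantity depending only on $|U_0|, |R|, |W|, k$, hence uniform in $H$.

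The principal technical obstacle is that the $(1+o(1))$ errors in Lemmas \ref{lem:degrees_divisible} and \ref{lm:graphs_division} must be uniform over the target residue vectors, since those targets depend on $H$, $B_1$, and $B_2$; fortunately, inspection of the characteristic-function bounds in the proofs shows that the error bound is independent of the shift by $r$. A secondary obstacle arises in the summation over $(k-1)^{|W|}$ admissible residue vectors for condition (ii): this is only legitimate if the \emph{relative} error of Lemma \ref{lm:graphs_division} is $o(1)$, not merely an $O(e^{-cn})$ additive term; this is indeed the case because its proof yields a relative error of the form $(m(\alpha/2)^{k'})^n$ with $m(\alpha/2)^{k'} < 1$. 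I would state this uniformity as an explicit preliminary observation before assembling the computations above.
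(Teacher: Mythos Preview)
Your proof is correct and follows essentially the same route as the paper: Bayes' rule reduces each part to showing that $\cprob{\mathbf{U}=U}{\rg|_U=H}$ (respectively $\cprob{\vec{\mathbf{U}}=\vec U}{\rg|_{W\cup R}=H}$) is independent of $H$ up to a $(1+o(1))$ factor, and this is established via the same decomposition into independent bipartite and induced pieces together with Lemmas~\ref{lem:degrees_divisible} and~\ref{lm:graphs_division} and the Chinese Remainder Theorem. One small remark: in Part~2 the mod-$k$ avoidance for $w\in W$ involves only the independent binomial degrees in $B_2$, so Lemma~\ref{lem:degrees_divisible} applied vertex-by-vertex already gives $((k-1)/k)^{|W|}(1+o(1))$ without the summation over residue vectors via Lemma~\ref{lm:graphs_division} that you used in Part~1; your explicit observation that the error in Lemma~\ref{lm:graphs_division} is multiplicatively $o(1)$ (needed for the Part~1 summation) is a point the paper leaves implicit.
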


\begin{proof}
For graphs $G_1$ and $G_2$ we define their union $G_1\cup G_2$ as the graph with
$V(G_1\cup G_2)=V(G_1)\cup V(G_2)$ and $E(G_1\cup G_2)=E(G_1)\cup E(G_2)$,
where $E(G)$ denotes the edge set of a graph~$G$.

Let $u=|U|$, $u_0=|U_0|$, and $w=|W|$.

\medskip

\noindent
\textbf{1.}  
Fix an arbitrary graph $H$ with $V(H)=U$. It suffices to prove that
\begin{equation}\label{eq:HU}
 \cprobb{\rg|_{U}=H}{\mathbf{U}=U}=(1+o(1))\,\prob{\mathbf{G}_U=H}  
\end{equation}
where the infinitesimal $o(1)$ is the same for all $H$.

Consider independent random graphs $\bG_U$, $\bG_W$, and $\bG_{U\times W}$
and note that $\rg$ has the same distribution as the union $\bG_U\cup\bG_{U\times W}\cup\bG_W$.
We begin with estimation of the probability
\begin{eqnarray}
&&\cprobb{\mathbf{U}=U}{\rg|_{U}=H}=\cprob{U(\bG_U\cup\bG_{U\times W}\cup\bG_W)=U}{\bG_U=H}\label{eq:prob-H}\\
&&=\mathsf{P}[\forall x\in U\;\deg_{\bG_{U\times W}}(x)\equiv-\deg_H(x)\Mod{k},\nonumber\\
&&\mbox{}\hspace{14mm}\forall y\in W\;\deg_{\bG_W}(y)\not\equiv-\deg_{\bG_{U\times W}}(y)\Mod{k}].\nonumber
\end{eqnarray}
In the case that $\deg_{\bG_{U\times W}}(x)\equiv-\deg_H(x)\Mod{k}$ for every $x\in U$,
let us say that $\bG_{U\times W}$ is $H$-compatible. By Lemma \ref{lem:degrees_divisible},
this event happens with probability $k^{-u}(1+o(1/u))^u=k^{-u}(1+o(1))$.
Conditioned on the $H$-compatibility of $\bG_{U\times W}$, we have 
$$\deg_{\bG_W}(y)\not\equiv-\deg_{\bG_{U\times W}}(y)\Mod{k}$$
for all $y\in W$ with probability $(k-1)^wk^{-w}(1+o(1))$, which follows from Lemma \ref{lm:graphs_division}.
Therefore, the probability in \refeq{prob-H} is equal to $(k-1)^wk^{-n}(1+o(1))$.

Now, we have
$$
\probb{\rg|_{U}=H,\;\mathbf{U}=U}=\cprobb{\mathbf{U}=U}{\rg|_{U}=H}\probb{\rg|_{U}=H}=\frac{(k-1)^w}{k^n}\,(1+o(1))\probb{\bG_U=H}
$$
and also
$$
\prob{\mathbf{U}=U}=\sum_H\cprobb{\mathbf{U}=U}{\rg|_{U}=H}\prob{\rg|_{U}=H}=\frac{(k-1)^w}{k^n}\,(1+o(1)).
$$
Combining these two estimates, we immediately obtain the desired equality~\refeq{HU}.

\medskip

\noindent
\textbf{2.} 
We use the same proof strategy as in part 1.  
Fix an arbitrary graph $H$ with $V(H)=W\cup R$. It suffices to prove that
\begin{equation}\label{eq:HRW}
 \cprobb{\rg|_{W\cup R}=H}{\vec{\mathbf{U}}={\vec U}} = 
 (1+o(1))\,\probb{\mathbf{G}_{W\cup R}=H},
\end{equation}
where the infinitesimal $o(1)$ is the same for all $H$.

We represent $\rg$ as the union of three independent random graphs 
$$
\rg=\bG_{W\cup R}\cup\bG_{(W\cup R)\times U_0}\cup\bG_{U_0}
$$
and first estimate the probability $\cprob{\vec{\mathbf{U}}={\vec U}}{\bG_{W\cup R}=H}$.

We can assume that $\rg$ is exposed in three steps: first $\bG_{W\cup R}$ (which is supposed to be $H$),
then $\bG_{(W\cup R)\times U_0}$, and finally $\bG_{U_0}$.
Note that after the 2nd step, all neighbors of each vertex $x\in R$ and each vertex $y\in W$ are known
and we can now check whether or not $x\in U_r(\rg)$ for $r=1,\ldots,10$ and whether or not $y\in W(\rg)$.
If this agrees with the condition $\vec{\mathbf{U}}={\vec U}$, we say that $\bG_{(W\cup R)\times U_0}$ is $H$-compatible.
More specifically, $H$-compatibility means that 
\begin{equation}\label{eq:y}
\deg_{\bG_{(W\cup R)\times U_0}}(y)\not\equiv-\deg_H(y)\Mod{k}  
\end{equation}
for every $y\in W$ and that
\begin{equation}\label{eq:x}
\deg_{\bG_{(W\cup R)\times U_0}}(x)\equiv-\deg_H(x)\Mod{k}\text{ and }\deg_{\bG_{(W\cup R)\times U_0}}(x)\equiv r-\deg_{H|_R}(x)\Mod{11}.
\end{equation}
for every $r=1,\ldots,10$ and $x\in U_r$.
Applying Lemma \ref{lem:degrees_divisible}, we see that Condition \refeq{y} holds for all $y\in W$
with probability $\of{1-\frac1k(1+o(1/u_0))}^w=(1-\frac1k)^w(1+o(1))$.
Since $k$ and $11$ are coprime, the Chinese remainder theorem implies that Condition \refeq{x}
is equivalent to the relation
$$
\deg_{\bG_{(W\cup R)\times U_0}}(x)\equiv r_x\Mod{11k}
$$
for some integers $r_x$'s. By Lemma \ref{lem:degrees_divisible},
this is fulfilled for all $x\in R$ with probability $(11k)^{-u+u_0}(1+o(1/u_0))^{u-u_0}=(11k)^{-u+u_0}(1+o(1))$.
Thus, $\bG_{(W\cup R)\times U_0}$ is $H$-compatible with probability $(1-1/k)^w(11k)^{-u+u_0}(1+o(1))$.

Now, assume that $\bG_{(W\cup R)\times U_0}$ is $H$-compatible. After exposing $\bG_{U_0}$, 
we can decide whether $\vec{\mathbf{U}}={\vec U}$ by checking whether $z\in U_0(\rg)$ for every $z\in U_0$.
Under the condition $\bG_{W\cup R}\cup\bG_{(W\cup R)\times U_0}=H'$,
where $H'$ is fixed so that $H'|_{W\cup R}=H$, we have to check whether or not 
$$
\deg_{\bG_{U_0}}(z)\equiv-\deg_{H'}(z)\Mod{k}\text{ and }\deg_{\bG_{U_0}}(z)\equiv -\deg_{H'|_R}(x)\Mod{11},
$$
which, by the Chinese remainder theorem, is equivalent to
$$
\deg_{\bG_{U_0}}(z)\equiv r_z\Mod{11k}
$$
for some integers $r_z$'s. By Lemma \ref{lm:graphs_division}, this relation holds true for all $z\in U_0$
with probability $(11k)^{-u_0}(1+o(1))$.

Putting it all together, we conclude that
$$
\cprob{\vec{\mathbf{U}}={\vec U}}{\rg|_{W\cup R}=H}=\of{1-\frac1k}^w\of{\frac1{11k}}^u(1+o(1)).
$$
From this we derive
\begin{multline*}
\probb{\rg|_{W\cup R}=H,\;\vec{\mathbf{U}}={\vec U}}=\cprobb{\vec{\mathbf{U}}={\vec U}}{\rg|_{W\cup R}=H}\probb{\rg|_{W\cup R}=H}\\
=\of{1-\frac1k}^w\of{\frac1{11k}}^u(1+o(1))\,\probb{\bG_{W\cup R}=H}  
\end{multline*}
and
$$
\prob{\vec{\mathbf{U}}={\vec U}}=\sum_H\cprobb{\vec{\mathbf{U}}={\vec U}}{\rg|_{W\cup R}=H}\prob{\rg|_{W\cup R}=H}
=\of{1-\frac1k}^w\of{\frac1{11k}}^u(1+o(1)).
$$
Combining the two estimates, we arrive at the desired equality~\refeq{HRW}.
\end{proof}

\subsection{Completing the proof of Theorem~\ref{thm:unlab}}
\label{sec:th1_proof_finish}

We split the proof in three parts according to the outline in Subsection~\ref{ss:prop}.
 
\subsubsection{The probability of $Q_k$ is small}
\label{sec:th1_proof_finish_probQ_small}

Our nearest goal is to show that $\rg\in B$ whp.
The following fact is the first step in this direction.
We call a set $U\subset[n]$ \emph{standard} if
$$
\frac{n}{k}-\sqrt{n}\ln n \le |U|\le \frac{n}{k}+\sqrt{n}\ln n.
$$
Recall that $\mathbf{U}$ consists of all vertices in $\rg$ with degrees divisible by~$k$.

\begin{claim}\label{cl:U}
$\mathbf{U}$ is standard with probability $1-o(1/n^3)$.  
\end{claim}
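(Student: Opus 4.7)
The plan is to bound $\prob{\mathbf{U}\text{ is not standard}}$ by combining the McKay--Wormald approximation (Theorem~\ref{th:MW}), which replaces the correlated vertex degrees by independent binomials, with Chernoff's bound (Lemma~\ref{Chernoff}). Let $S_n\subset\{0,1,\ldots,n-1\}^n$ be the set of degree sequences for which the number of entries divisible by $k$ lies outside the standard interval $[n/k-\sqrt{n}\ln n,\,n/k+\sqrt{n}\ln n]$. The failure of $\mathbf{U}$ to be standard is precisely the event $\{(\deg_{\rg}(1),\ldots,\deg_{\rg}(n))\in S_n\}$.

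First, I would apply Theorem~\ref{th:MW} with $\alpha=4$: this reduces the question to bounding $\cprobb{(\eta_1(\mathbf{p}),\ldots,\eta_n(\mathbf{p}))\in S_n}{\sum_i\eta_i(\mathbf{p})\text{ is even}}$, at the cost of an additive error $o(n^{-4})$ that is already within the target $o(1/n^3)$. The parity conditioning event has probability $1/2+o(1)$ (given $\mathbf{p}$, the sum $\sum_i\eta_i(\mathbf{p})$ is $\mathrm{Bin}(n(n-1),\mathbf{p})$, and Lemma~\ref{lem:degrees_divisible} with $m=2$ yields almost uniform parity), so this conditioning costs only a bounded factor. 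Next, I would condition further on $\mathbf{p}$. Since $\mathbf{p}$ is truncated normal with mean $1/2$ and standard deviation $\Theta(1/n)$, the atypical event $|\mathbf{p}-1/2|>n^{-1/3}$ has Gaussian-tail probability smaller than any $1/n^c$ and is harmless. For each $x=1/2+o(1)$, the $\eta_v(x)$'s are i.i.d.\ $\mathrm{Bin}(n-1,x)$, so Lemma~\ref{lem:degrees_divisible} gives $p_k(x):=\prob{\eta_1(x)\equiv0\Mod{k}}=1/k+O(e^{-c'n})$; hence the indicators $J_v=\mathbf{1}[\eta_v(x)\equiv0\Mod{k}]$ are i.i.d.\ Bernoulli$(p_k(x))$ and $N(x):=\sum_v J_v\sim\mathrm{Bin}(n,p_k(x))$ has mean $n/k+o(1)$. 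Chernoff's bound with $t=\sqrt{n}\ln n$ then yields
\[
\cprobb{|N(x)-n/k|>\sqrt{n}\ln n}{\mathbf{p}=x}\le 2\exp\!\bigl(-\Omega(\ln^2 n)\bigr),
\]
which is $o(1/n^c)$ for every constant $c$, uniformly in the typical range of $x$. Averaging over $\mathbf{p}$ and adding back the $o(n^{-4})$ error delivers the claimed bound.

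I do not anticipate a genuine obstacle: the argument is a routine synthesis of Theorem~\ref{th:MW}, Lemma~\ref{lem:degrees_divisible}, and Chernoff. The only care needed is in the double conditioning (on the parity of the degree sum and on $\mathbf{p}$), but both conditioning events have probability bounded away from $0$, so the super-polynomial tail obtained for each fixed typical value of $\mathbf{p}$ survives intact up to a constant factor, comfortably absorbed by $o(1/n^3)$.
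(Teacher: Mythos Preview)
Your proposal is correct and follows essentially the same route as the paper's proof: apply Theorem~\ref{th:MW} to pass to independent binomials, dispose of the parity conditioning (the paper does this via Herschkorn's formula to get an exact lower bound of $1/2$, you via Lemma~\ref{lem:degrees_divisible} with $m=2$), restrict $\mathbf{p}$ to a neighbourhood of $1/2$ using the Gaussian tail, invoke Lemma~\ref{lem:degrees_divisible} to see that each indicator is Bernoulli with parameter $1/k+o(1)$, and finish with Chernoff. The only cosmetic differences are your choice of window $|\mathbf{p}-1/2|\le n^{-1/3}$ versus the paper's $2\ln n/n$, and your use of $\alpha=4$ in Theorem~\ref{th:MW}; neither affects the argument.
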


\begin{proof}
For a vertex $i\in[n]$, let $\xi_i$ be the indicator variable for the event that $\deg_{\rg}(i)$ is divisible by $k$.
Note that $|\mathbf{U}|=\xi_1+\ldots+\xi_n$. 
Let a random variable $\mathbf{p}$ and $n$ random variables $\eta_1(x),\ldots,\eta_n(x)$, for each $x\in(0,1)$,
be as in Theorem~\ref{th:MW}.
Define $X_i(x)$ to be the indicator variable for the event that $\eta_i(x)$ is divisible by $k$
and set $X(x)=X_1(x)+\ldots+X_n(x)$. Since this sum is a counterpart of the sum $\xi_1+\ldots+\xi_n$,
we conclude by Theorem~\ref{th:MW} that
\begin{multline}
\probb{\left||\mathbf{U}|-\frac{n}{k}\right|>\sqrt{n}\ln n} \le \\
(1+o(1))\cprobb{\left|X(\mathbf{p})-\frac{n}{k}\right|>
\sqrt{n}\ln n }{ \sum_{i=1}^{n}\eta_i(\mathbf{p})\text{ is even} }+o\left(\frac{1}{n^3}\right).\label{eq:to-MW}
\end{multline}
It suffices, therefore, to prove that the conditional probability above is~$o(1/n^3)$.

Let $f$ be the density of $\mathbf{p}$, and $\Phi(x)=\int_{-\infty}^x\frac{1}{\sqrt{2\pi}}e^{-t^2/2}dt$ be the cumulative 
distribution function of a standard normal random variable. Then
\begin{multline}
\probb{|\mathbf{p}-1/2|>\frac{2\ln n}{n}}=\int_0^{1/2-\frac{2\ln n}{n}}f(x)dx+\int_{1/2+\frac{2\ln n}{n}}^1f(x)dx\\
=\frac{1}{1 - 2\Phi(-\sqrt{n(n-1)})}\left[\int_{-\sqrt{n(n-1)}}^{-4\ln n\sqrt{1-1/n}}+\int_{4\ln n\sqrt{1-1/n}}^{\sqrt{n(n-1)}}\right]
\frac{1}{\sqrt{2\pi}}e^{-x^2/2}dx=o\left(\frac{1}{n^3}\right)
\label{eq:p_is_near_1/2} 
\end{multline}
since $1-\Phi(x)\sim \frac{1}{\sqrt{2\pi} x}e^{-x^2/2}$ as $x\to\infty$; see, e.g.,~\cite{Feller}. 

Since $\sum_{i=1}^{n}\eta_i(x)$ has binomial distribution with parameters $n(n-1)$ and $x$, 
part 1 of Lemma \ref{lem:character} implies that
\begin{equation}
 \probb{\sum_{i=1}^{n}\eta_i(x)\text{ is even}}=\frac{1}{2}+\frac{1}{2}(1-2x)^{n(n-1)}\geq\frac{1}{2}
\label{sum_is_even}
\end{equation}
for every $x\in(0,1)$. It follows that
$$
\cprobb{\left|X(\mathbf{p})-\frac{n}{k}\right| > \sqrt{n}\ln n }{ \sum_{i=1}^{n}\eta_i(\mathbf{p})\text{ is even} } \le
2\,\probb{\left|X(\mathbf{p})-\frac{n}{k}\right| > \sqrt{n}\ln n }.
$$
Furthermore,
\begin{multline} 
\probb{\left|X(\mathbf{p})-\frac{n}{k}\right|>\sqrt{n}\ln n}=
  \int_0^1\probb{\left|X(x)-\frac{n}{k}\right|>\sqrt{n}\ln n }f(x)\,dx\\
\le
\int_{1/2-2\ln n/n}^{1/2+2\ln n/n}\probb{\left|X(x)-\frac{n}{k}\right|>\sqrt{n}\ln n }f(x)dx
+\probb{|\mathbf{p}-1/2|>\frac{2\ln n}{n}}\\
\le
\max_{|x-1/2|\leq 2\ln n/n}\probb{X(x)>\frac{n}{k}+\sqrt{n}\ln n}+
\max_{|x-1/2|\leq 2\ln n/n}\probb{X(x)<\frac{n}{k}-\sqrt{n}\ln n}+o\left(\frac{1}{n^3}\right),\label{eq:Xx}
\end{multline} 
where we make use of Estimate \refeq{p_is_near_1/2}.
Lemma~\ref{lem:degrees_divisible} shows that if $|x-1/2|\leq\frac{2\ln n}{n}$,
then we have $\prob{X_i(x)=1}=\frac{1}{k}+o(1/n)$.  
Applying the Chernoff bound, we see that both maximums in \refeq{Xx} are $o(1/n^3)$.
Putting it together, we conclude that
$$
\cprobb{\left|X(\mathbf{p})-\frac{n}{k}\right| > \sqrt{n}\ln n }{ \sum_{i=1}^{n}\eta_i(x)\text{ is even} }=o\left(\frac{1}{n^3}\right),
$$
and the claim follows from~\refeq{to-MW}.
\end{proof}

Fix a standard $U\subset[n]$ and
let ${\vec U}=(U_0,\ldots,U_{10})$ be a partition of $U$. We call ${\vec U}$ \emph{standard} if
$$
\frac{|U|}{11}-\sqrt{n}\ln n \le |U_r|\le \frac{|U|}{11}+\sqrt{n}\ln n
$$
for every $r=0,\ldots,10$.
Using virtually the same argument as in the proof of Claim \ref{cl:U}, we have
$\prob{|U_r(\bG_U)-|U|/11|>\sqrt{n}\ln n} =o(1/n^3)$ for every $r=0,\ldots,10$.
In other words, $(U_0(\bG_U),\ldots,U_{10}(\bG_U))$ is standard whp.
Using part 1 of Lemma \ref{lm:asympt_uniform}, we conclude that
\begin{equation}
\cprobb{\vec{\mathbf{U}}\text{ is standard}}{\mathbf{U}=U}=
1-o\left(\frac{1}{n^3}\right).
\label{eq:U_i_concentration}
\end{equation}

Assume that ${\vec U}$ is standard and consider a random graph $\bG_{W\cup R}$ where, as usually, $W=[n]\setminus U$ and
$R=U\setminus U_0$. As follows from Estimate (\ref{binomial_bound}),
two fixed vertices $u,v\in W$ of $\bG_{W\cup R}$ have equally many neighbors in $U_r$ for every $r=1,\ldots,10$
with probability $O(1/n^5)$. By the union bound, the partition $R=U_1\cup\ldots\cup U_{10}$
does not resolve $W$ in $\bG_{W\cup R}$ with probability $O(1/n^3)$. By part 2 of Lemma~\ref{lm:asympt_uniform}
we conclude that, conditioned on $\vec{\mathbf{U}}={\vec U}$, $R(\rg)$ does not resolve $W(\rg)$ with asymptotically the same probability, that is,
\begin{equation}
\cprobb{\overline{B} }{ \vec{\mathbf{U}}={\vec U} } = O\of{\frac1{n^3}},
\label{eq:distinct_degree_vectors}
\end{equation}
where $\overline{B}$ denotes the event $\rg\notin B$.

We now can see that the event $B$ holds with high probability. Indeed,
taking into account Estimate \refeq{distinct_degree_vectors}, Claim \ref{cl:U}, and Estimate \refeq{U_i_concentration},
we have
\begin{multline*}
\probb{\overline{B}}\leq
\sum_{U\text{ and }{\vec U}\text{ standard}}
\cprobb{\overline{B} }{ \vec{\mathbf{U}}={\vec U} }\prob{\vec{\mathbf{U}}={\vec U}} + 
    \probb{\mathbf{U}\text{ is not standard}}\\
+\sum_{U\text{ standard}}
\cprobb{\vec{\mathbf{U}}\text{ is not standard} }{ \mathbf{U}=U }
         \prob{\mathbf{U}=U}=O\left(\frac{1}{n^3}\right).
\end{multline*}

With this upper bound for the probability of $\overline{B}$, we are ready to estimate the probability
of $Q_k$ from above. Note that
$$
\prob{Q_k} = \prob{Q_k\cap B} + \prob{\overline{B}} = \prob{Q_k\cap B}+o(1/n^2).
$$
By Claim \ref{cl:U} and Estimate \refeq{U_i_concentration},
$$
\prob{Q_k\cap B}=\sum_{U\text{ and }{\vec U}\text{ standard}}
\cprob{Q_k\cap B}{\vec{\mathbf{U}}={\vec U}}\,\prob{\vec{\mathbf{U}}={\vec U}}+o\left(\frac{1}{n^3}\right).
$$
By part 2 of Lemma~\ref{lm:asympt_uniform}, the probability $\cprob{Q_k\cap B}{\vec{\mathbf{U}}={\vec U}}$
is asymptotically the same as the probability that in the uniformly distributed random graph $\bG_{W\cup R}$,
simultaneously,
\begin{enumerate}[(1)]
\item 
$R=U_1\cup\ldots\cup U_{10}$ resolves $W=[n]\setminus U$, and
\item 
the canonically relabeled subgraph $\bG_{W\cup R}|_W$ belongs to $Q$.
\end{enumerate}
Assume that the former condition is fulfilled. Since the random graphs $\bG_{W\cup R}|_W=\bG_W$
and $\bG_{W\times R}$ are independent, the latter condition has the same probability as the event
$\bG_{|W|}\in Q$, which does not exceed $\frac{2+1/k+o(1)}{(n-n/k-\sqrt{n}\ln n)^2}$.
We conclude that
\begin{equation}\label{eq:Q1}
 \prob{\rg\in Q_k} \le \frac{(2+1/k)(1+o(1))}{(1-1/k)^2n^2} + o\left(\frac{1}{n^2}\right)\le 
\of{1+\frac3k}\,\frac2{n^2}, 
\end{equation}
where the last inequality is fulfilled for all sufficiently large~$n$.

\subsubsection{$Q_k$ is almost anti-stochastic}

For a graph $G$ and two distinct vertices $u,v\in V(G)$, let $G(u,v)$ denote the graph obtained from $G$ 
by changing the adjacency between $u$ and $v$. If $u=v$, we set $G(u,v)=G$.

Let $A$ denote the event
$$
\exists\, u,v\in\mathbf{W}\quad\rg(u,v)\in Q_k.
$$
It is enough to prove that $A$ has high probability.

In what follows, for a partition ${\vec U}=(U_0,\ldots,U_{10})$ of $U$, the event $\vec{\mathbf{U}}={\vec U}$
will for brevity be denoted by $C_{\vec U}$. With some abuse of notation, we write $A$ and $C_{\vec U}$
also to denote the corresponding \emph{sets} of graphs. We have
\begin{multline*}
 \prob{A}  \geq \prob{A\cap B}\\
 \geq\sum_{U\text{ and }{\vec U}\text{ standard}}
 \cprob{A\cap B}{C_{\vec U}}\,\prob{C_{\vec U}} 
=\sum_{U\text{ and }{\vec U}\text{ standard}}
 \cprob{A}{B\cap C_{\vec U}}\,\cprob{B}{C_{\vec U}}\,\prob{C_{\vec U}}.
\end{multline*}
We can bound $\cprob{B}{C_{\vec U}}$ from below according to Estimate \refeq{distinct_degree_vectors}.
The probability $\cprob{A}{B\cap C_{\vec U}}$ can also be bounded according to Claim \ref{cl:ABC} below.
We, therefore, obtain
$$
 \prob{A}  \ge 
\of{1-o\of{\frac1{n^2}}}
\of{1-\frac{4}{k}+\frac{4}{k^2}}(1-o(1))
\sum_{U\text{ and }{\vec U}\text{ standard}}\prob{C_{\vec U}}.
$$
Note that
\begin{multline*}
\sum_{U\text{ and }{\vec U}\text{ standard}}\prob{C_{\vec U}}=\prob{\mathbf{U}\text{ and }{\vec{\mathbf{U}}}\text{ are standard}}\\
=
\cprob{{\vec{\mathbf{U}}}\text{ is standard}}{\mathbf{U}\text{ is standard}}\,\prob{\mathbf{U}\text{ is standard}}
\ge 1-o\of{\frac1{n^3}},
\end{multline*}
where the last inequality follows from Estimate \refeq{U_i_concentration} and Claim \ref{cl:U}.
We conclude that 
\begin{equation}\label{eq:Q2}
 \prob{A}  \ge 1-\frac{4}{k}\text{ for sufficiently large }n.  
\end{equation}

It remains to prove the estimate of $\cprob{A}{B\cap C_{\vec U}}$ we used above.

\begin{claim}\label{cl:ABC}
If a set $U\subset[n]$ and its partition ${\vec U}=(U_0,\ldots,U_{10})$ are standard, then
$$
\cprob{A}{B\cap C_{\vec U}}\ge\of{1-\frac{4}{k}+\frac{4}{k^2}}(1-o(1)).
$$
\end{claim}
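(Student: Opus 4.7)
The plan is to exhibit an explicit adversary based on the covering code $Q$ and estimate its success probability. Given $\rg \in B \cap C_{\vec U}$, the adversary computes the canonical labeling of $W = \mathbf{W}$ (available since $\rg \in B$) and uses $Q$ to find a (possibly null) pair $\{u^*, v^*\} \subseteq W$ such that the canonically labeled $\rg|_W \oplus \{u^*, v^*\}$ lies in $Q$; then it flips this pair. The first observation is that if both $u^*$ and $v^*$ remain in $W$ after the flip, then $\rg(u^*, v^*) \in Q_k$: since the flip is internal to $W$, no $U$-degree changes, so $U$ and its partition $\vec U$ are preserved, as are the vectors $\vec d(\cdot)$ determining the canonical labeling of $W$. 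Consequently $\rg(u^*, v^*) \in B$, its canonically labeled $W$-subgraph equals $\rg|_W \oplus \{u^*, v^*\} \in Q$, and we are done. Letting $E$ denote the event that both $u^*, v^*$ remain in $W$, we have $\cprob{A}{B \cap C_{\vec U}} \ge \cprob{E}{B \cap C_{\vec U}}$; the task reduces to showing $\cprob{E}{B \cap C_{\vec U}} \ge (1 - 2/k)^2(1 - o(1))$.

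I would condition additionally on $\rg|_{W \cup R} = H$. This determines $\{u^*, v^*\}$ and its edge status in $\rg$, so $E$ becomes the event that each of $\deg_\rg(u^*)$ and $\deg_\rg(v^*)$ avoids one specific residue modulo $k$ --- namely $+1$ if $\{u^*, v^*\}$ is an edge of $H$ (the flip being a deletion) and $-1$ otherwise. To analyze this, fix also $\rg|_{W \times U_0} = h$: following the pattern of the proof of Lemma~\ref{lm:asympt_uniform}(2), the conditional probability of $C_{\vec U}$ factorizes as the indicator that each row sum of $h$ avoids the residue $-\deg_H(w) \bmod k$ (the constraint $w \in W$) times a factor arising from the $R$- and $U_0$-degree constraints; by Lemma~\ref{lm:graphs_division} this factor is $(11k)^{-(|R|+|U_0|)}(1 + O(e^{-cn}))$, essentially independent of $h$. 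It follows that, conditional on $C_{\vec U}$ and $\rg|_{W \cup R} = H$, the rows of $\rg|_{W \times U_0}$ are (up to an $O(e^{-cn})$ error) mutually independent, each row being uniform on $\{0,1\}^{|U_0|}$ subject to its prescribed row-sum avoidance. Lemma~\ref{lem:degrees_divisible} then makes $\deg_{\rg|_{W \times U_0}}(w) \bmod k$ near-uniform on $\{0,\ldots,k-1\} \setminus \{-\deg_H(w)\bmod k\}$, so $\deg_\rg(w) \equiv \deg_H(w) + \deg_{\rg|_{W \times U_0}}(w) \pmod k$ is near-uniform on $\{1,\ldots,k-1\}$ with the residues at different vertices independent. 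Hence $\cprob{E}{B \cap C_{\vec U}, \rg|_{W \cup R} = H} \ge \bigl(\tfrac{k-2}{k-1}\bigr)^2 (1-o(1)) \ge (1 - 2/k)^2(1 - o(1))$, and averaging over $H$ while absorbing the $o(1/n^3)$ loss from conditioning on $B$ via Estimate~\refeq{distinct_degree_vectors} finishes the argument.

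The main obstacle is the careful bookkeeping behind the factorization of $\cprob{C_{\vec U}, \rg|_{W \times U_0} = h}{\rg|_{W \cup R} = H}$: the $R$- and $U_0$-degree constraints couple the rows of $\rg|_{W \times U_0}$ via column sums, and one must verify, through the characteristic-function machinery of Lemma~\ref{lem:character} and the Chinese remainder theorem (using $\gcd(k,11)=1$), that this coupling contributes only an exponentially small correction uniform in $h$. This is essentially the same Fourier-analytic computation as the estimation of $\cprob{\vec{\mathbf{U}}=\vec U}{\rg|_{W\cup R}=H}$ in the proof of Lemma~\ref{lm:asympt_uniform}(2), but performed conditionally on $h$ so as to isolate the per-row avoidance factor.
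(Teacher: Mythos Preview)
Your approach is correct and takes a genuinely different, somewhat cleaner route than the paper. The paper fixes $\alpha = G_R \cup G_{R\times W} \cup G_{R\times U_0}$, enumerates triples $(H,u,v)$ with the relabeled $H$ in $Q$, and compares the sets $\mathcal{H}_\alpha(H,u,v)$ and $\mathcal{H}_\alpha^+(H,u,v)$; because these sets overlap, the paper must invoke the \emph{asymptotic optimality} of $Q$ to pass from $\sum|\mathcal{H}_\alpha|$ to $|\bigcup\mathcal{H}_\alpha|$. You instead fix a deterministic nearest-codeword adversary and condition on $\rg|_{W\cup R}=H$, which pins down $\{u^*,v^*\}$ and sidesteps all double counting; only the covering property of $Q$ is used. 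The factorization you describe---separating the row-avoidance indicator for $\rg|_{W\times U_0}$ from the $(11k)^{-(|R|+|U_0|)}$ factor coming from the $R$- and $U_0$-constraints---is exactly the computation behind the paper's estimates \refeq{proba-H} and \refeq{proba-H+}, reorganized so that the quotient $(1-2/k)^2/(1-1/k)^2 = ((k-2)/(k-1))^2$ appears directly as a conditional probability rather than a ratio of counts. Two minor remarks: the $R$-constraint factor is handled by Lemma~\ref{lem:degrees_divisible} (the $R$-side degrees in $\rg|_{R\times U_0}$ are independent binomials) rather than Lemma~\ref{lm:graphs_division}; and your final step ``absorbing the $o(1/n^3)$ loss from conditioning on $B$'' is unnecessary, since given $C_{\vec U}$ the event $B$ is determined by $\rg|_{W\times R}$ and hence by $H$---you simply average over resolving $H$.
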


\begin{proof}
As usually, let $W=[n]\setminus U$ and $R=U\setminus U_0$.
A graph $G$ with $V(G)=[n]$ is the union of six edge-disjoint subgraphs, namely
$$
G=G_W\cup G_R\cup G_{U_0}\cup G_{R\times W}\cup G_{R\times U_0}\cup G_{W\times U_0},
$$
where $G_X$ is a graph on the vertex set $X$ and $G_{X\times Y}$ is a bipartite graph with vertex classes $X$ and $Y$.
Suppose that the subgraph $G_R\cup G_{R\times W}\cup G_{R\times U_0}$ is fixed. We denote this assignment by $\alpha$
and write $G_\alpha=G_R\cup G_{R\times W}\cup G_{R\times U_0}$ for brevity. Moreover, in what follows we consider only
those assignments $\alpha$ for which the following three conditions are met:
\begin{itemize}
\item 
$\deg_{G_\alpha}(z)\equiv 0\Mod{k}$ for every $z\in R$.
\item 
$\deg_{G_\alpha|_U}(z)\equiv r\Mod{11}$ for every $r=1,\ldots,10$ and $z\in U_r$.
\item 
The partition $R=U_1\cup\ldots\cup U_{10}$ resolves $W$ in $G_\alpha$
and, hence, defines a relabeling of~$W$.
\end{itemize}
Obviously, these conditions will hold true for any extension of $G_\alpha$ by $G_W\cup G_{W\times U_0}\cup G_{U_0}$.

Let $H$ be a graph with $V(H)=W$. For a fixed $\alpha$, we consider only those $H$ for which
the graph $H\cup G_\alpha$ has the following property: 
the relabeled subgraph $H$ of $H\cup G_\alpha$ belongs to $Q$.
For $u,v\in V(H)$, we define two classes of graphs on $[n]$:
\begin{itemize}
\item 
$\mathcal{H}_\alpha(H,u,v)$ consists of all $G$ such that $G_R\cup G_{R\times W}\cup G_{R\times U_0}=G_\alpha$, $G_W=H(u,v)$, and $G\in C_{\vec U}$.
\item 
$\mathcal{H}_\alpha^+(H,u,v)$ consists of all $G$ in $\mathcal{H}_\alpha(H,u,v)$
with additional property that also $G(u,v)\in C_{\vec U}$.
\end{itemize}
Since $Q$ is defined from a covering code, 
\begin{equation}\label{eq:BCU}
B\cap C_{\vec U}=\bigcup_\alpha\bigcup_{H,u,v}\mathcal{H}_\alpha(H,u,v).
\end{equation}
Here and below, the union is taken over all $\alpha$ and $H$ possessing the properties specified  above.
It is also important to note that
\begin{equation}\label{eq:ABCU}
A\cap B\cap C_{\vec U}\supseteq\bigcup_\alpha\bigcup_{H,u,v}\mathcal{H}_\alpha^+(H,u,v).
\end{equation}
Therefore, the claim will be proved if we show that, for each $\alpha$,
the set $\bigcup_{H,u,v}\mathcal{H}_\alpha^+(H,u,v)$ is a large fraction of the set $\bigcup_{H,u,v}\mathcal{H}_\alpha(H,u,v)$.

To this end, we first prove that
\begin{equation}\label{eq:HH}
  |\mathcal{H}_\alpha^+(H,u,v)|\geq\frac{(1-2/k)^2(1+o(1))}{(1-1/k)^2}\,|\mathcal{H}_\alpha(H,u,v)|,
\end{equation}
where the infinitesimal $o(1)$ is the same for all $H$, $u$ and $v$.
This is obvious if $u=v$. Suppose that $u\ne v$ and
consider random graphs $\bG'=\bG_{W\times U_0}\cup H(u,v)\cup G_\alpha$ and $\bG''=\bG'\cup\bG_{U_0}$,
where $\bG_{W\times U_0}$ and $\bG_{U_0}$ are independent. Note that $\bG''\in\mathcal{H}_\alpha(H,u,v)$
exactly when
\begin{equation}\label{eq:yyy}
\deg_{\bG_{W\times U_0}}(y)\not\equiv-\deg_{H(u,v)\cup G_\alpha}(y)\Mod{k}
\end{equation}
for every $y\in W$ and
\begin{equation}\label{eq:xxx}
\deg_{\bG_{U_0}}(x)\equiv-\deg_{\bG'}(x)\Mod{k}\text{ and }\deg_{\bG_{U_0}}(x)\equiv -\deg_{G_\alpha}(x)\Mod{11}.
\end{equation}
for every $x\in U_0$.
By Lemma \ref{lem:degrees_divisible}, Condition \refeq{yyy} is fulfilled for all $y\in W$
with probability $\of{1-\frac1k(1+o(1/u_0))}^w=(1-\frac1k)^w(1+o(1))$, where $u_0=|U_0|$ and $w=|W|$.
Since $k$ and $11$ are coprime, the Chinese remainder theorem allows us to express Condition \refeq{xxx} more concisely.
Under the condition that $\bG'=G'$ for a fixed graph $G'$, Condition \refeq{xxx} is equivalent to the relation
$$
\deg_{\bG_{U_0}}(x)\equiv r_x\Mod{11k}
$$
for some integers $r_x$'s. By Lemma \ref{lm:graphs_division}, this is fulfilled for all $x\in U_0$ with probability
$(11k)^{-u_0}(1+o(1/u_0))^{u_0}=(11k)^{-u_0}(1+o(1))$. It follows that
\begin{equation}\label{eq:proba-H}
 \prob{\bG''\in\mathcal{H}_\alpha(H,u,v)}=\of{1-\frac1k}^w(11k)^{-u_0}(1+o(1)). 
\end{equation}
Similarly, we have
\begin{multline}
\prob{\bG''\in\mathcal{H}_\alpha^+(H,u,v)}=
\prob{\bG'',\bG''(u,v)\in\mathcal{H}_\alpha(H,u,v)}\\=
\of{1-\frac2k}^2\of{1-\frac1k}^{w-2}(11k)^{-u_0}(1+o(1)).\label{eq:proba-H+}
\end{multline}
The only difference in the derivation of this estimate is that, in addition to \refeq{yyy},
for $y\in\{u,v\}$ we also have to take into account the condition
$$
\deg_{\bG_{W\times U_0}}(y)\not\equiv-\deg_{H\cup G_\alpha}(y)\Mod{k}.
$$
Inequality \refeq{HH} now follows by considering the quotient of the probabilities in \refeq{proba-H+} and~\refeq{proba-H}.

Since $Q$ is asymptotically optimal and Estimate \refeq{proba-H} is uniform over $H$, $u$, and $v$, we have
$$
 \sum_{H,u,v}|\mathcal{H}_\alpha(H,u,v)|=(1+o(1))\left|\bigcup_{H,u,v}\mathcal{H}_\alpha(H,u,v)\right|,
$$ 
where here and below the sum goes over unordered pairs $u,v$.
Using this along with Inequality~\refeq{HH}, we obtain
\begin{align*}
 \left|\bigcup_{H,u,v}\mathcal{H}_\alpha^+(H,u,v)\right| & 
\geq\left|\bigcup_{H,u,v}\mathcal{H}_\alpha(H,u,v)\right|-\sum_{H,u,v}|\mathcal{H}_\alpha(H,u,v)\setminus\mathcal{H}_\alpha^+(H,u,v)|\\
& =\left|\bigcup_{H,u,v}\mathcal{H}_\alpha(H,u,v)\right|-\sum_{H,u,v}\biggl(|\mathcal{H}_\alpha(H,u,v)|-|\mathcal{H}_\alpha^+(H,u,v)|\biggr)\\
&\geq\left|\bigcup_{H,u,v}\mathcal{H}_\alpha(H,u,v)\right|-
\left(
1-\frac{(1-2/k)^2(1+o(1))}{(1-1/k)^2}
\right)
\sum_{H,u,v}|\mathcal{H}_\alpha(H,u,v)|\\
&\ge\left(1-\frac{4}{k}+\frac{4}{k^2}\right)(1+o(1))\left|\bigcup_{H,u,v}\mathcal{H}_\alpha(H,u,v)\right|.
 \end{align*}

Since $\cprob{A}{B\cap C_{\vec U}}=|A\cap B\cap C_{\vec U}|/|B\cap C_{\vec U}|$,
the claim follows from Equality \refeq{BCU} and Inclusion \refeq{ABCU} by taking into account
that the sets $\mathcal{H}_\alpha(H,u,v)$ for different $\alpha$ are disjoint 
(and the same is hence true also for the sets $\mathcal{H}_\alpha^+(H,u,v)$).
\end{proof}

\subsubsection{Merging all $Q_k$'s together}

It remains to convert the sequence of graph properties $Q_k$ into a single anti-stochastic property $Q^*$. 
Based on Estimates \refeq{Q1} and \refeq{Q2}, for each $Q_k$ we define an integer $N_k$ such that
$N_k>N_{k'}$ if $k'<k$ and the inequalities $\prob{\rg\in Q_k}\le(1+3/k)\,\frac2{n^2}$ and
$\prob{\exists u,v\;\rg(u,v)\in Q_k}  \ge 1-4/k$ are true for all $n\ge N_k$.
Let $k(n)$ be the maximum $k$ such that $N_k\le n$. Define $Q^*$ to be the event that $\rg\in Q_{k(n)}$.
The graph property $Q^*$ is anti-stochastic because $k(n)\to\infty$ as $n\to\infty$.
The proof of Theorem \ref{thm:unlab} is complete.

\begin{remark}\label{rem:it-degrees}
  We define the \emph{$s$-iterated degree} $d_s(x)$ of a vertex $x$ recursively by setting
  $d_0(x)=0$ and $d_s(x)=\Mset{d_{s-1}(y)}{y\in N(x)}$, where $\Mset{}$ denotes a multiset
  and $N(x)$ stands for the neighborhood of $x$. Obviously, $d_1(x)$ represents exactly
  the degree of $x$. Note that if $d_s(x)=d_s(x')$, then $d_t(x)=d_t(x')$ for all $t\le s$.

  We say that graphs $G$ and $G'$ have the
  same $s$-iterated degree sequence if the multisets of their $s$-iterated degrees are equal or,
  equivalently, if there is a bijection $f$ from the vertex set of $G$ onto the vertex set of $G'$
  such that $d_s(x)=d_s(f(x))$ for every vertex $x$ of $G$. We write $G\equiv_sG'$ in this case.
  Note that $\equiv_s$ is an equivalence relation coarser than graph isomorphism. We claim that
  the anti-stochastic property $Q^*$ constructed in the proof of Theorem \ref{thm:unlab} is
  not only isomorphism invariant but even $\equiv_4$-invariant, that is, if $G\equiv_4G'$
  and $G\in Q^*$, then $G'\in Q^*$ too.

  Indeed, let $f$ be a bijection preserving $4$-iterated degrees. Since $f$ preserves vertex degrees,
  we have $U(G')=f(U(G))$ for each parameter $k$. Since $f$ preserves twice-iterated degrees, we have
  $U_r(G')=f(U_r(G))$ for all $0\le r\le 10$. Since $f$ preserves $3$-iterated degrees, we have $G'\notin B$
  whenever $G\notin B$. Suppose that both $G$ and $G'$ possess the property $B$.
  This implies that every vertex in $W(G)$ and $W(G')$ has a unique
  $3$-iterated degree. The restriction of $f$ on $W$ is a bijection from $W(G)$ to $W(G')$ and matches
  the 3-iterated degrees. In other words, $f$ respects the canonical labeling of $G|_{W(G)}$ and $G'|_{W(G')}$
  defined in the proof. Combining this fact with our assumption that $f$ preserves $4$-iterated degrees,
  we conclude that $f$ is an isomorphism from $G|_{W(G)}$ to $G'|_{W(G')}$. Therefore, $G'\in Q^*$
  whenever $G\in Q^*$, as claimed.
\end{remark}

\section{Proof of Theorem \ref{thm:degrees}}

\subsection{Degree sequences}\label{ss:degrees}

Let $G$ be a graph with $V(G)=[n]$.
While the degree sequence of this graph is the \emph{multiset} of its vertex degrees,
we define the \emph{labeled degree sequence} of $G$ as the \emph{sequence} $(\deg_G(1),\ldots,\deg_G(n))$.
In Section \ref{ss:deg-seq-lower} we will need a result of Liebenau and Wormald~\cite{LiebenauWormald} on 
enumeration of graphs with a given labeled degree sequence.
Note that this result appeared earlier in a weaker form in~\cite{McKayW90}, which would also be sufficient for our purposes.

Given a sequence of non-negative integers $\seq{d}=(d_1,\ldots,d_n)$, let $g(\seq{d})$ denote the number 
of graphs on $[n]$ whose labeled degree sequence is $\seq{d}$. Let 
$$
d=\frac{1}{n}\sum_{i=1}^nd_i,\quad
\mu=d/(n-1),\quad\text{and}\quad
\gamma=(n-1)^{-2}\sum_{i=1}^n(d_i-d)^2.
$$

\begin{theorem}[Liebenau and Wormald \cite{LiebenauWormald}]
There is $\varepsilon>0$ such that if 
$$
\max_i|d_i-d|=o\left(n^{\varepsilon}\sqrt{\min\{d,n-d-1\}}\right),\quad n\min\{d,n-d-1\}\to\infty,
$$ 
and $\sum_{i=1}^n d_i$ is even, then
$$
 g(\seq{d})=(\sqrt{2}+o(1))\exp\left(\frac{1}{4}-\frac{\gamma^2}{4\mu^2(1-\mu)^2}\right)\left(\mu^{\mu}(1-\mu)^{1-\mu}\right)^{n(n-1)/2}\prod_{i=1}^n{n-1\choose d_i}.
$$
where the infinitesimal $o(1)$ is the same for all $\seq{d}$.
\label{th:counting_graps_deg_sequence}
\end{theorem}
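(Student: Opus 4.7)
The plan is to establish the formula by the switching method, interpolating from a regular degree sequence to an arbitrary admissible $\seq{d}$. Specifically, I would (i) verify the formula in the regular case $d_i\equiv d$, (ii) compare $g(\seq{d})$ with $g(\seq{d}')$ whenever $\seq{d}'$ differs from $\seq{d}$ in exactly two coordinates, via a single edge-switching, and (iii) telescope these ratios along a path from the regular sequence to $\seq{d}$, reading off the correction factor $\exp(-\gamma^2/(4\mu^2(1-\mu)^2))$ from the accumulated second-order errors.

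For step (i), when $\gamma=0$ the target formula specializes to the known asymptotic for the enumeration of $d$-regular graphs due to McKay and Wormald. Its standard derivation --- via the configuration model in the sparse regime and a complex-analytic saddle-point method in the dense regime --- produces precisely the prefactor $(\sqrt{2}+o(1))e^{1/4}(\mu^\mu(1-\mu)^{1-\mu})^{n(n-1)/2}\binom{n-1}{d}^n$. The factor $\sqrt{2}$ arises from the parity constraint $\sum d_i\equiv0\pmod 2$, while $e^{1/4}$ is a combinatorial correction reflecting the avoidance of loops and multi-edges in the sparse case or a Jacobian-type factor from the saddle-point integration in the dense case.

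For step (ii), given $\seq{d}'$ with $d'_u=d_u+1$, $d'_v=d_v-1$ and $d'_i=d_i$ otherwise, a forward switch takes a graph with degree sequence $\seq{d}$ together with an edge $\{u,w\}$ and a non-edge $\{v,w\}$ (with $w\notin\{u,v\}$) and replaces the former by the latter. A careful count of forward and reverse switches per graph --- combined with pseudorandomness statements showing that a typical graph with the given almost-regular degree sequence has the expected number of switchable configurations --- yields
$$\frac{g(\seq{d}')}{g(\seq{d})}=\frac{\binom{n-1}{d'_u}\binom{n-1}{d'_v}}{\binom{n-1}{d_u}\binom{n-1}{d_v}}\bigl(1+\eta(\seq{d},u,v)\bigr),$$
where $\eta$ is a second-order correction whose leading part is quadratic in the deviations $d_u-d$ and $d_v-d$. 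Telescoping these ratios along a $O(n\sqrt{\gamma})$-step path from the regular sequence to $\seq{d}$ reproduces the product $\prod_i\binom{n-1}{d_i}/\binom{n-1}{d}^n$ together with an exponential of the accumulated corrections; summing the quadratic contributions yields exactly $\exp(-\gamma^2/(4\mu^2(1-\mu)^2))$.

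The main obstacle is obtaining sharp enough control on each switching ratio so that the accumulated error does not overwhelm the explicit exponential correction being extracted. A naive $O(n^{-1})$ bound on $\eta$ would compound to $\Omega(\sqrt{\gamma})$ over the whole path, which is far too coarse for the target precision. Liebenau and Wormald overcome this by refining the basic switching to a double-switching that cancels the leading-order error, and by introducing compensating weights that effectively reduce the per-step error to $O(n^{-2})$. Reproducing this cancellation with the precise constants in the exponent --- in particular the $1/4$ inherited from step (i) and the quadratic coefficient $1/(4\mu^2(1-\mu)^2)$ produced by step (iii) --- is where essentially all of the technical work resides; the sketch above identifies the structural skeleton but leaves this analytic core to the original source.
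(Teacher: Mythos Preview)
The paper does not prove this statement at all: Theorem~\ref{th:counting_graps_deg_sequence} is quoted verbatim from Liebenau and Wormald~\cite{LiebenauWormald} and used as a black box in Section~\ref{ss:deg-seq-lower} (specifically in Claim~\ref{cl:g}). There is nothing in the present paper to compare your proposal against.

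That said, your sketch is a reasonable high-level description of the strategy actually employed in~\cite{LiebenauWormald}: they do interpolate between degree sequences by analysing the ratio $g(\seq{d}')/g(\seq{d})$ for sequences differing in two coordinates, and they do telescope such ratios to reach an arbitrary $\seq{d}$ from a reference sequence. You have also correctly identified the crux, namely that a naive per-step error bound blows up along the path and that a more refined comparison is needed. Where your outline is thin is precisely at that crux: the phrase ``refining the basic switching to a double-switching that cancels the leading-order error'' gestures at the mechanism but does not supply it, and the claimed emergence of the exact constant $1/(4\mu^2(1-\mu)^2)$ from ``summing the quadratic contributions'' is asserted rather than derived. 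As a proof this is a skeleton, not a body; but since the present paper never undertakes to flesh it out either, that is not a defect relative to the paper's own treatment.
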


We also need some technical facts on the frequency of a particular vertex degree in the degree sequence
of the random graph.
For an integer $y$, let $N_y$ be the number of vertices of degree $y$ in $\rg$.
Let 
$$
\zeta_1=\frac{\ln(2\pi)+\frac32\ln\ln n}{\ln n},\qquad 
\zeta_2=\frac{\ln(2\pi)+\frac52\ln\ln n}{\ln n}.
$$
The first four parts of the following lemma will be used in Section \ref{ss:deg-seq-lower},
while the last part is needed in Section~\ref{sec41}.

\begin{lemma} 
For $i\in[n]$, let $\xi_i=\deg_{\rg}(i)$.
The following conditions are true whp.
\begin{enumerate}[\bf 1.]
    \item For every $i$, 
    $$
    \left|\xi_i-\frac{n}{2}\right|<\sqrt{\frac{n\ln n}{2}}.
    $$
    \item There are at most $4\sqrt{n}(\ln n)^{1/4}$ vertices $i$ such that  
    $$
    \left|\xi_i-\frac{n}{2}\right|>\frac{1}{2}\sqrt{n\ln n\left(1-\zeta_1\right)}.
    $$
    \item 
      \begin{enumerate}[(a)]
      \item 
If $|y-n/2| < \frac{1}{2}\sqrt{n\ln n\left(1-\zeta_2\right)}+1$, then
    $$
    N_y\geq(\ln n)^{5/4}\,\,\text{ and }\,\,
    |N_y-N_{y+1}|\leq\frac{N_y}{\ln\ln n}.
    $$
      \item 
    If $\frac{1}{2}\sqrt{n\ln n\left(1-\zeta_2\right)}-2\leq|y-n/2|\leq\frac{1}{2}\sqrt{n\ln n\left(1-\zeta_1\right)}+2$, then
    $N_y\leq(\ln n)^2$.
      \end{enumerate}
    \item For all but at most $\sqrt{n}$ vertices $i$, the condition
    $$
    \frac{1}{2}\sqrt{n\ln n\left(1-\zeta_2\right)}\leq\left|\xi_i-\frac{n}{2}\right|\leq\frac{1}{2}\sqrt{n\ln n\left(1-\zeta_1\right)}
    $$
    implies that
    $$
    N_y\ge(\ln n)^{3/4}\,\,\text{ as well as }\,\,
    |N_y-N_{y+1}|\le\frac{N_y}{\ln\ln n}\,\,\text{ and }\,\,
    |N_y-N_{y-1}|\le\frac{N_y}{\ln\ln n}
    $$
for every of three values $y\in\{\xi_i-1,\xi_i,\xi_i+1\}$.
    \item For every two, not necessarily distinct integers $y_1,y_2$ such that 
    $$
    |y_j-n/2|\leq\frac{1}{2}\sqrt{n\left(\ln n-2\sqrt{\ln n}\right)}+1,\quad j=1,2,
    $$
    there exists a pair of adjacent vertices of degrees $y_1,y_2$ and a pair of non-adjacent vertices of degrees $y_1,y_2$.
\end{enumerate}
\label{lem:degrees_range}
\end{lemma}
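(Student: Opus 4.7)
The plan is to use Theorem~\ref{th:MW} to replace $(\xi_1,\ldots,\xi_n)$ by the i.i.d.\ binomial sequence $\eta_1(\mathbf{p}),\ldots,\eta_n(\mathbf{p})$ with $\eta_i\sim\mathrm{Bin}(n-1,\mathbf{p})$, conditioned on $\sum_i\eta_i$ being even. By~\refeq{sum_is_even} this conditioning inflates probabilities by at most~$2$, and by~\refeq{p_is_near_1/2} we may assume $|\mathbf{p}-1/2|\leq 2\ln n/n$. Thereafter, the local de Moivre--Laplace expansion
\[
\prob{\eta_i(p)=y}=\sqrt{\tfrac{2}{\pi n}}\,\exp\!\left(-\tfrac{2(y-n/2)^2}{n}\right)(1+o(1)),
\]
valid uniformly for $p$ in this window and for $|y-n/2|=O(\sqrt{n\ln n})$, reduces each item to a routine tail or concentration computation.

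For part~1, the local CLT gives $\prob{|\eta_i-n/2|\geq\sqrt{n\ln n/2}}=O(1/(n\sqrt{\ln n}))$, so a union bound over $i\in[n]$ suffices. For part~2, the choice of $\zeta_1$ yields $\exp(-2t_1^2/n)=\sqrt{2\pi}(\ln n)^{3/4}/\sqrt{n}\cdot(1+o(1))$ with $t_1:=\tfrac12\sqrt{n\ln n(1-\zeta_1)}$, whence the expected number of $i$ with $|\eta_i-n/2|>t_1$ is $\sim 2\sqrt{2n}(\ln n)^{1/4}$; Chernoff applied to this sum of i.i.d.\ indicators yields at most $4\sqrt{n}(\ln n)^{1/4}$ such vertices whp. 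For part~3(a), $\mathsf{E}N_y\geq 2(\ln n)^{5/4}(1+o(1))$ in the specified range, so Chernoff with super-polynomially small failure probability together with a union bound over the $O(\sqrt{n\ln n})$ relevant $y$ gives $N_y\geq(\ln n)^{5/4}$; the ratio bound $|N_y-N_{y+1}|\leq N_y/\ln\ln n$ follows from the identity $\prob{\eta_i=y+1}/\prob{\eta_i=y}=(n-1-y)/(y+1)=1+O(\sqrt{\ln n/n})$ combined with the same concentration, since $\mathsf{E}N_y\gg(\ln\ln n)^2$ forces $|N_y-\mathsf{E}N_y|/N_y=o(1/\ln\ln n)$ whp. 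Part~3(b) is simpler: a one-sided Chernoff with $\mathsf{E}N_y\leq 2(\ln n)^{3/4}$ directly yields $N_y\leq(\ln n)^2$.

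Part~4 is the most delicate: at the outer boundary of its range $\mathsf{E}N_y$ can drop to $(\ln n)^{3/4}$, so the failure probability for the ratio condition at a single $y$ is only $\exp(-\Omega((\ln n)^{3/4}/(\ln\ln n)^2))$, which is $o(1)$ but not super-polynomial, making a naive union bound over $y$ insufficient. Instead, one observes that the expected number of ``bad'' $y$'s is $o(\sqrt{n\ln n}\cdot e^{-\Omega((\ln n)^{3/4}/(\ln\ln n)^2)})$ and that each bad $y$ contributes at most $(\ln n)^2$ affected vertices by part~3(b), so Markov's inequality gives the stated $\sqrt{n}$ exception. For part~5, the joint law of $(\xi_1,\xi_2)$ conditioned on $\{1,2\}$ being (non-)adjacent is close to a product of independent $\mathrm{Bin}(n-2,1/2)$ variables, so the expected number of (non-)adjacent pairs with prescribed degrees $(y_1,y_2)$ in the allowed range is $\Omega(n\,e^{2\sqrt{\ln n}})$; a second-moment argument yields existence whp for fixed $(y_1,y_2)$, and a union bound over the $O(n\ln n)$ admissible pairs completes the proof. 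The main technical obstacle throughout is ensuring uniformity of the local-CLT estimates across the random $\mathbf{p}$ and the parity conditioning of Theorem~\ref{th:MW}, together with calibrating $\zeta_1$ and $\zeta_2$ so that the error terms in the Stirling expansion do not spoil the asymptotic counts at the range boundaries.
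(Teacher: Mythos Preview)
Your treatment of parts~1--4 is essentially correct and follows the same strategy as the paper (McKay--Wormald reduction to i.i.d.\ binomials, local de Moivre--Laplace, Chernoff, and Markov for the bad-$y$ count in part~4), modulo minor arithmetic slips: in 3(b) the expectation is bounded above by $2(\ln n)^{5/4}$, not $2(\ln n)^{3/4}$ (the latter is the \emph{lower} bound in that range), and in part~2 the expectation is $\sim 2\sqrt{n}(\ln n)^{1/4}$ without the extra~$\sqrt{2}$. These do not affect the argument.

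Part~5, however, has a genuine gap. The expected number of adjacent (or non-adjacent) pairs with prescribed degrees $(y_1,y_2)$ at the boundary of the allowed range is not $\Omega(n\,e^{2\sqrt{\ln n}})$ but only $\Theta(e^{2\sqrt{\ln n}})$: each degree event has probability $\Theta(n^{-1}e^{\sqrt{\ln n}})$, the adjacency event contributes a factor $1/2$, and there are $\Theta(n^2)$ vertex pairs, so the factors of $n$ cancel. With this expectation, the second-moment method yields failure probability at best $\Theta(e^{-2\sqrt{\ln n}})$ for a single $(y_1,y_2)$; since $e^{2\sqrt{\ln n}}=n^{o(1)}$, this is far from summable over the $\Theta(n\ln n)$ admissible degree pairs, and the union bound fails.

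The paper sidesteps this by a structural argument rather than a direct count. It first shows $N_y\geq\Omega(e^{\sqrt{\ln n}})\gg\ln n$ whp for every $y$ in the range, and then uses that whp neither $\rg$ nor its complement contains a clique of size $2\ln n$ or a complete bipartite subgraph with both parts of size $3\ln n$. Given two degree classes each of size $\gg\ln n$, the absence of any edge (resp.\ non-edge) between them would produce a forbidden empty (resp.\ complete) bipartite subgraph; the case $y_1=y_2$ reduces to the clique/independent-set bound. This handles all $(y_1,y_2)$ simultaneously with a single whp structural event, avoiding the union bound entirely.
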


The first part is proven in~\cite[Corollary 7]{BollobasDegrees}. The proofs of parts 2, 3 and 4 are quite technical 
and based on the standard apllications of de Moivre--Laplace limit theorem, concentration inequalities and Theorem~\ref{th:MW}. 
Part 5 is also standard and follows from the fact that $\rg$ whp contains no large cliques, large independend sets, 
complete bipartite graphs with large parts, nor induced disconnected subgraphs with large components. We, therefore, move 
the proof of Lemma \ref{lem:degrees_range} to the appendix.

\subsection{Upper bound}\label{sec41}

Let $[a,b]$ denote the interval of integers $a,a+1,\ldots,b$.
We consider the interval of integers
$$
D_n=\Set{d\in\mathbb{Z}}{|d-n/2|\leq\frac{1}{2}\sqrt{n\left(\ln n-2\sqrt{\ln n}\right)}},
$$
appearing in part 5 of Lemma \ref{lem:degrees_range}.
The smallest and the largest integers in $D_n$ are denoted by $d_*$ and $d^*$ respectively;
thus, $D_n=[d_*,d^*]$. The integer $\lfloor n/2\rfloor$ splits $D_n$ in two parts
$$
D_{n,1}=[d_*,\lfloor n/2\rfloor-1]\text{ and }
D_{n,2}=[\lfloor n/2\rfloor+1,d^*],
$$
that is, 
$$
D_n=D_{n,1}\cup\Set{\lfloor n/2\rfloor}\cup D_{n,2}.
$$
Set 
$$
\delta=|D_n|,\quad\delta_1=|D_{n,1}|,\quad\text{and}\quad\delta_2=|D_{n,2}|.
$$
Shifting $D_{n,1}$ and $D_{n,2}$ in 1, we obtain the intervals
$$
D_{n,1}^+=[d_*+1,\lfloor n/2\rfloor]\text{ and }
D_{n,2}^+=[\lfloor n/2\rfloor+2,d^*+1].
$$

Recall that, for a non-negative integer $y$, the number of vertices of degree $y$ in $\rg$ is denoted by $N_y$.
For an integer $y>d_*$, we define $X_y$ to be the total number of vertices $v$ such that $d_*\le\deg(v)\le y$,
that is, $X_y=\sum_{d=d_*}^y N_d$.

\subsubsection{The property}

We define two integer sequences
$$
\mathbf{X}^{\downarrow}=(X_y)_{y\in D_n^1}\text{ and }
\mathbf{X}^{\uparrow}=(X_y)_{y\in D_n^2}
$$
of length $\delta_1$ and $\delta_2$ respectively.
For further reference, we make a simple observation.

\begin{claim}\label{cl:add-del}
Let $u$ and $v$ be vertices of degrees $\deg(u)=x$ and $\deg(v)=y$.
\begin{enumerate}[\bf 1.]
\item 
If $x\in D_{n,1}$, $y\in D_{n,2}$, and $u$ and $v$ are non-adjacent, then addition of an edge between $u$ and $v$ 
changes only one coordinate of $\mathbf{X}^{\downarrow}$, namely the one indexed by $x$, and only one coordinate 
of $\mathbf{X}^{\uparrow}$, namely the one indexed by $y$. Both coordinates decrease by~1.
\item 
If $x\in D_{n,1}^+$, $y\in D_{n,2}^+$, and $u$ and $v$ are adjacent, then deletion of the edge between $u$ and $v$ 
changes only one coordinate of $\mathbf{X}^{\downarrow}$, namely the one indexed by $x-1$, and only one coordinate 
of $\mathbf{X}^{\uparrow}$, namely the one indexed by $y-1$. Both coordinates increase by~1.
\end{enumerate}
\end{claim}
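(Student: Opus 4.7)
The plan is a direct verification by unfolding the definition. The key identity is
$$X_y \;=\; \sum_{d=d_*}^{y} N_d \;=\; \Bigl|\bigl\{v \in V(\rg) : d_* \le \deg(v) \le y\bigr\}\Bigr|,$$
so changing a single vertex's degree by $\pm 1$ alters the indicator $\mathbf{1}[d_* \le \deg(w) \le z]$ for exactly one threshold $z$, namely the boundary threshold crossed by the degree change. Note also that in both parts the intervals containing $x$ and $y$ are disjoint, so $u \neq v$ is automatic.

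For part 1, adding the edge $\{u,v\}$ raises $\deg(u)$ from $x$ to $x+1$. The indicator for $u$ changes only at $z = x$, dropping from $1$ to $0$; since $x \in D_{n,1}$, this corresponds to a $-1$ update of the coordinate of $\mathbf{X}^{\downarrow}$ indexed by $x$. The degree of $v$ similarly changes the indicator for $v$ only at $z = y$, but since $y \in D_{n,2}$ lies strictly above the entire interval $D_{n,1}$, no coordinate of $\mathbf{X}^{\downarrow}$ is affected by $v$. Symmetrically, $x \in D_{n,1}$ lies strictly below every element of $D_{n,2}$, so no coordinate of $\mathbf{X}^{\uparrow}$ is affected by $u$, while the $y$-coordinate of $\mathbf{X}^{\uparrow}$ drops by $1$ due to $v$.

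For part 2, deleting $\{u,v\}$ drops $\deg(u)$ from $x$ to $x-1$. The indicator $\mathbf{1}[\deg(u) \le z]$ changes only at $z = x - 1$, rising from $0$ to $1$; because $x \in D_{n,1}^+ = [d_*+1, \lfloor n/2 \rfloor]$, we have $x - 1 \in D_{n,1}$, so $\mathbf{X}^{\downarrow}$ gains $+1$ in the coordinate indexed by $x-1$. The vertex $v$ has $y - 1 \ge \lfloor n/2 \rfloor + 1$, which still lies strictly above $D_{n,1}$, so $v$ contributes nothing to $\mathbf{X}^{\downarrow}$. The same reasoning with the roles of $u,v$ and $D_{n,1}, D_{n,2}$ swapped yields the $+1$ update of $\mathbf{X}^{\uparrow}$ at coordinate $y-1 \in D_{n,2}$.

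There is no real obstacle here; the entire content of the claim is that $\mathbf{X}^{\downarrow}$ and $\mathbf{X}^{\uparrow}$ are indexed by disjoint integer intervals separated by $\lfloor n/2 \rfloor$, so that the lower endpoint of a modified edge affects only $\mathbf{X}^{\downarrow}$ and the upper endpoint only $\mathbf{X}^{\uparrow}$. The shift from $D_{n,j}$ to $D_{n,j}^+$ in part 2 is precisely what is needed to ensure that in both cases the affected threshold ($x$ in part 1, $x-1$ in part 2) lies inside $D_{n,1}$, and analogously for the $\uparrow$ side.
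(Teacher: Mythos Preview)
Your verification is correct and complete. The paper itself does not prove this claim; it is introduced with the sentence ``For further reference, we make a simple observation'' and then left unproved, so your direct unfolding of the definition of $X_y$ as a cumulative count is exactly the intended argument.
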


Let $\mathbf{Y}^{\downarrow}\in\{0,1\}^{\delta_1}$ be the vector of parities of $\mathbf{X}^{\downarrow}$, that is, 
$$
(\mathbf{Y}^{\downarrow})_i=(\mathbf{X}^{\downarrow})_{d_*+i-1}\bmod2.
$$
Similarly, $\mathbf{Y}^{\uparrow}\in\{0,1\}^{\delta_2}$ is the vector of parities of $\mathbf{X}^{\uparrow}$. 
Let $\mathcal{S}^{\downarrow}\subset\{0,1\}^{\delta_1}$ and $\mathcal{S}^{\uparrow}\subset\{0,1\}^{\delta_2}$ be covering codes
with asymptotically optimal densities. Define
\begin{equation}
  \label{eq:Z}
\mathbf{Z}=\sum_{d_*\leq y\leq n/2} yN_y,  
\end{equation}
where the summation goes over all $y\in D_{n,1}\cup\Set{\lfloor n/2\rfloor}$.
Let $A$ be the property that
$$
\mathbf{Y}^{\downarrow}\in\mathcal{S}^{\downarrow},\quad
\mathbf{Y}^{\uparrow}\in\mathcal{S}^{\uparrow},\quad\text{and}\quad
\mathbf{Z}\bmod4\in\Set{0,1}.
$$
We show that this is the desired anti-stochastic property. 
The proof of the following equalities is postponed to the next subsection.

\begin{lemma}\label{lem:pair_of_degrees}\hfill
  \begin{enumerate}[\bf 1.]
  \item 
$\prob{\mathbf{Y}^{\downarrow}\in\mathcal{S}^{\downarrow}}=\frac{2+o(1)}{\sqrt{n\ln n}}$ and 
$\prob{\mathbf{Y}^{\uparrow}\in\mathcal{S}^{\uparrow}}=\frac{2+o(1)}{\sqrt{n\ln n}}$. 
\item 
$\prob{\rg\in A}=\frac{2+o(1)}{n\ln n}$. 
  \end{enumerate}
\end{lemma}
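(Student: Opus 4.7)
The plan is to prove both parts by a single Fourier computation in the independent-binomial model, based on the observation that each vertex contributes a fixed ``thermometer'' pattern to the triple $(\mathbf{Y}^\downarrow,\mathbf{Y}^\uparrow,\mathbf{Z}\bmod 4)$ depending only on its degree. Part~1 reduces to approximate uniformity of $\mathbf{Y}^\downarrow$ on $\{0,1\}^{\delta_1}$ (and of $\mathbf{Y}^\uparrow$ on $\{0,1\}^{\delta_2}$): asymptotic optimality gives $|\mathcal{S}^\downarrow|/2^{\delta_1}=(1+o(1))/\delta_1$, and $\delta_1=\frac{1}{2}\sqrt{n\ln n}(1+o(1))$, so a uniform $\mathbf{Y}^\downarrow$ would hit $\mathcal{S}^\downarrow$ with probability $(2+o(1))/\sqrt{n\ln n}$. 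Part~2 reduces to approximate joint uniformity of $(\mathbf{Y}^\downarrow,\mathbf{Y}^\uparrow,\mathbf{Z}\bmod 4)$ on $\{0,1\}^{\delta_1}\times\{0,1\}^{\delta_2}\times\mathbb{Z}_4$, because then $\prob{\rg\in A}$ is asymptotically $(|\mathcal{S}^\downarrow|/2^{\delta_1})(|\mathcal{S}^\uparrow|/2^{\delta_2})\cdot(2/4)=(1+o(1))/(2\delta_1\delta_2)=(2+o(1))/(n\ln n)$.

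The first step is a reduction. By Theorem~\ref{th:MW}, together with concentration of $\mathbf{p}$ near $1/2$ as in~\refeq{p_is_near_1/2}, it suffices to compute the probabilities of interest under the model in which the degrees $\xi_1,\ldots,\xi_n$ are independent $\mathrm{Bin}(n-1,1/2)$ variables conditioned on $\sum_i\xi_i$ being even; this conditioning contributes a factor $1/2+o(1)$ to both numerator and denominator by~\refeq{sum_is_even} and so can be dropped. Define the ``thermometer'' vector $\mathrm{vec}^\downarrow(y)\in\{0,1\}^{\delta_1}$ by $\mathrm{vec}^\downarrow(y)_i=\mathbf{1}[d_*\le y\le d_*+i-1]$, so that $\mathrm{vec}^\downarrow(y)=0$ unless $y\in D_{n,1}$; define $\mathrm{vec}^\uparrow(y)$ analogously, and set $g(y)=y\cdot\mathbf{1}[y\in D_{n,1}\cup\{\lfloor n/2\rfloor\}]$. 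Then $\mathbf{Y}^\downarrow=\sum_v\mathrm{vec}^\downarrow(\xi_v)\bmod 2$, $\mathbf{Y}^\uparrow=\sum_v\mathrm{vec}^\uparrow(\xi_v)\bmod 2$, and $\mathbf{Z}=\sum_v g(\xi_v)$, so in the independent model the triple $(\mathbf{Y}^\downarrow,\mathbf{Y}^\uparrow,\mathbf{Z}\bmod 4)$ is a sum of $n$ i.i.d.\ copies. Its characteristic function on the dual group $\mathbb{F}_2^{\delta_1}\times\mathbb{F}_2^{\delta_2}\times\mathbb{Z}_4$ therefore factors as an $n$-th power, and a multidimensional extension of Lemma~\ref{lem:character} expresses $\prob{\rg\in A}$ as a Fourier sum whose zero mode gives the claimed main term.

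The main task is to show that the non-zero Fourier modes contribute only $o(1)$ of the main term. For a non-zero mode $(\mathbf{s}^\downarrow,\mathbf{s}^\uparrow,t)$ the per-vertex factor has absolute value $|1-2q|$ for some $q\in[0,1]$ built from the $p_y=\prob{\mathrm{Bin}(n-1,1/2)=y}$; for a generic mode $q$ is bounded away from $0$ and $1$, so that $|1-2q|^n$ is exponentially small in $n$ and the contribution is negligible. The delicate case, which I expect to be the main obstacle, is that of atypical modes whose effective support is concentrated near the edges $d_*,d^*$ of $D_n$, where $p_y=O(1/n)$ by de~Moivre--Laplace: there $q=O(1/n)$ and $|1-2q|^n=\Omega(1)$, so summed naively these modes would swamp the main term. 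I would handle them by combining Lemma~\ref{lem:degrees_range} (which restricts the distribution of ``edge'' degree values) with a direct upper bound on the Fourier coefficients $|\widehat{\chi}_{\mathcal{S}^\downarrow}|$ and $|\widehat{\chi}_{\mathcal{S}^\uparrow}|$ at such edge-localized arguments, obtained from asymptotic optimality of the covering codes; this reduces the atypical contribution to a low-dimensional problem that can be estimated explicitly.
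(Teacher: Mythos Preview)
Your reduction to the independent-binomial model and the idea of a Fourier computation are the paper's starting point too, but two of the steps you rely on do not go through.

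First, the central claim---approximate joint uniformity of $(\mathbf{Y}^\downarrow,\mathbf{Y}^\uparrow,\mathbf{Z}\bmod 4)$ on $\{0,1\}^{\delta_1}\times\{0,1\}^{\delta_2}\times\mathbb{Z}_4$---is false. Since $\mathbf{Z}=\sum_{j\le\delta_1+1}(d_*+j-1)\nu_j$, the parity $\mathbf{Z}\bmod 2$ equals $\sum_{j:\,d_*+j-1\text{ odd}}B_j$, and substituting $B_j=Y_j\oplus Y_{j-1}$ one finds $\mathbf{Z}\bmod 2=Y_1\oplus\cdots\oplus Y_{\delta_1}\oplus\bigl(\mathbf{1}[\lfloor n/2\rfloor\text{ odd}]\cdot Y_{\delta_1+1}\bigr)$. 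Whenever $\lfloor n/2\rfloor$ is even this is a deterministic linear function of $\mathbf{Y}^\downarrow$, so the character at the mode $(\mathbf{1},0,2)$ has expectation $\pm1$, not $o(1)$. The paper circumvents this by never asserting uniformity of $\mathbf{Z}\bmod 4$: it proves instead that $(\nu_1,\ldots,\nu_\delta)\bmod 4$ is approximately uniform on $\{0,1,2,3\}^\delta$ (Claim~\ref{cl:probability_cumulative_divisible}), derives uniformity of $\vec{\mathbf{Y}}$ from that, and then shows $\cprob{\mathbf{Z}\bmod 4\in\{0,1\}}{\vec{\mathbf{Y}}}=\tfrac12$ by an algebraic argument in a four-element semigroup. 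Your Fourier route can in fact be salvaged, because the indicator of $\{0,1\}\subset\mathbb{Z}_4$ has vanishing Fourier coefficient at $t=2$, so the bad mode does not contribute to $\prob{A}$; but that observation is exactly the content of the paper's semigroup step, and it is missing from your plan.

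Second, your treatment of the edge-localized modes rests on two incorrect inputs. You write $p_y=O(1/n)$ near $d_*$, whereas actually $p_{d_*}\sim e^{\sqrt{\ln n}}/\bigl(n\sqrt{\pi/2}\bigr)$, because $D_n$ stops at $\tfrac12\sqrt{n(\ln n-2\sqrt{\ln n})}$ rather than $\tfrac12\sqrt{n\ln n}$. The paper exploits precisely this $e^{\sqrt{\ln n}}$ factor: for a mode in which all but $k$ coordinates agree, the per-mode decay is $\exp[-\Theta(k)e^{\sqrt{\ln n}}]$, which beats the count $\le\delta^{k}$ of such modes. Your proposed alternative---bounding $|\widehat{\chi}_{\mathcal{S}^\downarrow}|$ using asymptotic optimality of the covering code---cannot work: optimal covering codes may be linear (e.g.\ Hamming codes), and a linear code has $|C^\perp|$ Fourier coefficients equal to $|C|/2^m$, which is exactly the main-term size. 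Lemma~\ref{lem:degrees_range} is not used in the proof of Lemma~\ref{lem:pair_of_degrees} at all.
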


As the probability of $A$ is asymptotically determined by part 2 of Lemma \ref{lem:pair_of_degrees},
it remains to prove the existence of an adversary. 
A characteristic $\mathbf{V}$ of a random graph $\rg$ is, formally, a function defined on the set of graphs with vertices $1,\ldots,n$.
For a graph $G$ on this vertex set, we therefore write $\mathbf{V}(G)$ to denote the value of $\mathbf{V}$ on $G$.
Let $G$ be a graph on $[n]$ such that 
$\mathbf{Y}^{\downarrow}(G)\notin\mathcal{S}^{\downarrow}$,
$\mathbf{Y}^{\uparrow}(G)\notin\mathcal{S}^{\uparrow}$, and for every two integers $x,y\in[d_*,d^*+1]$ 
there exist a pair of adjacent vertices of degrees $x$ and $y$ and a pair of non-adjacent vertices of degrees $x$ and $y$. 
Note that these conditions hold whp for $\rg$ by part 1 of Lemma \ref{lem:pair_of_degrees} and part 5 of Lemma~\ref{lem:degrees_range}. 
Therefore, it suffices to show that, by adding or deleting one edge, the adversary can transform $G$ into a graph $G'$
possessing the property~$A$.

The adversary is going either to add an edge between two vertices $u$ and $v$ as in part 1 of Claim \ref{cl:add-del}
or to delete an edge between two vertices $u$ and $v$ as in part 2 of this claim.
Note that $\mathbf{Z}(G')=\mathbf{Z}(G)+1$ in the case of addition, and
$\mathbf{Z}(G')=\mathbf{Z}(G)-1$ in the case of deletion.
This allows the adversary to choose a type of action (deletion or insertion) ensuring that 
$\mathbf{Z}(G')\bmod4\in\Set{0,1}$ whatever $\mathbf{Z}(G)\bmod4$ is.
Once the action type is fixed, the adversary chooses a codeword $S^{\downarrow}\in\mathcal{S}^{\downarrow}$
at the Hamming distance 1 from $\mathbf{Y}^{\downarrow}(G)$ and a codeword
$S^{\uparrow}\in\mathcal{S}^{\uparrow}$ at the Hamming distance 1 from $\mathbf{Y}^{\uparrow}(G)$. 
Suppose that $S^{\downarrow}$ and $\mathbf{Y}^{\downarrow}(G)$ differ at coordinate $i$ and that
$S^{\uparrow}$ and $\mathbf{Y}^{\uparrow}(G)$ differ at coordinate $j$.
The adversary changes the adjacency relation between vertices $u$ and $v$ of degree $x$ and $y$ respectively
where $x=d_*+i-1$ and $y=d_*+j-1$ in the case of addition or
$x=d_*+i$ and $y=d_*+j$ in the case of deletion.
By Claim \ref{cl:add-del}, $\mathbf{Y}^{\downarrow}(G')=S^{\downarrow}$ and $\mathbf{Y}^{\uparrow}(G')=S^{\uparrow}$
and, therefore, the modified graph $G'$ has the property $A$.
This proves part 1 of Theorem~\ref{thm:degrees}.

\subsubsection{Proof of Lemma~\ref{lem:pair_of_degrees}}\label{sec412}

We write $I(E)$ to denote the indicator random variable of an event $E$.
Recall that $\xi_1,\ldots,\xi_n$ denote the degrees of vertices $1,\ldots,n$ in $\rg$. 
Due to Theorem~\ref{th:MW}, we can assume that $\xi_1,\ldots,\xi_n$ are independent binomial random variables with 
parameters $n-1$ and $p$. This redefines the random variables $N_y=\sum_{i=1}^nI(\xi_i=y)$
as well as the other random variables defined in terms of $N_y$, in particular, 
$\mathbf{Y}^{\downarrow}$, $\mathbf{Y}^{\uparrow}$, and $\mathbf{Z}$.
Throughout the proof, we use these modified random variables, keeping the same names for them.
Thus, we will prove the statement of Lemma~\ref{lem:pair_of_degrees} for $\mathbf{Y}^{\downarrow}$, $\mathbf{Y}^{\uparrow}$, 
and $\mathbf{Z}$ defined in terms of independent binomial random variables $\xi_1,\ldots,\xi_n$ with 
parameters $n-1$ and $p$. Since this will be done uniformly over all $p$ such that $|p-1/2|\leq\frac{2\ln n}{n}$, 
this will prove the lemma also for the original $\mathbf{Y}^{\downarrow}$, $\mathbf{Y}^{\uparrow}$, 
and $\mathbf{Z}$ on the account of Theorem~\ref{th:MW}
(cf.\ the argument in Section~\ref{sec:th1_proof_finish_probQ_small}).

We define a sequence $\nu=(\nu_1,\ldots,\nu_\delta)$ by
$$
\nu_j=N_{d_*+j-1},
$$
where $N_y$ is as explained above.

\begin{claim}\label{cl:probability_cumulative_divisible}
For each $\seq{r}\in\{0,1,2,3\}^{\delta}$, 
\begin{equation}\label{eq:probability_cumulative_divisible}
\probb{
\nu_j\equiv r_j\Mod4\text{ for all }j=1,\ldots,\delta
}=\frac{1+o(1)}{4^{\delta}},
\end{equation}
where the infinitesimal $o(1)$ is the same for all $\seq{r}$.  
\end{claim}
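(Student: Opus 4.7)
The plan is to reduce \refeq{probability_cumulative_divisible} to a multinomial computation and then apply iterative conditioning, using Lemma~\ref{lem:degrees_divisible} (in a mild adaptation to $\mathrm{Bin}(n, p)$ with $p \to 0$ and $np \to \infty$) at each step.

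Setting $n_\star = n - \sum_{j=1}^\delta \nu_j$, the vector $(n_\star, \nu_1, \ldots, \nu_\delta)$ is multinomial with parameters $n$ and $(q_0, p_{d_*}, \ldots, p_{d^*})$, where $q_0 = \mathsf{P}[\xi_1 \notin D_n]$. Writing $P_j = p_{d_*+j-1}/q$ with $q = 1 - q_0$ and partitioning on $n_\star$,
\[
\mathsf{P}[\nu \equiv \seq r \Mod 4] = \sum_m \mathsf{P}[n_\star = n - m]\, \mathsf{P}[\mathrm{Mult}(m; P_1, \ldots, P_\delta) \equiv \seq r \Mod 4].
\]
The first step is to observe that $n_\star \sim \mathrm{Bin}(n, q_0)$ with $n q_0 \to \infty$; indeed, the CLT tail estimate yields $q_0 \sim \sqrt{2/\pi}\, e^{\sqrt{\ln n}}/\sqrt{n\ln n}$. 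The same characteristic-function manipulation as in the proof of Lemma~\ref{lem:degrees_divisible}, adapted to this regime, then delivers $\mathsf{P}[n_\star \equiv c \Mod 4] = 1/4 + O(e^{-\Omega(n q_0)})$ uniformly in $c \in \{0,1,2,3\}$.

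The heart of the argument is the multinomial claim: uniformly in $\seq r$ and for $m$ in a high-probability range around $nq$,
\[
\mathsf{P}\bigl[\mathrm{Mult}(m; P_1, \ldots, P_\delta) \equiv \seq r \Mod 4\bigr] = \frac{1 + o(1)}{4^{\delta - 1}}\,\mathbf{1}\bigl[\textstyle\sum_j r_j \equiv m \Mod 4\bigr].
\]
This I would establish by iterated conditioning: conditional on $\nu_1, \ldots, \nu_{j-1}$, the variable $\nu_j$ follows $\mathrm{Bin}(m_j, \tilde P_j)$ with $m_j = m - \sum_{k<j}\nu_k$ and $\tilde P_j = P_j/(1 - \sum_{k<j} P_k)$. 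On a high-probability event $\{|\nu_k - m P_k| \le \sqrt{m P_k \ln n}\ \forall k\}$ one has $m_j \tilde P_j = (1+o(1))\, m P_j$, so the adapted Lemma~\ref{lem:degrees_divisible} gives $\mathsf{P}[\nu_j \equiv r_j \Mod 4 \mid \nu_{<j}] = 1/4 + O(e^{-c\,m P_j})$. Multiplying across $j = 1, \ldots, \delta - 1$ produces the displayed equality, with $\nu_\delta$ being determined modulo $4$ by the sum constraint (explaining the indicator). Combining the two steps gives $\mathsf{P}[\nu \equiv \seq r \Mod 4] = (1 + o(1))/4^\delta$, and the error is manifestly uniform in $\seq r$ because neither the conditional bound nor the binomial estimate depends on $\seq r$.

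The main obstacle is controlling the accumulated error $\sum_{j=1}^\delta e^{-c\, m P_j}$. The minimum of $m P_j$ over $j$ is attained at the boundary of $D_n$, where the local CLT asymptotic $p_y \sim \sqrt{2/\pi}\, n^{-1} e^{\sqrt{\ln n}}$ (as used in the proof of Lemma~\ref{lem:degrees_range}) gives $m P_j \gtrsim e^{\sqrt{\ln n}}$. Since $\delta = O(\sqrt{n\ln n})$ while $e^{\sqrt{\ln n}}$ grows asymptotically faster than any polynomial in $\ln n$, one has $\delta \cdot e^{-c\, e^{\sqrt{\ln n}}} = o(1)$, which closes the argument.
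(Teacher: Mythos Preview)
Your approach via iterated conditioning is genuinely different from the paper's direct characteristic-function computation, but the ``multiply across'' step hides a real difficulty that your sketch does not resolve.

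You assert $\mathsf{P}[\nu_j\equiv r_j\Mod4 \mid \nu_{<j}]=\tfrac14+O(e^{-c\,mP_j})$ \emph{on a high-probability event} and then multiply the $\delta-1$ factors. But the chain rule gives $\mathsf{P}\bigl[\bigcap_jA_j\bigr]=\prod_j\mathsf{P}[A_j\mid A_{<j}]$ with $A_j=\{\nu_j\equiv r_j\}$, and the factor you need is $\mathsf{P}[A_j\mid A_{<j}]$, not $\mathsf{P}[A_j\mid\nu_{<j}]$. Converting between them requires controlling the distribution of $m_j\tilde P_j$ \emph{conditioned on} $A_{<j}$. The failure probability of your good event is at best $e^{-\Omega(e^{\sqrt{\ln n}})}$ (this is all Chernoff gives for the boundary bins, where $mP_k\sim e^{\sqrt{\ln n}}$), while for $j$ near $\delta$ the conditioning event $A_{<j}$ has probability $\sim 4^{-\delta}\sim e^{-\Omega(\sqrt{n\ln n})}$. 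Since $e^{\sqrt{\ln n}}=o(\sqrt{n\ln n})$, the bad event cannot be absorbed after conditioning, and on it $m_j\tilde P_j$ need not be large at all.

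Concretely, writing $a_j=\mathsf{P}\bigl[\bigcap_{k\le j}A_k\bigr]$ and using the tower rule yields $|a_j-\tfrac14 a_{j-1}|\le C\,\mathsf{E}\bigl[\mathbf{1}_{A_{<j}}\,e^{-cm_j\tilde P_j}\bigr]$, which unrolls to a cumulative error $\sum_j 4^{\,j}\,\mathsf{E}\bigl[\mathbf{1}_{A_{<j}}\,e^{-cm_j\tilde P_j}\bigr]$. Bounding the expectation crudely by $\mathsf{E}[e^{-cm_j\tilde P_j}]\asymp e^{-cmP_j}$ gives a term of order $4^{\delta}e^{-c\,e^{\sqrt{\ln n}}}$ at $j=\delta-1$, which diverges. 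To do better you would have to show that $\mathbf{1}_{A_{<j}}$ and the random error $e^{-cm_j\tilde P_j}$ are essentially uncorrelated, but that amounts to the very equidistribution you are trying to establish.

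The paper avoids all this by applying Part~2 of Lemma~\ref{lem:character} directly to the full vector $\nu$. Since $\nu_j=\sum_v I(\xi_v=d_*+j-1)$ with the $\xi_v$ i.i.d., the characteristic function factors as an $n$th power, and the probability becomes $4^{-\delta}\sum_{\lambda}\phi_{\nu-\seq r}(\pi\lambda/2)$ over $\lambda\in\{0,1,2,3\}^\delta$. Each nonzero $\lambda$ is then bounded pointwise via $|\phi|\le\bigl(\sum_{r}\mathsf{P}[\xi_1\in R_r]^2\bigr)^{n/2}$, with the $4^\delta$ terms handled by a case split according to whether some $\mathsf{P}[\xi_1\in R_r]$ lies in the bulk or all are extreme. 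No iterated conditioning enters, so no rare-event blowup occurs.
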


Let us show that the lemma follows from this claim.
Consider two 0-1-sequences $\vec{\mathbf{B}}=(B_1,\ldots,B_\delta)$ and $\vec{\mathbf{Y}}=(Y_1,\ldots,Y_\delta)$ where
$$
B_j=\nu_j\bmod2\quad\text{ and }\quad Y_j=X_{d_*+j-1}\bmod2,
$$
where $X_y=\sum_{d=d_*}^y N_d$.
Note that $\mathbf{Y}^{\downarrow}$ is the restriction of $\vec{\mathbf{Y}}$ to the first $\delta_1$ coordinates,
and $\mathbf{Y}^{\uparrow}$ is the restriction of $\vec{\mathbf{Y}}$ to the last $\delta_2$ coordinates.
Let $\mathbf{V}^{\downarrow}$ and $\mathbf{V}^{\uparrow}$ denote the analogous restrictions of any sequence $\vec{\mathbf{V}}\in\{0,1\}^\delta$.
The sequences $\vec{\mathbf{B}}$ and $\vec{\mathbf{Y}}$ can be obtained from one another by
$Y_j=B_1\oplus_2\cdots\oplus_2B_j$ and $B_j=Y_j\oplus_2Y_{j-1}$, where $\oplus_2$ is the addition modulo 2. 
Thus, there is a bijection $f$ from $\{0,1\}^\delta$ onto itself such that $\vec{\mathbf{Y}}=f(\vec{\mathbf{B}})$.
We have
$$
\probb{\mathbf{Y}^{\downarrow}\in\mathcal{S}^{\downarrow}}
=\probb{f(\vec{\mathbf{B}})^{\downarrow}\in\mathcal{S}^{\downarrow}}
=\frac{2+o(1)}{\sqrt{n\ln n}}.
$$
The latter equality holds because $\mathcal{S}^{\downarrow}$ is an asymptotically optimal covering code
of length $\delta_1=(\frac12+o(1))\sqrt{n\ln n}$ and because Claim \ref{cl:probability_cumulative_divisible}
implies that 
$$
\probb{B_j=b_j\text{ for all }j=1,\ldots,\delta}=\frac{1+o(1)}{2^{\delta}}
$$
for each fixed sequence $(b_1,\ldots,b_\delta)\in\{0,1\}^\delta$.
The same argument applies to the event $\mathbf{Y}^{\uparrow}\in\mathcal{S}^{\uparrow}$,
yielding part 1 of the lemma.

Moreover, the same argument applies also to the intersection of the two events, yielding
$$
\probb{\mathbf{Y}^{\downarrow}\in\mathcal{S}^{\downarrow},\mathbf{Y}^{\uparrow}\in\mathcal{S}^{\uparrow}}=\frac{4+o(1)}{n\ln n}.
$$ 
Therefore, to obtain part 2 of the lemma, it suffices to prove that
$$
\cprobb{\mathbf{Z}\bmod4=0\text{ or }1}{\mathbf{Y}^{\downarrow},\mathbf{Y}^{\uparrow}}=\frac{1}{2}+o(1).
$$
We prove a formally stronger fact, namely
\begin{equation}
  \label{eq:Y}
\cprobb{\mathbf{Z}\bmod4=0\text{ or }1}{\vec{\mathbf{Y}}}=\frac{1}{2}+o(1).  
\end{equation}
To this end, fix a sequence $(y_1,\ldots,y_\delta)\in\{0,1\}^{\delta}$. 
Let $(b_1,\ldots,b_\delta)=f^{-1}(y_1,\ldots,y_\delta)$.
Consider the set $\mathcal{R}$ of all sequences $(r_1,\ldots,r_\delta)\in\{0,1,2,3\}^{\delta}$ such that
$r_j\bmod2=b_j$ for $j=1,\ldots,\delta$. Furthermore, let $\mathcal{R}^*$ consist of those $(r_1,\ldots,r_\delta)\in\mathcal{R}$
for which 
$$
\sum_{j=1}^{\delta_1+1}(d_*+j-1)r_j\equiv0\text{ or }1\Mod4.
$$
Noting that $\delta_1=\lfloor n/2\rfloor-d_*$ and comparing this sum with the definition \refeq{Z}
of the random variable $\mathbf{Z}$, we conclude by Claim \ref{cl:probability_cumulative_divisible} that
$$
\cprobb{\mathbf{Z}\bmod4=0\text{ or }1}{\vec{\mathbf{Y}}=(y_1,\ldots,y_\delta)}=\frac{|\mathcal{R}^*|}{|\mathcal{R}|}(1+o(1)).
$$
From here we get \refeq{Y} by proving that
\begin{equation}
  \label{eq:RR}
\frac{|\mathcal{R}^*|}{|\mathcal{R}|}=\frac12.  
\end{equation}

For a finite set of integers $S$, let $\uu S$ denote the random variable uniformly distributed on $S$.
We will represent $S$ by listing all its elements. Thus, $\uu 0$ takes on the value 0 with probability 1,
$\uu{0,2}$ takes on each of the values 0 and 2 with probability $1/2$ etc.
Consider the set of four independent random variables
$$
\mathfrak{U}=\Set{\uu0,\uu2,\uu{0,2},\uu{1,3}}.
$$
For two random variables $\uu{}$ and $\uu{}'$ in $\mathfrak{U}$,
consider their sum $\uu{}\oplus_4\uu{}'$ modulo 4.
We allow equal $\uu{}$ and $\uu{}'$, assuming in this case that $\uu{}'$ is an independent copy of $\uu{}$.
A direct inspection reveals that $\uu{}\oplus_4\uu{}'$ always has the same distribution as
one of the random variables in $\mathfrak{U}$. We, therefore, regard $\mathfrak{U}$ as
a commutative semigroup with operation $\oplus_4$. 
Note that $\{0,2\}$ and $\{1,3\}$ are the cosets of the subgroup $\{0,2\}$ in the cyclic group $\mathbb{Z}_4=\{0,1,2,3\}$.
This observation implies the following properties of the semigroup~$\mathfrak{U}$.
\begin{enumerate}[(1)]
\item\label{i:sub} 
$\mathfrak{U}_1=\Set{\uu0,\uu2}$ and $\mathfrak{U}_2=\Set{\uu{0,2},\uu{1,3}}$ are subsemigroups of~$\mathfrak{U}$.
\item\label{i:ideal} 
$\mathfrak{U}_2$ is an ideal in $\mathfrak{U}$, that is, if $\uu{}\in\mathfrak{U}_2$ and $\uu{}'\in\mathfrak{U}$,
then $\uu{}\oplus_4\uu{}'\in\mathfrak{U}_2$.
\end{enumerate}
Moreover, we can multiply the elements of $\mathfrak{U}$ by integer numbers, taking the product modulo~4.
The resulting random variable is also always distributed as a random variable in $\mathfrak{U}$
and, hence, consided an element of $\mathfrak{U}$. Notice the following facts about this scalar multiplication.
\begin{enumerate}[(1)]\setcounter{enumi}{2}
\item \label{i:m0}
$0\uu{}=\uu0$ for every $\uu{}\in\mathfrak{U}$.
\item\label{i:m13}
$1\uu{}=\uu{}$ and $3\uu{}=\uu{}$ for every $\uu{}\in\mathfrak{U}$.
\item\label{i:m2} 
$2\uu{0,2}=\uu0$ and $2\uu{1,3}=\uu2$.
\end{enumerate}

Now, consider independent random variables $\mathbf{r}_1,\ldots,\mathbf{r}_\delta$
such that $\mathbf{r}_j$ is distributed as $\uu{0,2}$ if $b_j=0$ and as $\uu{1,3}$ if $b_j=1$. 
The random sequence $(\mathbf{r}_1,\ldots,\mathbf{r}_\delta)$ is uniformly distributed on $\mathcal{R}$.
It follows that
\begin{equation}
  \label{eq:RRP}
\frac{|\mathcal{R}^*|}{|\mathcal{R}|}=
\probb{\sum_{j=1}^{\delta_1+1}(d_*+j-1)\mathbf{r}_j=0\text{ or }1},
\end{equation}
where the sum is a random variable in $\mathfrak{U}$ computed according to the arithmetic rules introduced above.
The sum of all terms with scalars equal to 1 or 3 modulo 4 evaluates to an element in $\mathfrak{U}_2$
by Properties \ref{i:m13} and \ref{i:sub}. The sum of all terms with scalars equal to 0 or 2 modulo 4 evaluates 
to an element in $\mathfrak{U}_1$ by Properties \ref{i:m0}, \ref{i:m2}, and \ref{i:sub}. 
Finally, Property \ref{i:ideal} implies that the whole sum evaluates to an element in $\mathfrak{U}_2$
and, therefore, the probability in \refeq{RRP} is equal to $1/2$.
Equality \refeq{RR} is therewith proved.


To complete the proof of the lemma, it remains to prove Claim \ref{cl:probability_cumulative_divisible}.

\begin{proof}[Proof of Claim \ref{cl:probability_cumulative_divisible}] 
For the characteristic function of the $\delta$-dimensional random vector $\nu-\seq{r}$, we have
\begin{align*}
 \phi_{\nu-\seq{r}}(\lambda)=
 {\sf E}\exp\left[i\sum_{j=1}^{\delta}(\nu_j-r_j)\lambda_j\right]
 & =
 \exp\left[-i\sum_{j=1}^{\delta}r_j\lambda_j\right]
 {\sf E}\exp\left[i\sum_{v=1}^n\sum_{j=1}^{\delta}I(\xi_v=d_*+j-1)\lambda_j\right]\\
  & = 
  \exp\left[-i\sum_{j=1}^{\delta}r_j\lambda_j\right] \left({\sf E}\exp\left[i\sum_{j=1}^{\delta}I(\xi_1=d_*+j-1)\frac{\lambda_j\pi}{2}\right]\right)^n.
\end{align*}
Let $\vec{0}$ be the vector consisting of $\delta$ zeros. The vectors $\vec{1}$, $\vec{2}$, and $\vec{3}$ are introduced similarly. 
The probability on the left hand side of \refeq{probability_cumulative_divisible} is equal to 
$4^{-\delta}\big(1+\sum_{\lambda\in\{0,1,2,3\}^{\delta}\setminus\{\vec{0}\}}\phi_{\nu-\seq{r}}\left(\frac{\lambda\pi}{2}\right)\big)$ 
by part 2 of Lemma~\ref{lem:character}.

Let us first estimate $\phi_{\nu-\seq{r}}\left(\frac{\lambda\pi}{2}\right)$ for $\lambda\in\{\vec{1},\vec{2},\vec{3}\}$. 
Let $R=D_n$. By the de Moivre--Laplace limit theorem,
$$
 \prob{\xi_1\notin R}=\frac{(1+o(1))e^{\sqrt{\ln n}}}{\sqrt{2\pi n\ln n}}.
$$
If $\lambda=\vec{1}$, then
\begin{align*}
\left({\sf E}\exp\left[i\sum_{j=1}^{\delta}I(\xi_1=d_*+j-1)\frac{\lambda_j\pi}{2}\right]\right)^n & =
\left({\sf E}\exp\left[i I(\xi_1\in R)\frac{\pi}{2}\right]\right)^n \\
& =\biggl(\prob{\xi_1\notin R}+i(1-\prob{\xi_1\notin R})\biggr)^n.
\end{align*}
Therefore, in this case we have
$$
\left|\phi_{\nu-\seq{r}}\left(\frac{\lambda\pi}{2}\right)\right|=
\left(1-2\prob{\xi_1\notin R}+2\left(\prob{\xi_1\notin R}\right)^2\right)^{n/2}=
\exp\left[-(1+o(1))\frac{\sqrt{n}e^{\sqrt{\ln n}}}{\sqrt{2\pi\ln n}}\right]=o(1).
$$
Furthermore, 
$\left|\phi_{\nu-\seq{r}}\left(\frac{\vec{3}\pi}{2}\right)\right|=\left|\phi_{\nu-\seq{r}}\left(\frac{\vec{1}\pi}{2}\right)\right|=o(1)$.
Finally, if $\lambda=\vec{2}$, then
$$
\left({\sf E}\exp\left[i\sum_{j=1}^{\delta}I(\xi_1=d_*+j-1)\frac{\lambda_j\pi}{2}\right]\right)^n=
\left({\sf E}\exp\left[i I(\xi_1\in R)\pi\right]\right)^n=\left(2\prob{\xi_1\notin R}-1\right)^n.
$$
and 
$$
\left|\phi_{\nu-\seq{r}}\left(\frac{\lambda\pi}{2}\right)\right|=\left(1-2\prob{\xi_1\notin R}\right)^{n/2}=o(1)
$$ 
like in the preceding two cases.

Now, consider $\lambda\in\{0,1,2,3\}^{\delta}\setminus\{\vec{0},\vec{1},\vec{2},\vec{3}\}$. 
For $r=1,2,3$, we define the sets $R_r=R_r(\lambda)$ by $R_r=\Set{d_*+j-1}{\lambda_j=r}$.
Also, $R_0=\{0,\ldots,n-1\}\setminus(R_1\cup R_2\cup R_3)$. Note that
$$ 
\left({\sf E}\exp\left[i\sum_{j=1}^{\delta}I(\xi_1=d_*+j-1)\frac{\lambda_j\pi}{2}\right]\right)^n=
 \left({\sf E}\exp\left[i\sum_{r=0}^{3}I(\xi_1\in R_r)\frac{r\pi}{2}\right]\right)^n.
$$
For non-negative reals $p_0,p_1,p_2,p_3$ such that $p_0+p_1+p_2+p_3=1$ and arbitrary reals 
$\varphi_0,\varphi_1,\varphi_2,\varphi_3\in\mathbb{R}$ we have
$$
 \left(\sum_{r=0}^3p_r\cos\varphi_r\right)^2+
 \left(\sum_{r=0}^3p_r\sin\varphi_r\right)^2=
 1-2\sum_{r_1\neq r_2}p_{r_1}p_{r_2}\sin^2\frac{\varphi_{r_1}-\varphi_{r_2}}{2},
$$
which implies that
\begin{align*}
 \left|{\sf E}\exp\left[i\sum_{r=0}^{3}I(\xi_1\in R_r)\frac{r\pi}{2}\right]\right|^2 
 &=
 1-2\sum_{r_1\neq r_2}\prob{\xi_1\in R_{r_1}}\prob{\xi_1\in R_{r_2}}\sin^2\frac{\pi(r_1-r_2)}{4}\\
 & \leq
 1-\sum_{r_1\neq r_2}\prob{\xi_1\in R_{r_1}}\prob{\xi_1\in R_{r_2}}\\
 &=
 1-\sum_{r=0}^3\prob{\xi_1\in R_r}(1-\prob{\xi_1\in R_r})=
 \sum_{r=0}^3(\prob{\xi_1\in R_r})^2.
\end{align*}

Let $\Lambda$ be the set of all $\lambda\in\{0,1,2,3\}^{\delta}\setminus\{\vec{0},\vec{1},\vec{2},\vec{3}\}$ with the property that
$\prob{\xi_1\in R_{r_0}}\in\left[\frac{\ln n}{\sqrt{n}},1-\frac{\ln n}{\sqrt{n}}\right]$ for some $r_0\in\{0,1,2,3\}$.
If $\lambda\in\Lambda$, then
$$
 \sum_{r=0}^3(\prob{\xi_1\in R_r})^2\leq(\prob{\xi_1\in R_{r_0}})^2+(1-\prob{\xi_1\in R_{r_0}})^2\leq 1-\frac{\ln n}{\sqrt{n}}
$$
for $n$ large enough. Therefore,
\begin{align}
 \left|\sum_{\lambda\in\Lambda}\phi_{\nu-\seq{r}}(\lambda)\right|
 &\leq\sum_{\lambda\in\Lambda}\left|{\sf E}\exp\left[i\sum_{r=0}^{3}I(\xi_1\in R_r)\frac{r\pi}{2}\right]\right|^n \notag \\
 &\leq\sum_{\lambda\in\Lambda}\left(1-\frac{\ln n}{\sqrt{n}}\right)^{n/2}\notag\\ 
 &\leq 4^{\delta}\exp\left[-\frac{1}{2}\sqrt{n}\ln n\right]=\exp\left[-\frac{1+o(1)}{2}\sqrt{n}\ln n\right].
 \label{character_complicated}
\end{align}

Without loss of generality, suppose that $\prob{\xi_1\in R_0}\leq\prob{\xi_1\in R_1}\leq\prob{\xi_1\in R_2}\leq\prob{\xi_1\in R_3}$.
Let $\lambda\notin\Lambda\cup\{\vec{0},\vec{1},\vec{2},\vec{3}\}$. Assuming that $n$ is large enough, we get
$\prob{\xi_1\in R_2}<\frac{\ln n}{\sqrt{n}}$, $\prob{\xi_1\in R_3}> 1-\frac{\ln n}{\sqrt{n}}$. Therefore, 
\begin{align*}
 \sum_{r=0}^3(\prob{\xi_1\in R_r})^2 & \leq
 (\prob{\xi_1\in R_3})^2+(1-\prob{\xi_1\in R_3})^2\\
 &=1-2\prob{\xi_1\in R_3}+2(\prob{\xi_1\in R_3})^2\\
 &\leq\prob{\xi_1\in R_3}=1-(\prob{\xi_1\in R_0}+\prob{\xi_1\in R_1}+\prob{\xi_1\in R_2}).
\end{align*} 
For $r=0,1,2,3$, let $n_r$ be the number of $j$ such that $\lambda_j=r$. The de Moivre--Laplace limit theorem implies that 
$$
\prob{\xi_1\in R_r}\geq\sum_{d\in R_r\cap D_n}\prob{\xi_1=d}\geq n_r\prob{\xi_1=d_*}(1+o(1))\geq\frac{n_r \exp\left[\sqrt{\ln n}\right]}{n\sqrt{\pi/2}}(1+o(1)).
$$
Finally, summing up over all $\lambda\in\{0,1,2,3\}^{\delta}\setminus(\Lambda\cup\{\vec{0},\vec{1},\vec{2},\vec{3}\})$,
we obtain
\begin{multline*}
 \left|\sum_{\lambda}\phi_{\nu-\seq{r}}(\lambda)\right|
 \leq\sum_{\lambda}\left|{\sf E}\exp\left[i\sum_{r=0}^{3}I(\xi_1\in R_r)\frac{r\pi}{2}\right]\right|^n \\
 \leq\sum_{n_0,n_1,n_2}{\delta\choose n_0,n_1,n_2,n_3}\left(1-\frac{(n_0+n_1+n_2) \exp\left[\sqrt{\ln n}\right]}{n\sqrt{\pi/2}}(1+o(1))\right)^{n/2}\\
 \leq \sum_{n_0,n_1,n_2}
 \exp\left[(n_0+n_1+n_2)\left(\ln\delta-\exp\left[\sqrt{\ln n}\right]\left(\sqrt{\frac{1}{2\pi}}+o(1)\right)\right)\right]\\
 =\exp\left[-\exp\left[\sqrt{\ln n}\right]\left(\sqrt{\frac{1}{2\pi}}+o(1)\right)\right].
\end{multline*}
Along with (\ref{character_complicated}), this estimate completes the proof of Claim~\ref{cl:probability_cumulative_divisible}.
\end{proof}

\subsubsection{On the possibility of a sharp bound}
It is actually possible to get an optimal upper bound $(1+o(1))/(n\ln n)$ for the probability in Theorem~\ref{thm:degrees} 
if we assume that there exists an asymptotically optimal covering code of length $m$ and radius $2$ for a sufficiently 
dense sequence of values of $m$. Recall that a covering code of radius $2$ is a subset of $\{0,1\}^m$ such that every word 
in $\{0,1\}^m$ is at Hamming distance at most $2$ from some word in the code. It is asymptotically optimal if the sum of 
sizes of Hamming balls of radius $2$ around the codewords is $(1+o(1))2^m.$ Since each Hamming ball of radius 2 contains 
$(1+o(1))m^2/2$ words, such a code contains $(1+o(1)) 2^{m+1}/m^2$ codewords. For every radius $\rho\geq 2$, the conjecture a
bout the existence of an asymptotically optimal covering code of radius $\rho$ is still open (the best known upper bounds 
for densities of covering codes can be found in~\cite{KrivelevichSudakov}).

Let us sketch an upper bound  argument, borrowing the notation of Section~\ref{sec41}.  
Consider the vector $\mathbf Y= (X_{d_*},\ldots, X_{d^*-1})\bmod2$ and assume that $K\subset \{0,1\}^m$, for $m=d^*-d_*$,
is a code as in the preceding paragraph. Also, let $\mathbf Z=\sum_{d_*\leq y\leq d^*} yN_y$.
Then the desired property $\mathcal P$ is defined by the conditions that $\mathbf Y\in K$ and $\mathbf Z\bmod4\in\{0,1\}$.
Similarly to Section~\ref{sec412},  it is possible to show that 
the property $\mathcal P$ is satisfied with probability $(1+o(1))|K|/2^{m+1} = (1+o(1))/m^2$, which is exactly the bound that we need. 
It is also easy to check that, for almost all graphs, $\mathcal P$ can be satisfied by changing only one edge.

Finally, note that for our purposes it is actually sufficient to have an asymptotically optimal {\it almost}-covering code of 
radius $2$, i.e., a set of codewords in $\{0,1\}^m$ whose Hamming balls of radius $2$ cover almost all~$\{0,1\}^m$.

It is an intriguing question whether having an asymptotically optimal almost-covering code of radius $2$ is actually equivalent to having such a random graph property.

\subsection{Lower bound}\label{ss:deg-seq-lower}

Using the coding-theoretic terminology, we say that a graph $G$ \emph{covers} a graph $G'$
if $G'$ is obtained from $G$ by changing the adjacency relation between two vertices.
All graphs considered in this section are on the vertex set $[n]=\{1,\ldots,n\}$.
Let $\Gamma(G)$ denote the set of all $n\choose2$ graphs covered by $G$.
For a set of graphs $\cQ$, let $\Gamma(\cQ)=\bigcup_{G\in\cQ}\Gamma(G)$.
Our proof strategy is based on the following simple observation.
If $\cQ$ is an anti-stochastic property, then $|\Gamma(\cQ)|=(1-o(1))2^{n\choose2}$
and, therefore, if we also have 
\begin{equation}
  \label{eq:GammaQ}
|\Gamma(\cQ)|\le(1+o(1))n\ln n|\cQ|,   
\end{equation}
this readily implies the desired lower bound $\prob{\rg\in\cQ}\ge\frac{1+o(1)}{n\ln n}$.
Moreover, for a set of graphs $\cA$, let $\Gamma_\cA(\cQ)=\Gamma(\cQ)\cap\cA$.
If $\rg\in\cA$ whp, i.e., $|\cA|=(1-o(1))2^{n\choose2}$, then istead of \refeq{GammaQ}
it suffices to have the formally weaker inequality
\begin{equation}
  \label{eq:GammaAQ}
|\Gamma_\cA(\cQ)|\le(1+o(1))n\ln n|\cQ|.   
\end{equation}
We prove that \refeq{GammaAQ} holds for every anti-stochastic property $\cQ$
expressible in terms of vertex degrees when $\cA$ is chosen to be the set
of graphs satisfying Conditions 1--4 stated in Lemma~\ref{lem:degrees_range}.

Let $\lds(G)$ denote the labeled degree sequence of a graph $G$.
For two sequences of integers $\seq{d}$ and $\seq{s}$, we write $\seq{s}\simeq\seq{d}$
if $\seq{s}$ is a permutation of $\seq{d}$.
For a sequence of integers $\seq{d}=(d_1,\ldots,d_n)$, let $\cG(\seq{d})$ denote
the set of all graphs $G$ with $\lds(G)\simeq\seq{d}$. Since $\cQ$ is expressible 
in terms of vertex degrees, $\cQ$ is the disjoint union of the sets $\cG(\seq{d})$ over
all $\seq{d}=\lds(G)$ for $G\in\cQ$. Therefore, it is enough to prove that
\begin{equation}
  \label{eq:GammaAGd}
|\Gamma_\cA(\cG(\seq d))|\le(1+o(1))n\ln n|\cG(\seq d)|  
\end{equation}
for every such $\seq{d}$, and this is what we do in the rest of the proof.

Let $p(\seq d)$ denote the number of sequences $\seq{s}$ such that $\seq{s}\simeq\seq{d}$.
By $n_y$ we denote the frequency of the degree $y$ in $\seq d$.
Note that $p(\seq d)=n!/\prod_y n_y!$.

According to the notation in Subsection \ref{ss:degrees}, $g(\seq d)$ is the number of all graphs with $\lds(G)=\seq{d}$.
Note that $g(\seq d)=g(\seq s)$ if $\seq{s}\simeq\seq{d}$.
Therefore, 
\begin{equation}
  \label{eq:pg}
|\cG(\seq d)|=p(\seq d)g(\seq d).  
\end{equation}

We use the shortcuts
$$
a=\sqrt{{\textstyle\frac12}n\ln n},\quad b=\frac{1}{2}\sqrt{n\ln n\left(1-\zeta_1\right)},
\quad\text{and}\quad c=\frac{1}{2}\sqrt{n\ln n\left(1-\zeta_2\right)},
$$
where $\zeta_1$ and $\zeta_2$ are the constants in Lemma~\ref{lem:degrees_range}.
Obviously, $a>b>c$.

If $\seq d\ne\lds(G)$ for any $G\in\Gamma(\cA)$, then $\Gamma_\cA(\cG(\seq d))=\emptyset$, and \refeq{GammaAGd}
is trivially true. In other words, the graphs not covered by $\cA$ themselves do not cover
any graph in $\cA$, and their degree sequences do not need to be considered.
We, therefore, suppose that 
\begin{equation}
  \label{eq:dG}
\seq d=\lds(G)\text{ for some }G\in\Gamma(\cA).  
\end{equation}
This imposes the following conditions on $\seq d$.
\begin{enumerate}
\item[(P1)] 
$|d_i-n/2|<a+1$ for every vertex $i\in[n]$.
\item[(P2)]
$|d_i-n/2|\le b$ for all but at most $4\sqrt{n}\ln^{1/4}n+2$ vertices $i$.
\item[(P3)]
     \begin{enumerate}[(i)]
      \item 
If $|y-n/2| < c+1$, then
    $n_y\ge \ln^{5/4}n-2$ and 
    $|n_y-n_{y+1}|\leq\frac{n_y}{\ln\ln n}+5$.
      \item 
    If $c\leq|y-n/2|\le b$, then $n_y\le \ln^2n+2$.
\end{enumerate}
\item[(P4)]
For all but at most $\sqrt{n}+2$ vertices $i$, the condition
    $c\le|d_i-n/2|\le b$
    implies that
    $$
    n_y\ge\ln^{3/4}n-2\,\,\text{ as well as }\,\,
    |n_y-n_{y+1}|\le\frac{n_y}{\ln\ln n}+5\,\,\text{ and }\,\,
    |n_y-n_{y-1}|\le\frac{n_y}{\ln\ln n}+5
    $$
for every of three values $y\in\{d_i-1,d_i,d_i+1\}$.
\end{enumerate}
For each $k=1,2,3,4$, Condition (P$k$) follows from Part $k$ of Lemma~\ref{lem:degrees_range}
by noting that switching the adjacency of a vertex pair $\{i,j\}$ in $G$ 
affects only two degrees $d_i$ and $d_j$ (which either increase or decrease by 1) and 
at most four counts $n_{d_i}$, $n_{d_i+1}$, $n_{d_j}$, and $n_{d_j+1}$ if the edge $\{i,j\}$
is added or $n_{d_i}$, $n_{d_i-1}$, $n_{d_j}$, and $n_{d_j-1}$ if this  edge is deleted
(each of the counts can be changed by at most 2).

\begin{claim}\label{cl:g}
Suppose that $G\in\Gamma(\cA)$ and $G'\in\Gamma(G)$.
Let $\seq d=\lds(G)$ and $\seq d'=\lds(G')$.
Then $g(\seq d')=(1+o(1))g(\seq d)$.
\end{claim}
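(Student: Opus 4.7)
\medskip
\noindent\textbf{Proof plan for Claim \ref{cl:g}.}
The plan is to apply the Liebenau--Wormald asymptotic formula (Theorem~\ref{th:counting_graps_deg_sequence}) to both $\seq d$ and $\seq d'$ and then check that each factor in the formula changes only by $1+o(1)$ when one passes from $\seq d$ to $\seq d'$. Since $G\in\Gamma(\cA)$, property (P1) above guarantees $|d_i-n/2|<a+1=O(\sqrt{n\ln n})$ for all $i$, and the same bound (up to an additive $1$) holds for $\seq d'$. Consequently $d,d'=n/2+O(\sqrt{n\ln n})$, and $\max_i|d_i-d|=O(\sqrt{n\ln n})=o(n^{\varepsilon}\sqrt{\min\{d,n-d-1\}})$ for any $\varepsilon>1/2$, so the hypotheses of Theorem~\ref{th:counting_graps_deg_sequence} are met for both sequences (the sum $\sum d_i$ is even because $\seq d$ is a labeled degree sequence of a graph, and the same for~$\seq d'$).

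Write $G'=G(u,v)$, so $d_u'=d_u\pm1$ and $d_v'=d_v\pm1$ with the signs depending on whether the edge $\{u,v\}$ is added or deleted, and $d_i'=d_i$ otherwise. Then the ratio $g(\seq d')/g(\seq d)$ splits into three factors: the ratio of binomial products, the ratio of $\bigl(\mu^\mu(1-\mu)^{1-\mu}\bigr)^{n(n-1)/2}$, and the ratio of the exponential prefactors involving $\gamma$ and $\mu$. For the binomials, only two factors change, and a typical ratio takes the form
$$
\frac{\binom{n-1}{d_u\pm1}}{\binom{n-1}{d_u}}\in\Bigl\{\frac{n-1-d_u}{d_u+1},\,\frac{d_u}{n-d_u}\Bigr\}
=1+O\!\left(\sqrt{(\ln n)/n}\right)=1+o(1)
$$
by (P1), and similarly for $d_v$, so the binomial factor contributes $1+o(1)$.

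For the $\mu$-factor, $\mu'-\mu=\pm2/(n(n-1))$, and since $\mu-1/2=O(\sqrt{(\ln n)/n})$, the derivative
$\frac{d}{d\mu}\bigl[\mu\ln\mu+(1-\mu)\ln(1-\mu)\bigr]=\ln\frac{\mu}{1-\mu}=O(\sqrt{(\ln n)/n})$
is small. Taylor expanding and multiplying by the exponent $n(n-1)/2$ yields
$$
\left(\frac{\mu'^{\mu'}(1-\mu')^{1-\mu'}}{\mu^{\mu}(1-\mu)^{1-\mu}}\right)^{n(n-1)/2}
=\exp\!\left(O(\sqrt{(\ln n)/n})\right)=1+o(1).
$$
For the $\gamma$-factor, using $\sigma^2=\sum(d_i-d)^2$ one computes that changing two coordinates by $\pm1$ gives $\sigma'^2-\sigma^2=\pm2(d_u+d_v-2d)+O(1)=O(\sqrt{n\ln n})$, hence $\gamma'-\gamma=O(\sqrt{\ln n}/n^{3/2})$; combined with $\gamma=O((\ln n)/n)$ and $\mu^2(1-\mu)^2=\tfrac1{16}(1+o(1))$, this makes $\gamma^2/(4\mu^2(1-\mu)^2)$ change by $o(1)$, and the exponential prefactor ratio is $1+o(1)$.

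The main obstacle is only bookkeeping: the exponent $n(n-1)/2$ is huge, so one has to verify that the $O(1/n^2)$ change in $\mu$ and the $O(\sqrt{\ln n}/n^{3/2})$ change in $\gamma$ really do survive this amplification as $o(1)$. The bounds (P1) on the degree range, which are in turn guaranteed because $G\in\Gamma(\cA)$, provide exactly the quantitative control needed; no other properties (P2)--(P4) are used at this stage.
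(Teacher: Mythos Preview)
Your proof is correct and follows exactly the paper's approach: apply the Liebenau--Wormald formula to both $\seq d$ and $\seq d'$ and verify that each factor changes by $1+o(1)$, relying only on (P1). One harmless slip: $\gamma=(n-1)^{-2}\sum_i(d_i-d)^2=O(\ln n)$, not $O((\ln n)/n)$ (there are $n$ summands each of size $O(n\ln n)$); this does not affect your conclusion since $(\gamma+\gamma')(\gamma-\gamma')=O(\ln n)\cdot O(\sqrt{\ln n}/n^{3/2})=o(1)$ anyway.
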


\begin{proof}
The claim is a direct consequence of Property (P1) and Theorem \ref{th:counting_graps_deg_sequence}.
Indeed, let $f(\mu)=\ln\of{\mu^{\mu}(1-\mu)^{1-\mu}}$ and note that $f'(\mu)=\ln(\mu/(1-\mu))$. 
Let $\gamma$ and $\mu$ be the parameters of $\seq d$ as in Theorem \ref{th:counting_graps_deg_sequence}
and $\gamma'$ and $\mu'$ be their values for $\seq d'$. Property (P1) implies that $\gamma=O(\ln n)$, $\mu=1/2+o(1)$,
$|\gamma-\gamma'|=O(\ln^{1/2}n/n^{3/2})$, and $|\mu-\mu'|=O(1/n^2)$. By Theorem~\ref{th:counting_graps_deg_sequence}
and Property (P1), we obtain
\begin{multline*}
\frac{g(\seq d')}{g(\seq d)}=(1+o(1))\exp\ofsq{O((\gamma+\gamma')(\gamma-\gamma'))+(f(\mu')-f(\mu))n(n-1)/2}\\
=(1+o(1))\exp\ofsq{O(\ln^{3/2}n/n^{3/2})+O(f'(\mu)(\mu'-\mu))n^2}\\
=(1+o(1))\exp\ofsq{o(1)+O(\ln(\mu/(1-\mu)))}=(1+o(1))\exp\ofsq{o(1)}=1+o(1),
\end{multline*}
as claimed.
\end{proof}

\begin{definition}\label{def:good}
  Let $\seq{d}=(d_1,\ldots,d_n)$ be as in \refeq{dG} and $d=d_i$ for some $i$.
We call the degree~$d$ \emph{good} if $|d-n/2|\le b$ and
the following conditions are true for $y=d$:
\begin{enumerate}[(1)]
\item\label{item:1}
$n_y\ge\ln^{3/4}n-2$,
\item\label{item:2}
$|n_y-n_{y+1}|\le\frac{n_y}{\ln\ln n}+5$, and
\item\label{item:3} 
at least one of the following two conditions holds true:
\begin{eqnarray*}
    |n_y-n_{y-1}|&\le&\frac{n_y}{\ln\ln n}+5\text{ or}\\
    \text{both }|n_y-n_{y-1}|&\le&\frac{n_{y-1}}{\ln\ln n}+5\text{ and }n_{y-1}\ge\ln^{3/4}n-2.
\end{eqnarray*}
\end{enumerate}

Furthermore, a vertex $i\in[n]$ is called \emph{good} if its degree $d_i$ is good.
Finally, a vertex $i$ (resp.\ degree $d$) is \emph{bad} if $i$ (resp.~$d$) is not good.
\end{definition}

Definition \ref{def:good} is much motivated by the following fact.

\begin{claim}\label{cl:p}
Let $\seq d=\lds(G)$, $\seq d'=\lds(G')$, and $G'\in\Gamma(G)$.
Specifically, suppose that $G'$ is obtained by switching the adjacency of a vertex pair $\{i,j\}$ in $G$.
If the vertices $i$ and $j$ are good, then $p(\seq d')=(1+o(1))p(\seq d)$.  
\end{claim}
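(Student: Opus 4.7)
The plan is to compute the ratio $p(\seq d')/p(\seq d) = \prod_y n_y!/n'_y!$ explicitly and show that each non-trivial factor is $1+o(1)$, invoking the three conditions in Definition~\ref{def:good} to control the multiplicities $n_y$ near $y=d_i$ and $y=d_j$.

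First I would reduce to cases according to the type of modification (addition vs.\ deletion of an edge) and the relative position of $d_i$ and $d_j$. Switching the adjacency of $\{i,j\}$ changes only the two entries $d_i,d_j$ of $\lds(G)$ by $\pm 1$, so the multisets $\seq d$ and $\seq d'$ agree outside of at most four indices $y\in\{d_i,d_i\pm 1,d_j,d_j\pm 1\}$. In the generic subcase (addition with $|d_i-d_j|\ge 2$), setting $y=d_i,\,z=d_j$,
\[
\frac{p(\seq d')}{p(\seq d)}=\frac{n_y}{n_{y+1}+1}\cdot\frac{n_z}{n_{z+1}+1},
\]
and analogously for deletion with $y\pm 1$ replaced by $y\mp 1$. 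The collision subcases ($d_i=d_j$, $|d_i-d_j|=1$, or $|d_i-d_j|=2$) produce up to four such factors, some of which involve $n_{y+2}$ or $n_{y-2}$ or an extra shift by $2$ in a single multiplicity; every such ratio still has the shape $n_a(n_a-1)/((n_b+1)(n_b+2))$ or $n_a/(n_b+c)$ with $|a-b|\le 2$ and $c=O(1)$.

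Next I would show that every such factor is $1+o(1)$ by exploiting goodness. Condition (\ref{item:1}) of Definition~\ref{def:good} applied to $i$ (and $j$) gives $n_{d_i}\ge\ln^{3/4}n-2\to\infty$, so additive perturbations by $O(1)$ are absorbed into $1+o(1)$. Condition (\ref{item:2}) yields $n_{d_i+1}=(1+o(1))n_{d_i}$, handling the addition case. Condition (\ref{item:3}) yields $n_{d_i-1}=(1+o(1))n_{d_i}$ in either of its two alternative forms: if the first clause holds, we divide $|n_y-n_{y-1}|\le n_y/\ln\ln n+5$ by $n_y\to\infty$; if the second clause holds, $n_{y-1}\ge\ln^{3/4}n-2$ is itself large, so $|n_y-n_{y-1}|\le n_{y-1}/\ln\ln n+5$ again gives $n_y/n_{y-1}=1+o(1)$. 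This covers the deletion case. For indices at distance $2$ (which arise in the collision subcases), combining the bounds for $i$ and $j$ gives $n_{d_i\pm 2}=(1+o(1))n_{d_i}$ by chaining two such estimates.

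The main obstacle is bookkeeping: verifying that in each of the handful of collision subcases the contributing factors are precisely those controlled by the goodness of $i$ and $j$, rather than by multiplicities at some distant or untracked index. Once the cases are enumerated, each individual factor is reduced to one of the three forms above and Definition~\ref{def:good} delivers the estimate uniformly. Multiplying at most four factors that are each $1+o(1)$ concludes that $p(\seq d')=(1+o(1))p(\seq d)$.
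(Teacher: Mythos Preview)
Your proposal is correct and follows essentially the same approach as the paper: write $p(\seq d')/p(\seq d)$ as a product of a few ratios $n_a/(n_b+O(1))$ with $|a-b|\le 1$ (or $\le 2$ in the collision cases), then invoke Conditions (\ref{item:1})--(\ref{item:3}) of Definition~\ref{def:good} to show each factor is $1+o(1)$. You are in fact more careful than the paper, which writes the formula only for the generic case $d_i\neq d_j$ and dismisses $d_i=d_j$ as ``similar''; your explicit treatment of the collision subcases and the chaining argument for distance-$2$ indices is the right way to fill that in (though note that $|d_i-d_j|=2$ is not actually a collision---the four affected multiplicities are still distinct---so that subcase is already covered by the generic formula).
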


\begin{proof}
  We use the equality $p(\seq d)=n!/\prod_zn_z!$.
Let $x=d_i$ and $y=d_j$ be the degrees of $i$ and $j$ in $G$ and suppose that $x\ne y$. 
Assume that $i$ and $j$ are not adjacent in $G$, that is, $G'$ is obtained by adding the edge $\{i,j\}$.
We have
$$
\frac{p(\seq d)}{p(\seq d')}=\frac{(n_x-1)!(n_{x+1}+1)!(n_y-1)!(n_{y+1}+1)!}{n_x!n_{x+1}!n_y!n_{y+1}!}
=\frac{n_{x+1}+1}{n_x}\frac{n_{y+1}+1}{y_x}=1+o(1),
$$
where the last equality follows from Conditions (\ref{item:1}) and (\ref{item:2}) in Definition \ref{def:good}.
If $G'$ is obtained by removing the edge $\{i,j\}$, then
$$
\frac{p(\seq d)}{p(\seq d')}=\frac{n_{x-1}+1}{n_x}\frac{n_{y-1}+1}{y_x}=1+o(1),
$$
where we use Conditions (\ref{item:1}) and (\ref{item:3}). The case of $x=y$ is similar. 
\end{proof}

The following definition is a strengthening of Definition~\ref{def:good}.

\begin{definition}\label{def:very-good}
 We call a degree $d$ \emph{very good} if $|d-n/2|\le b-1$ and
Conditions (\ref{item:1})--(\ref{item:3}) in Definition \ref{def:good} are true not only for $y=d$
but also for $y=d-1$ and $y=d+1$.
Furthermore, a vertex $i$ is called \emph{very good} if its degree $d_i$ is very good.
\end{definition}

We make a simple observation for further references.

\begin{claim}\label{cl:very-good}
  If a degree $d$ is very good, then the degrees $d-1$, $d$, and $d+1$ are good.
\end{claim}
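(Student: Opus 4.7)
The plan is to simply unfold the two definitions and verify that the requirements for "good" at each of $d-1$, $d$, $d+1$ are already built into the hypothesis that $d$ is very good. There is no genuine obstacle here; the work is bookkeeping.

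First I would handle $d$ itself. Being very good requires $|d-n/2|\le b-1$, which in particular gives $|d-n/2|\le b$, and it requires Conditions (\ref{item:1})--(\ref{item:3}) of Definition~\ref{def:good} to hold at $y=d$. These are precisely the requirements for $d$ to be good.

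Next I would treat $d+1$ and $d-1$ symmetrically. For the range condition, the triangle inequality gives
\[
|(d\pm 1)-n/2|\le |d-n/2|+1\le (b-1)+1=b,
\]
so each of $d-1$ and $d+1$ lies within the interval required by Definition~\ref{def:good}. For Conditions (\ref{item:1})--(\ref{item:3}) at $y=d+1$ and at $y=d-1$, these are explicitly stipulated by the definition of very good, which demands that (\ref{item:1})--(\ref{item:3}) hold at all three values $y\in\{d-1,d,d+1\}$. Hence both $d+1$ and $d-1$ satisfy all clauses of Definition~\ref{def:good} and are good, completing the proof.

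The only point worth flagging is that the three conditions in Definition~\ref{def:good} at a given $y$ involve only $n_{y-1}$, $n_y$, and $n_{y+1}$, so checking them at $y=d\pm 1$ requires no information beyond what the very-good hypothesis supplies (namely, the same clauses already evaluated at the shifted argument). Thus the claim is immediate and may be recorded in a couple of lines in the paper.
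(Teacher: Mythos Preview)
Your proposal is correct and matches the paper's treatment: the paper records this claim as ``a simple observation for further references'' with no proof, precisely because it follows immediately from unfolding Definitions~\ref{def:good} and~\ref{def:very-good} as you do. Your bookkeeping (triangle inequality for the range condition, direct appeal to the very-good hypothesis for Conditions (\ref{item:1})--(\ref{item:3}) at the three shifted arguments) is exactly what is needed, and your closing remark about which $n_y$'s are involved is accurate but optional.
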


We also state a fact that will play an important role below.

\begin{claim}\label{cl:many-very-good}
If $G\in\Gamma(\cA)$, then all but $O(\sqrt n\ln^{1/4}n)$ vertices of $G$ are very good.
\end{claim}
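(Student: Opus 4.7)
The plan is to partition the vertex set of $G$ into four groups according to the value of $|d_i-n/2|$ and, in each group, use a different part of Lemma~\ref{lem:degrees_range} to bound the count of vertices that are not very good. Since $G\in\Gamma(\cA)$, $G$ differs from some graph $G_0\in\cA$ in at most one edge; consequently each degree changes by at most $1$ and each frequency $n_y$ by at most $2$, so the slack already written into conditions (P1)--(P4) is exactly what is needed for the degree sequence of $G$ to inherit those four properties.

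For the outer tail $|d_i-n/2|>b$, property (P2) gives at most $4\sqrt{n}\ln^{1/4}n+2$ vertices. The sliver $b-1<|d_i-n/2|\le b$ consists of at most four integer degrees, each with $n_y\le\ln^2 n+2$ by (P3)(ii) (since $c<b-1$ for large $n$), contributing $O(\ln^2 n)$ more vertices. In the annulus $c\le|d_i-n/2|\le b-1$, property (P4) says that all but at most $\sqrt{n}+2$ such vertices satisfy $n_y\ge\ln^{3/4}n-2$ and $|n_y-n_{y\pm1}|\le n_y/\ln\ln n+5$ for every $y\in\{d_i-1,d_i,d_i+1\}$, which is Conditions~(1), (2) and the first option of (3) in Definition~\ref{def:good}; these vertices are therefore very good. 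For the deep interior $|d_i-n/2|\le c-2$, both $|y-n/2|<c+1$ and $|y-1-n/2|<c+1$ hold for every $y\in\{d_i-1,d_i,d_i+1\}$, so two applications of (P3)(i), at $y$ and at $y-1$, yield Conditions~(1), (2) and the second option of (3), making every such vertex very good as well.

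The only region requiring separate attention is the boundary gap $c-2<|d_i-n/2|<c$ containing $O(1)$ integer degrees near $n/2\pm c$. The main obstacle is the asymmetry between $n_{y-1}$ and $n_{y+1}$ in Definition~\ref{def:good}: for $d_i=n/2-c+1$ the neighbour $y-1=n/2-c-1$ used in Condition~(3) falls just outside the range of (P3)(i), so neither option of~(3) can be derived directly. I resolve this by \emph{counting} such vertices rather than certifying them: (P3)(ii) applied at $y=n/2\pm c$ gives $n_{n/2\pm c}\le\ln^2 n+2$, and then the step bound from (P3)(i) at $y=n/2\pm c$ (valid since $c<c+1$) yields $n_{n/2\pm(c-1)}\le n_{n/2\pm c}(1+1/\ln\ln n)+5=O(\ln^2 n)$. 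Summing the four contributions $4\sqrt{n}\ln^{1/4}n+2$, $O(\ln^2 n)$, $\sqrt{n}+2$, and $O(\ln^2 n)$ produces the claimed bound $O(\sqrt{n}\ln^{1/4}n)$ on the number of vertices that fail to be very good.
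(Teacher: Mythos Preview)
Your proof is correct and follows essentially the same four-region decomposition as the paper (outer tail via (P2), sliver near $b$ via (P3)(ii), annulus via (P4), interior via (P3)(i)); your handling of the boundary near $c$ by bootstrapping (P3)(ii) with the step bound in (P3)(i) is in fact more explicit than the paper's one-line appeal. One minor slip: on the positive side of the gap, applying (P3)(i) \emph{at} $y=n/2+c$ controls $n_{y+1}$, not $n_{y-1}$; to bound $n_{n/2+c-1}$ you should apply (P3)(i) at $y=n/2+c-1$ and rearrange $n_{y}(1-1/\ln\ln n)\le n_{y+1}+5$, which still gives $O(\ln^2 n)$ as you claim.
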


\begin{proof}
Let $\seq d=\lds(G)$. According to Property (P2) and part (ii) of Property (P3),
the inequality $|d_i-n/2| > b-1$ can be true for at most $4\sqrt{n}\ln^{1/4}n+\ln^2n+4$ vertices $i$.
Consider the vertices $i$ with $|d_i-n/2| \le b-1$. If $n/2-c+1 < d_i < n/2+c$, then
Conditions (\ref{item:1})--(\ref{item:3}) of Definition \ref{def:good}
are true for all three values $y\in\{d_i-1,d_i,d_i+1\}$ by part (i) of Property (P3). 
Part (i) of Property (P3) ensures that $n/2-c < d_i \le n/2-c+1$ for at most $\ln^2n+2$ vertices $i$.
Thus, it remains to consider the vertices $i$ with $c \le |d_i-n/2| \le b-1$.
Property (P4) guarantees Conditions (\ref{item:1})--(\ref{item:3}) for the three values of $y$
for all but at most $\sqrt{n}+2$ of such vertices.
\end{proof}

Recall that our goal is to prove that the bound \refeq{GammaAGd} is true under the condition \refeq{dG}.
In fact, we will prove a stronger bound
\begin{equation}
  \label{eq:GammaGd}
|\Gamma(\cG(\seq d))|\le(1+o(1))n\ln n|\cG(\seq d)|  
\end{equation}
for the degree sequence $\seq d$ of a graph in $\Gamma(\cA)$.  
We split the set of the graphs covered by a graph $G$ in three parts, namely
$$
\Gamma(G)=\Gamma_+(G)\cup\Gamma_\pm(G)\cup\Gamma_-(G).
$$
Specifically, $\Gamma_+(G)$ consists of the graphs obtained from $G$ by switching the
adjacency relation between two good vertices $i$ and $j$.
The graphs in $\Gamma_\pm(G)$ are obtained by switching a pair $\{i,j\}$
such that $i$ is bad and $j$ is very good. Finally, the graphs in $\Gamma_-(G)$ are obtained
from $G$ by switching $\{i,j\}$ such that $i$ is bad and $j$ is not very good.
Extending this notation to sets of graphs, we have
\begin{equation}
  \label{eq:Gamma-split}
\Gamma(\cG(\seq d))=\Gamma_+(\cG(\seq d))\cup\Gamma_\pm(\cG(\seq d))\cup\Gamma_-(\cG(\seq d)).
\end{equation}
To prove \refeq{GammaGd}, we treat each of the three parts separately.

\begin{claim}\label{cl:+}
$|\Gamma_+(\cG(\seq d))|\le(1+o(1))n\ln n|\cG(\seq d)|$.
\end{claim}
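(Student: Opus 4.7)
The plan is to cover $\Gamma_+(\cG(\seq d))$ by a small number of sets of the form $\cG(\seq d')$ and bound each such set. Specifically, suppose $G'\in\Gamma_+(\cG(\seq d))$, so that $G'=G(i,j)$ for some $G\in\cG(\seq d)$ and some pair $\{i,j\}$ of vertices that are good in $G$. Writing $x=\deg_G(i)$ and $y=\deg_G(j)$ (both good degrees), the multiset $\seq d'=\lds(G')$ is obtained from $\seq d$ by replacing two entries equal to $x,y$ with entries equal to $x+1,y+1$ (if an edge was added) or $x-1,y-1$ (if an edge was deleted); the coincident case $x=y$ is handled in the obvious way. In particular, $\seq d'$ is determined solely by the sign of the switch and the unordered pair $\{x,y\}$ of good degrees, so $\Gamma_+(\cG(\seq d))\subseteq\bigcup_{\seq d'}\cG(\seq d')$, where the union ranges over all such $\seq d'$.

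Next I would count the admissible $\seq d'$. By Definition \ref{def:good} every good degree satisfies $|d-n/2|\le b$ with $b=\frac{1}{2}\sqrt{n\ln n\,(1-\zeta_1)}$, so the set of good degree values has size at most $2b+1$. The number of unordered pairs $\{x,y\}$ (with repetition allowed) drawn from this set is at most $\binom{2b+2}{2}$, and multiplying by $2$ for the choice of sign yields
$$
|\{\seq d'\}|\le 2\binom{2b+2}{2}=(2b+2)(2b+1)=n\ln n\,(1-\zeta_1)+O(\sqrt{n\ln n})\le(1+o(1))\,n\ln n,
$$
using $\zeta_1=o(1)$.

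For each admissible $\seq d'$ I would bound $|\cG(\seq d')|$ uniformly. Pick any witness $G\in\cG(\seq d)$ and any good pair $\{i,j\}$ producing $\seq d'$ by a switch; then Claim \ref{cl:g} gives $g(\seq d')=(1+o(1))g(\seq d)$ and Claim \ref{cl:p} gives $p(\seq d')=(1+o(1))p(\seq d)$, where the $o(1)$ terms depend only on the magnitudes of the changes and on the good-vertex bounds on $n_y,n_{y\pm1}$, and are therefore uniform in $\seq d'$. Invoking \refeq{pg} gives $|\cG(\seq d')|=p(\seq d')g(\seq d')=(1+o(1))\,|\cG(\seq d)|$ uniformly. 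Combining the two estimates,
$$
|\Gamma_+(\cG(\seq d))|\le\sum_{\seq d'}|\cG(\seq d')|\le(1+o(1))\,n\ln n\cdot|\cG(\seq d)|,
$$
which is the claim.

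The only subtle point is the uniformity of the $o(1)$ factors in Claims \ref{cl:g} and \ref{cl:p}. For Claim \ref{cl:g} uniformity is automatic because its estimate depends only on $\mu,\gamma$ and Property (P1), which apply equally to $\seq d'$. For Claim \ref{cl:p} uniformity follows from the fact that the goodness of $i$ and $j$ forces the relevant ratios $(n_{d_i\pm1}+1)/n_{d_i}$ and $(n_{d_j\pm1}+1)/n_{d_j}$ to be $1+o(1)$ with a bound that depends only on the quantities $n_y\ge\ln^{3/4}n-2$ and $|n_y-n_{y\pm1}|\le n_y/\ln\ln n+5$ appearing in Definition \ref{def:good}.
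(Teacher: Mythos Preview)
Your proof is correct and follows essentially the same approach as the paper's: cover $\Gamma_+(\cG(\seq d))$ by the classes $\cG(\seq d')$ indexed by the unordered pair of good degrees and the sign of the switch, count these classes as $(1+o(1))n\ln n$ using $|d-n/2|\le b$, and bound each $|\cG(\seq d')|$ via Claims~\ref{cl:g} and~\ref{cl:p} together with~\refeq{pg}. Your explicit discussion of the uniformity of the $o(1)$ terms is a worthwhile addition but does not change the argument.
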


\begin{proof}
  Let $U$ be an inclusion-maximal set of sequences $\seq s$ such that $\seq s=\lds(G)$
for some $G\in\Gamma_+(\cG(\seq d))$ and such that $\seq s\simeq\seq s'$ for no two
sequences $\seq s$ and $\seq s'$ in $U$. According to Definition \ref{def:good},
$\seq d$ contains no more than $2b+1$ good degrees. It follows that
$$
|U|\le2(2b+1)+2{2\lfloor b\rfloor+1\choose2}=(1+o(1))n\ln n.
$$
Here, the former term corresponds to switching a vertex pair $\{i,j\}$ in some $G$ with $\lds(G)=\seq d$
such that $d_i=d_j$, while the latter term corresponds to the case $d_i\ne d_j$.
The factor of 2 in each term corresponds to the two cases of adding and deleting $\{i,j\}$.

If $\seq d'\in U$, then $g(\seq d')=(1+o(1))g(\seq d)$ by Claim \ref{cl:g}
and $p(\seq d')=(1+o(1))p(\seq d)$ by Claim \ref{cl:p}. Therefore,
$|\cG(\seq d')|=(1+o(1))|\cG(\seq d)|$ by Equality \refeq{pg}. We conclude that
$$
|\Gamma_+(\cG(\seq d))|=|\bigcup_{\seq d'\in U}\cG(\seq d')|\le|U|\max_{\seq d'\in U}|\cG(\seq d')|
=(1+o(1))n\ln n|\cG(\seq d)|,
$$
as claimed.
\end{proof}

\begin{claim}\label{cl:-}
$|\Gamma_-(\cG(\seq d))|=o(n\ln n)|\cG(\seq d)|$.
\end{claim}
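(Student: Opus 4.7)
The plan is to bound $|\Gamma_-(G)|$ uniformly for every $G\in\cG(\seq d)$ and then sum trivially, since $|\Gamma_-(\cG(\seq d))|\le\sum_{G\in\cG(\seq d)}|\Gamma_-(G)|$. The key point is that in the definition of $\Gamma_-(G)$, both endpoints of the switched pair are \emph{restricted} to a very small set of vertices, so we get a much stronger bound than in Claim~\ref{cl:+}.

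First I would observe that a bad vertex is automatically not very good: Claim~\ref{cl:very-good} says that a very good degree is in particular good, so very good implies good at the vertex level as well, and taking contrapositives yields bad $\Rightarrow$ not very good. Hence in the definition of $\Gamma_-(G)$, both $i$ (bad) and $j$ (not very good) lie in the set $T(G)$ of vertices of $G$ that are not very good.

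Second, since goodness and very goodness of a vertex depend only on the multiset of degrees of $G$, which is the same for all $G\in\cG(\seq d)$ (they are permutations of $\seq d$), the cardinality $|T(G)|$ takes a common value $t(\seq d)$. By \refeq{dG} there exists at least one $G_0\in\Gamma(\cA)$ with $\lds(G_0)\simeq\seq d$, so applying Claim~\ref{cl:many-very-good} to $G_0$ yields $t(\seq d)=O(\sqrt n\,\ln^{1/4}n)$.

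Third, for each $G\in\cG(\seq d)$ the graphs in $\Gamma_-(G)$ are obtained by switching a pair $\{i,j\}\subset T(G)$, so
\[
|\Gamma_-(G)|\le\binom{|T(G)|}{2}=O(n\sqrt{\ln n}).
\]
Summing over $G\in\cG(\seq d)$ gives $|\Gamma_-(\cG(\seq d))|\le O(n\sqrt{\ln n})\cdot|\cG(\seq d)|=o(n\ln n)\cdot|\cG(\seq d)|$, which is the desired bound. There is no real obstacle here: this part of the decomposition \refeq{Gamma-split} is arranged specifically to make the bound essentially trivial, once one has invested in Claim~\ref{cl:many-very-good}. All the analytic work sits in the earlier counts of good versus very good vertices.
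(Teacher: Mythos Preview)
Your proof is correct and follows essentially the same approach as the paper: bound $|\Gamma_-(G)|$ for each $G\in\cG(\seq d)$ by the number of pairs with both endpoints among the not very good vertices, invoke Claim~\ref{cl:many-very-good} to get $O(n\ln^{1/2}n)$, and sum trivially. You are in fact slightly more careful than the paper in two places: you make the implication ``bad $\Rightarrow$ not very good'' explicit, and you justify why Claim~\ref{cl:many-very-good} (stated for $G\in\Gamma(\cA)$) applies to every $G\in\cG(\seq d)$ by noting that the count depends only on the degree multiset and appealing to~\refeq{dG}.
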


\begin{proof}
For $G\in\cG(\seq d)$, the number of graphs in $\Gamma_-(G)$ is bounded by the number of bad vertices
multiplied by the number of not very good vertices in $G$. 
It follows by Claim \ref{cl:many-very-good} that $|\Gamma_-(G)|=O(n\ln^{1/2}n)$.
\end{proof}

\begin{claim}\label{cl:pm}
$|\Gamma_\pm(\cG(\seq d))|=o(n\ln n)|\cG(\seq d)|$.
\end{claim}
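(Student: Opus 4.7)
The plan is to adapt the strategy of Claim~\ref{cl:+}. Let $U$ be a maximal collection of labeled degree sequences $\seq d'=\lds(G')$ arising from $G'\in\Gamma_\pm(\cG(\seq d))$, no two of which are permutations of one another. As in Claim~\ref{cl:+},
$$
|\Gamma_\pm(\cG(\seq d))|\le\sum_{\seq d'\in U}|\cG(\seq d')|=\sum_{\seq d'\in U}p(\seq d')\,g(\seq d'),
$$
and Claim~\ref{cl:g} lets us replace every $g(\seq d')$ by $(1+o(1))g(\seq d)$ uniformly. It therefore suffices to prove that $\sum_{\seq d'\in U}p(\seq d')=o(n\ln n)\,p(\seq d)$.

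Each $\seq d'\in U$ arises from $\seq d$ by flipping a pair $\{i,j\}$ in some $G\in\cG(\seq d)$ where $i$ is bad with degree $x=d_i$ and $j$ is very good with degree $y=d_j$; both degrees change by the same increment $\varepsilon\in\{+1,-1\}$. By Claim~\ref{cl:very-good} the three values $y-1,y,y+1$ are all good, and since $x$ is bad this forces $x\notin\{y-1,y,y+1\}$. Hence the four indices $x,x+\varepsilon,y,y+\varepsilon$ are pairwise distinct, and a direct multinomial computation gives
$$
\frac{p(\seq d')}{p(\seq d)}=\frac{n_x}{n_{x+\varepsilon}+1}\cdot\frac{n_y}{n_{y+\varepsilon}+1}.
$$
Conditions (2) and (3) of Definition~\ref{def:good} (valid at $y$ by Claim~\ref{cl:very-good}) yield $n_y/(n_{y+\varepsilon}+1)=1+o(1)$, while the bad coordinate will be controlled only by the crude bound $n_x/(n_{x+\varepsilon}+1)\le n_x$.

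Summing over the two choices of $\varepsilon$, over the at most $2b-1=O(\sqrt{n\ln n})$ very good degrees $y$, and over the bad degrees $x$, one obtains
$$
\sum_{\seq d'\in U}p(\seq d')\le 2(1+o(1))\,p(\seq d)\cdot O(\sqrt{n\ln n})\cdot\sum_{x\text{ bad}}n_x.
$$
The decisive identity is $\sum_{x\text{ bad}}n_x=\#\{\text{bad vertices in }G\}$. A very good vertex is good, so every bad vertex is not very good; Claim~\ref{cl:many-very-good} thus bounds this count by $O(\sqrt n\,\ln^{1/4}n)$. Hence $\sum_{\seq d'\in U}p(\seq d')=O(n\ln^{3/4}n)\,p(\seq d)=o(n\ln n)\,p(\seq d)$, completing the plan. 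The main subtlety is verifying the disjointness $\{x,x+\varepsilon\}\cap\{y,y+\varepsilon\}=\emptyset$ that produces the clean ratio formula above; the gain over the naive bound $|\Gamma_\pm(G)|\le n\cdot\#\{\text{bad}\}=O(n^{3/2}\ln^{1/4}n)$ per $G$ comes from averaging $n_x/(n_{x+\varepsilon}+1)$ against the vertex count rather than the degree count, which is precisely what the failure of Claim~\ref{cl:p} at the bad coordinate allows when coupled with the very good factor being $1+o(1)$.
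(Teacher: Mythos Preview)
Your proof is correct and in fact more streamlined than the paper's. The paper proceeds by partitioning $\cG(\seq d)$ into equivalence classes $\cG_C(\seq d)$ (fixing the positions carrying bad degrees), then for each class and each bad vertex $i$ it counts the labeled degree sequences produced in $\Gamma_\pm^i(\cG_C(\seq d))$ via a two-step ``choose then permute'' argument, obtaining $O(\sqrt{n\ln n})|\cG_C(\seq d)|$ per bad vertex and summing over $|\bv(C)|=O(\sqrt n\,\ln^{1/4}n)$ bad vertices. You bypass the class decomposition entirely: you enumerate the possible target multisets $\seq d'$ by the triple $(x,y,\varepsilon)$, write the exact ratio $p(\seq d')/p(\seq d)=\tfrac{n_x}{n_{x+\varepsilon}+1}\cdot\tfrac{n_y}{n_{y+\varepsilon}+1}$, control the very-good factor by $1+o(1)$, and then observe that $\sum_{x\text{ bad}}n_x$ is exactly the number of bad vertices, which Claim~\ref{cl:many-very-good} bounds. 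The disjointness $|x-y|\ge2$ you derive from Claim~\ref{cl:very-good} is precisely what makes the ratio formula clean; this is the only delicate point and you handle it correctly.

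Both routes land on the same $O(n\ln^{3/4}n)$ bound. Your argument is shorter and avoids the bookkeeping of the $\equiv_{\seq d}$-classes; the paper's version, on the other hand, localises the counting to a fixed configuration of bad vertices, which makes explicit why the bad positions contribute only through their cardinality rather than through any combinatorial interaction with the good ones.
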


\begin{proof}
  Let $\seq s\simeq\seq d$ and $\seq s'\simeq\seq d$. We write $\seq s\equiv_{\seq d}\seq s'$
if $s_i=s_i'$ for every bad vertex $i$. Let $C$ be an $\equiv_{\seq d}$-equivalence class
of sequences. We write $\bv(C)$ to denote the set of all $i$ such that $i$
(or, equivalently $s_i$) is bad in any sequence $\seq s\in C$.
Furthermore, $\cG_C(\seq d)$ is the set of all graphs $G$ such that $\lds(G)\in C$.
Since $\cG(\seq d)=\bigcup_C\cG_C(\seq d)$, where the union is over all $\equiv_{\seq d}$-equivalence 
classes $C$, it suffices to show that
\begin{equation}
  \label{eq:Gamma-pm}
|\Gamma_\pm(\cG_C(\seq d))|=o(n\ln n)|\cG_C(\seq d)|
\end{equation}
for every $C$. 

For $i\in\bv(C)$, we define $\Gamma_\pm^i(\cG_C(\seq d))$ to be
the set of all graphs obtained from a graph in $\cG_C(\seq d)$ by switching
a vertex pair $\{i,j\}$ for some very good $j$.
The labeled degree sequence of a graph in $\Gamma_\pm^i(\cG_C(\seq d))$ can be obtained
from $\seq d$ in the following way.
\begin{enumerate}
\item 
The good degrees of $\seq d$ are permuted, while the bad degrees stay fixed.
That is, only positions of $\seq d$ outside $\bv(C)$ are permuted. Let $\seq s$ be the resulting sequence.
This step corresponds to choosing a graph $G\in\cG_C(\seq d)$ with $\lds(G)=\seq s$.
\item 
Choose $j\notin\bv(C)$ such that $s_j$ is very good. The degrees $s_i$ and $s_j$ are both
either increased or decreased by 1, which corresponds to the cases of adding or deleting
the edge $\{i,j\}$ 
\end{enumerate}
Note that the order of Steps 1 nad 2 can be changed. More specifically, 
exactly the same set of sequences is obtained in this way:
\begin{enumerate}
\item[$1'$.]
A very good degree $d$ is chosen. Let $j\in[n]$ be the least integer such that $d_j=d$.
The degrees $d_i$ and $d_j$ are both either increased or decreased by~1.
\item[$2'$.]
The positions of the resulting sequence outside $\bv(C)$ are permuted.
\end{enumerate}
According to Definition \ref{def:very-good}, Step $1'$ can be done in at most
$2(2b-1)=O(\sqrt{n\ln n})$ ways. Suppose that the outcome of Step $1'$ is fixed.
By Claim \ref{cl:very-good}, both degrees $d-1$ and $d+1$ are good.
It follows that the number of sequences produced in Step $2'$ is equal to
$$
\frac{(n-|\bv(C)|)!}{
(n_d-1)!(n_{d+1}+1)!\prod\limits_{y\ne d,d+1}n_y!
}
$$
if $d=d_j$ is increased by 1 or to
$$
\frac{(n-|\bv(C)|)!}{
(n_d-1)!(n_{d-1}+1)!\prod\limits_{y\ne d,d-1}n_y!
}
$$
if $d$ is decreased by 1, where the products are over good degrees.
Once again using the fact that the degrees $d-1$ and $d+1$ are good,
quite similarly to Claim \ref{cl:p} we see that this number is
$$
(1+o(1))\frac{(n-|\bv(C)|)!}{
\prod_y n_y!
}
=(1+o(1))|C|,
$$
where the product in the left hand side is over all good degrees.
Taking into account Claim \ref{cl:g}, we conclude that 
$$
|\Gamma_\pm^i(\cG_C(\seq d))|=O(\sqrt{n\ln n})|C|g(\seq d)=O(\sqrt{n\ln n})|\cG_C(\seq d)|
$$
and, therefore, 
$$
|\Gamma_\pm(\cG_C(\seq d))|=O(|\bv(C)|\sqrt{n\ln n})|\cG_C(\seq d)|=O(n\ln^{3/4} n)|\cG_C(\seq d)|,
$$
where the last equality is due to Claim \ref{cl:many-very-good}.
Equality \refeq{Gamma-pm} follows.
\end{proof}
  
Bound \refeq{GammaGd} now follows readily from Claims \ref{cl:+}, \ref{cl:-}, and \ref{cl:pm}
by Equality \refeq{Gamma-split}. The proof of part 2 of Theorem \ref{thm:degrees} is complete.

\section{Conclusion and further questions}

Anti-stochastic properties of graphs studied in this paper are a natural concept in the
context of the research on network vulnerability \cite{Schaeffer21}.
Our focus on isomorphism-invariant properties (or, equivalently, on \emph{unlabeled} graphs)
is motivated by an observation that, in realistic scenarios, an adversary can only
be interested in forcing somehow meaningful, structured properties.
Theorem \ref{thm:unlab} determines the optimum probability of an anti-stochastic property in this setting,
and Theorem \ref{thm:degrees} concerns the even more contstrained scenario when an adversary aims at
an anti-stochastic property expressible solely in terms of vertex degrees.

There are several further questions naturally arising in this context.
We here consider the random graph model $G(n,p)$
in the case $p=1/2$. Not all tools in our analysis
can be directly applied to other edge probabilities $p=p(n)$. For example, the assumption $p=1/2$
is essentially used in our proof of the lower bound in Theorem \ref{thm:degrees}.
Other random graph models, especially those designed to describe real-life networks 
(e.g., \cite{BollobasR03,BollobasR03a,FlaxmanFV07}), would also be of considerable interest.

We resrict ourselves to the case of a limited adversary who is able to change the
adjacency relation just between a single pair of vertices. Consideration of other
perturbation types, like changing adjacencies of multiple pairs, vertex/edge deletions or insertions, etc.,
would be also well motivated (corresponding to various types of errors
studied in coding theory~\cite{Firer21}).

Another question concerns the descriptive complexity of anti-stochastic properties. 
As we mentioned in Remark \ref{rem:descr}, the
asymptotically optimal anti-stochastic property constructed in the proof of Theorem \ref{thm:unlab}
is expressible in the 2-variable infinitary logic $C^2_{\infty\omega}$.
On the other hand, it can be shown that no anti-stochastic property
is expressible in first-order logic \cite{Akhmejanova}. 
Curiously, in \cite{Akhmejanova} it is also shown that, for some sequence $(n_i)$, there exists an 
anti-stochastic property of $\mathbf{G}_{n_i}$ expressible in existential monadic second order logic.
Other positive and negative results
in this direction could be interesting as they would be a way to formally show
what kind of properties can be and what cannot be achieved by an adversary.

\subsection*{Acknowledgments}
We would like to thank Grigorii Kabatiansky for useful discussions of the construction of asymptotically optimal covering codes in~\cite{KabatyanskiiP88}. 

Oleg Verbitsky is supported by DFG grant KO 1053/8--2. Part of this work was done while Andrey Kupavskii and Maksim Zhukovskii were at St Petersburg State University, supported by Russian Science Foundation, project 22-11-00131.



\bibliographystyle{abbrv}
\bibliography{main.bib}

\appendix

\section*{Appendix}

\section{Proof of Lemma~\ref{lem:degrees_range}}

\hspace{17pt}1. The proof can be found in~\cite[Corollary 7]{BollobasDegrees}.\\

2. Let $X$ be the number of vertices $i\in[n]$ such that 
\begin{equation}
    \left|\xi_i-\frac{n}{2}\right|>\frac{1}{2}\sqrt{n\ln n\left(1-\zeta_1\right)}-1.
\label{app:1}
\end{equation}
By the de Moivre--Laplace limit theorem, the probability of event~(\ref{app:1}) (for a fixed $i$) equals $1-\Phi(\sqrt{\ln n(1-\zeta_1)}+O(n^{-1/2}))+\Phi(-\sqrt{\ln n(1-\zeta_1)}+O(n^{-1/2}))$, where $\Phi(x)=\int_{-\infty}^x\frac{1}{\sqrt{2\pi}}e^{-t^2/2}dt$. Since $1-\Phi(x)\sim\frac{1}{\sqrt{2\pi}x}e^{-x^2/2}$  as $x\to\infty$ (see, e.g.,~\cite{Feller}) and due to the symmetry of $\Phi$, we get that
\begin{align}
\probb{\left|\xi_i-\frac{n}{2}\right|>\frac{1}{2}\sqrt{n\ln n\left(1-\zeta_1\right)}+O(1)}
&=(1+o(1))\sqrt{\frac{2}{\pi\ln n}}e^{-\frac{\ln n(1-\zeta_1)}{2}}\notag \\
&=
(1+o(1))\sqrt{\frac{2}{\pi n\ln n}}e^{\frac{\ln(2\pi)}{2}+\frac{3\ln\ln n}{4}}\notag \\
&=(2+o(1))\frac{(\ln n)^{1/4}}{\sqrt{n}}.
\label{app:2}
\end{align}
Moreover, for fixed distinct $i,j\in[n]$, the probability that (\ref{app:1}) holds for $i$ and $j$ simultaneously is
\begin{multline*}
     p_{i,j}=\left(\probb{\left|\mathrm{Bin}(n-2,1/2)-\frac{n}{2}\right|>\frac{1}{2}\sqrt{n\ln n(1-\zeta_1)}-2}\right)^2\prob{i\sim j}\\
     +\left(\probb{\left|\mathrm{Bin}(n-2,1/2)-\frac{n}{2}\right|>\frac{1}{2}\sqrt{n\ln n(1-\zeta_1)}-1}\right)^2\prob{i\nsim j}\\=(4+o(1))\frac{\sqrt{\ln n}}{n}
\end{multline*}
due to (\ref{app:2}) and since $\prob{i\sim j}+\prob{i\nsim j}=1$. Therefore, ${\sf E}X=(2+o(1))\sqrt{n}(\ln n)^{1/4}$ and 
$$
\mathrm{Var}X={\sf E}X+{\sf E}X(X-1)-({\sf E}X)^2={\sf E}X+n(n-1)p_{12}-({\sf E}X)^2=o(({\sf E}X)^2).
$$
By Chebyshev's inequality,
$$
 \prob{X>4\sqrt{n}(\ln n)^{1/4}}\leq(1+o(1))\frac{\mathrm{Var}X}{4n\sqrt{\ln n}}\to 0.
$$

3. Let $|x-1/2|\leq\frac{2\ln n}{n}$ and let $\tilde\xi_1=\tilde\xi_1(x),\ldots,\tilde\xi_n=\tilde\xi_n(x)\sim\mathrm{Bin}(n-1,x)$ be independent. For a non-negative integer $y$, let $\tilde N_y:=\tilde N_y(x)=\sum_{i=1}^n I(\tilde\xi_i=y)$ be the number of random variables equal to $y$.  By the (local) de Moivre--Laplace limit theorem, for an integer $y$ such that $|y-n/2|\leq\frac{1}{2}\sqrt{n\ln n(1-\zeta_2)}$,
$$
 \prob{\tilde \xi_1=y}=\prob{\mathrm{Bin}(n-1,x)=y}=\sqrt{\frac{2}{\pi n}}e^{-2\frac{(y-n/2)^2}{n}}\left(1+O\left(\sqrt{\frac{\ln^3 n}{n}}\right)\right).
$$
Therefore, 
\begin{equation}
{\sf E}\tilde N_y=\sqrt{\frac{2n}{\pi}}e^{-2\frac{(y-n/2)^2}{n}}\left(1+O\left(\sqrt{\frac{\ln^3 n}{n}}\right)\right)\geq 2(\ln n)^{5/4}(1+o(1)).
\label{app:3}
\end{equation}
By the Chernoff bound (Lemma~\ref{Chernoff}),
\begin{align*} 
\probb{|\tilde N_y-{\sf E}\tilde N_y|\geq\frac{1}{3\ln\ln n}{\sf E}\tilde N_y}
&\leq\exp\left[-(1+o(1))\frac{{\sf E}\tilde N_y}{18\ln^2\ln n}\right]\\
&\leq
 \exp\left[-(1+o(1))\frac{(\ln n)^{5/4}}{9\ln^2\ln n}\right].
\end{align*}

Therefore, letting $\mathcal{Y}_n$ to be the set of all integers $y$ such that $|y-n/2|\leq\frac{1}{2}\sqrt{n\ln n(1-\zeta_2)}$, we get
$$
 \probb{\exists y\in\mathcal{Y}_n\quad
 \tilde N_y<(\ln n)^{5/4}}
 \leq \sqrt{n\ln n} \exp\left[-(1+o(1))\frac{(\ln n)^{5/4}}{9\ln^2\ln n}\right]\to 0.
$$
Moreover, due to (\ref{app:3}), for every $y$ in the range, ${\sf E}\tilde N_{y+1}={\sf E}\tilde N_y(1+O(\ln^{3/2} n/\sqrt{n}))$. Then,
\begin{align}
 \probb{\exists y\in\mathcal{Y}_n\quad
 |\tilde N_y-\tilde N_{y+1}|\geq\frac{\tilde N_y}{\ln\ln n}}
 &\leq \probb{\exists y\in\mathcal{Y}_n\quad
 |\tilde N_y-{\sf E}\tilde N_{y}|\geq\frac{{\sf E}\tilde N_y}{3\ln\ln n}}\notag \\
 &\leq \sqrt{n\ln n} \exp\left[-(1+o(1))\frac{(\ln n)^{5/4}}{9\ln^2\ln n}\right]\to 0.
\label{app:4}
\end{align}
It remains to transfer the result from independent random variables to $\xi_1,\ldots,\xi_n$ using Theorem~\ref{th:MW} in the same way as in Section~\ref{sec:th1_proof_finish_probQ_small}. Let us do that.

Consider a normal random variable $\mathbf{p}$ with mean $1/2$ and variance $\frac{1}{4n(n-1)}$, truncated to $(0,1)$. Let $f$ be the density of $\mathbf{p}$. Then, for $n$ large enough, \refeq{p_is_near_1/2} holds. Therefore, due to (\ref{sum_is_even}) and Theorem~\ref{th:MW}, we get
\begin{multline*} 
\probb{\exists y\in\mathcal{Y}_n\quad
 N_y<(\ln n)^{5/4}}\\
  =\cprobb{\exists y\in\mathcal{Y}_n\quad
 \tilde N_y(\mathbf{p})<(\ln n)^{5/4} }{ \sum_{i=1}^{n}\tilde\xi_i(x)\text{ is even} }+o(1)\\
  =\int_0^1\cprobb{\exists y\in\mathcal{Y}_n\quad
 \tilde N_y(x)<(\ln n)^{5/4} }{ \sum_{i=1}^{n}\tilde\xi_i(x)\text{ is even}}f(x)dx+o(1)\\
 \leq 2\max_{|x-1/2|\leq 2\ln n/n}\probb{\exists y\in\mathcal{Y}_n\quad
 \tilde N_y(x)<(\ln n)^{5/4}}+o(1)=o(1).
\end{multline*} 

In the same way, \refeq{p_is_near_1/2}, (\ref{sum_is_even}), (\ref{app:4}) and Theorem~\ref{th:MW} imply that  
$$
\probb{\exists y\in\mathcal{Y}_n\quad
 |N_y-N_{y+1}|\geq\frac{N_y}{\ln\ln n}}\to 0
$$ 
as needed.\\

In the same way, by the (local) de Moivre--Laplace limit theorem, for an integer $y$ such that $\frac{1}{2}\sqrt{n\ln n(1-\zeta_2)}-2\leq|y-n/2|\leq\frac{1}{2}\sqrt{n\ln n(1-\zeta_1)}$+2,
\begin{equation}
2(\ln n)^{3/4}(1+o(1))\leq {\sf E}\tilde N_y\leq 2(\ln n)^{5/4}(1+o(1)).
\label{app:3.1}
\end{equation}
By the Chernoff bound (Lemma~\ref{Chernoff}),
$$
\probb{\tilde N_y>\ln^2 n}\leq\exp\left[-(1+o(1))\frac{3\ln^2 n}{2}\right].
$$
Therefore, by the union bound whp there is no integer $y$ such that $\frac{1}{2}\sqrt{n\ln n(1-\zeta_2)}-2\leq|y-n/2|\leq\frac{1}{2}\sqrt{n\ln n(1-\zeta_1)}+2$ and $\tilde N_y>\ln^2 n$. In the same way as above, \refeq{p_is_near_1/2}, (\ref{sum_is_even}), (\ref{app:4}) and Theorem~\ref{th:MW} imply that the same is true for $N_y$.\\

4. Let $\mathcal{Z}_n$ be the set of all integers $y$ such that 
$$
 \frac{1}{2}\sqrt{n\ln n(1-\zeta_2)}\leq |y-n/2|\leq\frac{1}{2}\sqrt{n\ln n(1-\zeta_1)}.
$$
Note that, for every $y\in\mathcal{Z}_n$, by the (local) de Moivre--Laplace limit theorem,
$$
{\sf E}N_y=\sqrt{\frac{2n}{\pi}}e^{-2\frac{(y-n/2)^2}{n}}\left(1+O\left(\sqrt{\frac{\ln^3 n}{n}}\right)\right)\geq (2+o(1))(\ln n)^{3/4}
$$ 
which can be proven exactly in the same way as (\ref{app:3}). By Lemma~\ref{lem:degrees_range}.3, whp, for every $y\in\mathcal{Z}_n$, $N_y\leq (\ln n)^2$. Therefore, it is sufficient to prove that whp the number of $y\in\mathcal{Z}_n$ such that $|N_y-{\sf E}N_y|>\frac{{\sf E}N_y}{3\ln\ln n}$ is at most $\frac{\sqrt{n}}{5\ln^2 n}$. Let us denote the number of such `bad' $y$'s by $W$. Borrowing notations from the proof of Lemma~\ref{lem:degrees_range}.3, we get by the Chernoff bound
$$
\probb{|\tilde N_y-{\sf E}\tilde N_y|\geq\frac{1}{3\ln\ln n}{\sf E}\tilde N_y}
\leq
 \exp\left[-(1+o(1))\frac{(\ln n)^{3/4}}{9\ln^2\ln n}\right].
$$
Let $\tilde W$ be the number of $y\in\mathcal{Z}_n$ such that $|\tilde N_y-{\sf E}\tilde N_y|\geq\frac{1}{3\ln\ln n}{\sf E}\tilde N_y$. Then, for $n$ large enough, 
$$
{\sf E}\tilde W\leq\frac{|\mathcal{Z}_n|}{\ln^4 n}=o\left(\frac{\sqrt{n}}{\ln^4 n}\right).
$$
By Markov's inequality, $\probb{\tilde W>\frac{\sqrt{n}}{5\ln^2 n}}=o\left(\frac{1}{\ln^2n}\right)$.  Relations \refeq{p_is_near_1/2}, (\ref{sum_is_even}) and Theorem~\ref{th:MW} imply that the same is true for $W$.\\

5. In the same way as in the proof of Lemma~\ref{lem:degrees_range}.3, the de Moivre--Laplace limit theorem, the Chernoff bound, relations \refeq{p_is_near_1/2}, (\ref{sum_is_even}) and Theorem~\ref{th:MW} imply that, whp, for every integer $y$ such that $|y-n/2|\leq\frac{1}{2}\sqrt{n(\ln n-2\sqrt{\ln n})}+1$, 
$$
N_y\geq\left(\sqrt{\frac{2}{\pi}}+o(1)\right)e^{\sqrt{\ln n}}.
$$
It remains to note that whp, in $\rg$ and its complement (which has the same distribution as $\rg$), there are no 
\begin{itemize}
\item cliques of size at least $2\ln n$, 
\item bipartite (not necessarily induced) subgraphs with both parts of size at least $3\ln n$.
\end{itemize}
Both statements follow from the union bound. The proof of the first one appears e.g. in~\cite[Chapter XI]{Bollobas-b}. To show that the latter is true, let us bound from above the expected number of bipartite graphs with both parts of size $\lfloor 3\ln n\rfloor$ by
$$
 n^{6\ln n}\left(\frac{1}{2}\right)^{(9+o(1))\ln^2 n}=\exp\left[(6-9\ln 2+o(1))\ln^2 n\right]\to 0
$$
as needed.

\end{document}